\documentclass[12pt]{article}
\usepackage{hyperref}
\usepackage{graphicx}
\usepackage{curves}
\usepackage{amssymb,amsmath,xcolor,color,framed,epstopdf,amsthm}
\usepackage[english]{babel}
\usepackage{enumerate}
\usepackage{mathrsfs}
\usepackage{color}
\usepackage{empheq}
\usepackage{braket}
\usepackage{hyperref}
\usepackage{cancel}
\usepackage{subcaption}
 \usepackage{lipsum}
 \usepackage{mathtools}
 \usepackage{MnSymbol}
\usepackage{xcolor}
\usepackage{tikz}
\usetikzlibrary{shapes,snakes}
\usepackage{pgf}
\usepackage{pgflibraryarrows}
\usepackage{pgflibrarysnakes}
\usetikzlibrary{decorations.text}
\usepgfmodule{shapes}
\usetikzlibrary{decorations.pathmorphing}
\usetikzlibrary{decorations.markings}
\usetikzlibrary{patterns}
\usetikzlibrary{automata}
\usetikzlibrary{positioning}
\usepackage{pgfplots}
\usetikzlibrary{backgrounds}
\tikzset{partial ellipse/.style args={#1:#2:#3}{insert path={+ (#1:#3) arc (#1:#2:#3)} }}
\tikzset{->-/.style={decoration={ markings, mark=at position #1 with {\arrow{>}}},postaction={decorate}}}
\usepackage{tikz}
\usetikzlibrary{arrows.meta,decorations.markings}
\newtheorem{theorem}{Theorem}[section]
\newtheorem{lemma}[theorem]{Lemma}

\newtheorem{remark}[theorem]{Remark}

\newtheorem{RHP}{Riemann--Hilbert problem}

\newtheorem{thm}[theorem]{Theorem}

\theoremstyle{definition}
\newcommand{\R}{\mathbb{R}}
\newcommand{\Z}{\mathbb{Z}}
\newcommand{\C}{\mathbb{C}}

\newcommand{\be}{\begin{equation}}
\newcommand{\ee}{\end{equation}}
\newcommand{\bea}{\begin{eqnarray}}
\newcommand{\eea}{\end{eqnarray}}

\newtheorem{assumption}{Assumption}[section]
\newtheorem{proposition}{Proposition}[section]

\def\XXint#1#2#3{{\setbox0=\hbox{$#1{#2#3}{\int}$}
     \vcenter{\hbox{$#2#3$}}\kern-.5\wd0}}

\DeclareMathOperator*{\Res}{Res}
\def\d{{\rm d}}

\def\i{{\rm i}}

\def\Re{\operatorname{Re}}
\def\Im{\operatorname{Im}}
\def\exp{\operatorname{exp}}

\def\le{\left}
\def\ri{\right}

\tikzset{partial ellipse/.style args={#1:#2:#3}{insert path={+ (#1:#3) arc (#1:#2:#3)} }}
\tikzset{->-/.style={decoration={ markings, mark=at position #1 with {\arrow{>}}},postaction={decorate}}}
\tikzset{-<-/.style={decoration={ markings, mark=at position #1 with {\arrow{<}}},postaction={decorate}}}

\numberwithin{equation}{section}

\title{ \LARGE\bf  Long-time Asymptotics of a Full Camassa-Holm Soliton Gas}
\author{\hspace{0.6 cm}{}Dedi Yan$^{b}$, Xianguo Geng$^{a,b}$\footnote{\footnotesize
		Corresponding author.{\sl E-mail address}: xggeng@zzu.edu.cn}, Minxin Jia$^{b}$\\ 
		\leftline{\hspace{0.6 cm}{\small{\sl $^{a}$ School of Mathematics and Statistics, North China University of Water Resources }}}\\
		\leftline{\hspace{0.6 cm}{\small{\sl \quad and Electric Power, Zhengzhou, Henan 450011, People's Republic of China}}}\\
	\leftline{\hspace{0.6 cm}{\small{\sl $^{b}$ School of Mathematics and Statistics, Zhengzhou University, 100 Kexue Road, Zhengzhou, }}}\\
	\leftline{\hspace{0.6 cm}{\small{\sl \quad Henan 450001, People's Republic of China}}}}
\date{}

\begin{document}

\maketitle
\begin{abstract}
	We investigate the long-time asymptotics of a full soliton gas for the Camassa--Holm equation. The analysis starts from a pure-soliton Riemann--Hilbert (RH) problem with \(2N\) poles and two distinct types of residue conditions. We prove that, as \(N\to\infty\), this discrete RH problem converges to a limiting soliton gas RH problem whose jump matrix contains two nonzero reflection coefficients. In this sense, the limiting problem gives a full soliton gas model for the Camassa--Holm equation, in contrast to the previously studied half soliton gas models, whose jump matrices involve only one nonzero reflection coefficient.
	The limiting RH problem is analyzed by the Deift--Zhou nonlinear steepest descent method. The presence of two nonzero reflection coefficients requires two different types of triangular factorizations of the jump matrix and leads to a more delicate \(g\)-function mechanism. The main difficulty lies in the construction of suitable \(g\)-functions adapted to the Camassa--Holm phase, together with the precise control of their behavior near the distinguished point \(k=i/2\) and at infinity. Depending on the location of the spectral endpoints \(\eta_1\) and \(\eta_2\), different \(g\)-function mechanisms arise. In this paper, we focus on Case I and derive the long-time asymptotic formulas in three  elliptic-wave regions of the self-similar plane. In each region, the leading term is given by a  finite-gap elliptic function, while in the central region the first correction is of order \(\mathcal O(t^{-1/2})\) and involves parabolic cylinder functions.
\end{abstract}

\tableofcontents
\section{Introduction}
In this work, we investigate the long-time asymptotic behavior of a full soliton gas solutions for the Camassa--Holm equation
\begin{equation}\label{1.1}
	u_t-u_{txx}+2\omega u_x+3uu_x=2u_xu_{xx}+uu_{xxx},\qquad -\infty<x<\infty,\quad t>0,
\end{equation}
where \(\omega>0\). This equation was introduced as a model for the unidirectional propagation of shallow water waves over a flat bottom, with \(u(x,t)\) representing the elevation of the free surface and \(\omega\) being related to the critical shallow-water speed \cite{CH1993}. Over the past decades, the CH equation has attracted considerable attention and has been extensively studied from various perspectives~\cite{CH,Constantin0,Constantin1,Constantin7,Dai,Fan2024}.

The notion of a soliton gas, understood as an infinite statistical ensemble of interacting solitons, goes back to Zakharov's work on the Korteweg--de Vries equation \cite{Zak71}. In that setting, the main object was a rarefied gas consisting of widely separated solitons. Later, Dyachenko, Zakharov and Zakharov introduced a new model of soliton gas based on primitive potentials \cite{DZZ}. These potentials have the same spectral structure as periodic finite-gap potentials, but are in general neither periodic nor quasi-periodic. Although deterministic rather than random, they provide an effective model for integrable turbulence and soliton gases.

In the KdV case, primitive potentials are characterized by two positive H\"older-continuous functions on the spectral bands. More precisely, they are described through a vector Riemann--Hilbert(RH) problem for
\[
\Phi=(\Phi_1,\Phi_2)^{\top},
\]
which is normalized at infinity and satisfies the jump relations
\begin{align}
	\Phi_-(iz)=J(z)\Phi_+(iz),\qquad
	\Phi_-(-iz)=J(z)^{\top}\Phi_+(-iz),\qquad z\in(\eta_1,\eta_2),
\end{align}
where \(0<\eta_1<\eta_2\), and \(\Phi_\pm\) denote the boundary values on the corresponding contours. The jump matrix is given by
\begin{align}
	J(z)=\frac{1}{1+r_1(z)r_2(z)}
	\begin{pmatrix}
		1-r_1(z)r_2(z) & 2i\,r_1(z)e^{-2zx}\\
		2i\,r_2(z)e^{2zx} & 1-r_1(z)r_2(z)
	\end{pmatrix}.
\end{align}
The functions \(r_1\) and \(r_2\) play the role of reflection coefficients, and their time dependence is
\[
r_1(z;t)=r_1(z;0)e^{8z^3t},\qquad
r_2(z;t)=r_2(z;0)e^{-8z^3t}.
\]
Nabelek subsequently proved that all algebro-geometric finite-gap solutions of the KdV equation can be realized within this primitive-potential framework \cite{Nab}.

The large-scale asymptotic analysis of soliton gases has recently attracted considerable attention. Girotti et al. studied the large-space and long-time behavior of the KdV soliton gas in the case \(r_2\equiv0\), starting from the RH formulation of the \(N\)-soliton solution \cite{Girotti-1}. They also analyzed dense mKdV soliton gases and the interaction between such gases and a trial soliton \cite{Girotti-2}. Wang \emph{et al.} considered genus-two KdV soliton gases and the arbitrary-genus  defocusing nonlinear
Schr\"{o}dinger (NLS)  dark soliton gases \cite{BWYZ,Wang}. For the Camassa--Holm equation, large-space and long-time asymptotics of soliton gas solutions were obtained in \cite{GYJ}. Moreover, the authors also analyze the arbitrary odd genus mKdV soliton gas and the long-time asymptotics of focusing NLS soliton gas \cite{odd,fnls}.  Related developments include the large-\(x\) asymptotics of focusing NLS soliton gases with discrete spectrum on the imaginary axis, as well as mKdV soliton gases under nonzero boundary conditions \cite{HXF,Ling}. Recent progress has further revealed deep connections between inverse scattering theory on nontrivial backgrounds and soliton gas dynamics. Grava, Jenkins, Zhang and Zhang established an inverse scattering framework for the focusing NLS equation with elliptic background and showed its relevance to full soliton gas initial data \cite{GJZZ2}.

The purpose of the present paper is to develop a full soliton gas theory for the Camassa--Holm equation, where both two reflection coefficients are nonzero. This leads to a substantially more delicate RH problem. In particular, the usual Deift--Zhou nonlinear steepest descent method must be combined with a carefully designed \(g\)-function mechanism adapted to the CH phase
\[
\theta(k)=ky-\frac{2\omega k t}{1+4k^2}.
\]
Compared with the KdV case, this phase has a more complicated analytic structure, and the reconstruction formula requires precise control of the \(g\)-function both near \(k=i/2\) and as \(k\to\infty\).

Our analysis starts from a pure-soliton RH problem for the CH equation with \(2N\) poles and two distinct types of residue conditions. We prove that, as \(N\to\infty\), this discrete RH problem converges to a continuous soliton gas RH problem; see RH problem~\ref{RHP2}. Then we give a proof of existence and uniqueness for the solution of RH problem~\ref{RHP2}; see Proposition~\ref{prop:RHP2-solvability}. By introducing two types of triangle factorization, we  apply the Deift--Zhou method \cite{DeiftZ,spin,LiuGX,GengL}, together with suitable \(g\)-function mechanism \cite{DeiftItsZhou,BL,TM,Minakov2,KM2011,KdV,Boutet,Boutet3,NLS2021}, to obtain the long-time asymptotics of the resulting soliton gas solution.

A key feature of the analysis is that the behavior depends sensitively on the location of the band endpoints \(i\eta_1\) and \(i\eta_2\). We divide the parameter space into two cases and construct different \(g\)-functions in each case. For \(0<\eta_1<\eta_2<\frac12\), define
\[
M_j=M_j(\eta_1,\eta_2):=\int_0^{\eta_1}
s^{2j}\frac{\sqrt{\eta_1^2-s^2}}
{\left(\frac14-s^2\right)^2\sqrt{\eta_2^2-s^2}}\,\mathrm ds,
\qquad j=0,1,2,
\]
and
\[
N_\pm:=\frac{2}{\frac{1}{1-4\eta_2^2}-\frac{1}{1-4\eta_1^2}\pm1}.
\]
We classify the pair \((\eta_1,\eta_2)\) as follows:
\[
\text{\rm Case I:}\qquad 1-4\frac{M_1}{M_0}> N_+,\qquad 1-4\frac{M_2}{M_1}< N_-,
\]
and
\[
\text{\rm Case II:}\qquad 1-4\frac{M_1}{M_0}< N_+.
\] The main part of this paper is devoted to Case I. The Case II can be studied in a similar way. In Case I, the \((y,t)\)-plane is divided into three elliptic-wave regions,
\[
\xi\in(-\infty,\hat{\xi}),\qquad
\xi\in(\hat{\xi},0)\cup(0,\xi_*),\qquad
\xi\in(\xi_*,+\infty),
\quad \xi=\frac{y}{\omega t}.
\]
Equivalently, in the \((x,t)\)-plane, these regions are given by
\[
\frac{x}{\omega t}\in(-\infty,\Phi(\hat{\xi})),\qquad
\frac{x}{\omega t}\in\bigl(\Phi(\hat{\xi}),\Phi(0)\bigr)\cup\bigl(\Phi(0),\Phi(\xi_*)\bigr),\qquad
\frac{x}{\omega t}\in\bigl(\Phi(\xi_*),+\infty\bigr),
\]
where $\Phi(\xi)$ is given by \eqref{Phi}.
In all three regions, the leading term is described by a modulated finite-gap elliptic function; see the asymptotic formula \eqref{uy} and \eqref{u-asymp-explicit}. The subleading term, however, depends on the region. In particular, in the central region
$
\xi\in(\hat{\xi},0)\cup(0,\xi_*),
$
the correction is of order \(\mathcal O(t^{-1/2})\), and its coefficient is expressed in terms of parabolic cylinder functions; see \eqref{u-asymp-explicit}.

The paper is organized as follows. In Section~\ref{sec2}, we formulate a pure-soliton RH problem for the CH equation with two types of residue conditions. The corresponding discrete spectrum is supported on
$
i(-\eta_2,-\eta_1)\cup i(\eta_1,\eta_2).
$
By taking the continuum limit \(N\to+\infty\), we derive a new soliton-gas RH problem for the CH equation, whose jump matrix contains two nonzero reflection coefficients on each component of the contour. In Section~\ref{sec3}, we construct the \(g\)-functions needed for the nonlinear steepest descent analysis. In particular, we control their behavior near the distinguished point \(k=i/2\) and at infinity. According to the location of the endpoints \(\eta_1\) and \(\eta_2\), the parameter space is divided into two cases, and we prove that the corresponding \(g\)-functions satisfy the required sign distributions. Section~\ref{sec4} is devoted to Case I. By introducing suitable triangular factorizations of the jump matrix, we apply the Deift--Zhou nonlinear steepest descent method to obtain the long-time asymptotic formulas for the CH soliton gas in different space-time regions.

\section{Soliton gas as limit of $N$ solitons as $N\to +\infty$}
\label{sec2}
We begin with the Lax pair for the Camassa--Holm equation \cite{CH,Monvel1},
\begin{align}\label{Lax-O}
	&-\psi_{xx}+\frac{1}{4}\psi=\lambda m\psi,\\
	&\psi_t=-\left(\frac{1}{2\lambda}+u\right)\psi_x+\frac{1}{2}u_x\psi,
\end{align}
where
\[
m=u-u_{xx}+\omega,
\]
and the positive constant $\omega$ is related to the critical shallow-water wave speed. 
The  RH problem for the pure soliton solution of the CH equation \cite{CH} is described as follows:  
\begin{RHP}\label{RHP1}
	Find  a $1\times2$ vector-valued function $M(k;y,t)$ with the following properties
	\begin{enumerate}
		\item $M(k;y,t)$ is meromorphic in $\mathbb{C}$, with simple poles at $i\kappa_{j}, iz_j $ in $i\mathbb{R}_{+}$, and at the corresponding conjugate points $-i\kappa_{j}, -iz_j $ in $i\mathbb{R}_{-}$,\ ${j=1,2,\ldots,N},\   N\in \Z_+$, $0<\kappa_{j}, z_j<\frac 12$.
		
		\item $M(k;y,t)$ satisfies the residue conditions
		\begin{equation}
			\begin{aligned}\label{residue_soliton}
				\Res\limits_{k=i\kappa_{j}}{M}(k)&=\lim_{k\to i\kappa_{j}}{M}(k)\begin{pmatrix}0&0\\i\chi_{j}e^{2i\theta(k;y,t)}&0\\\end{pmatrix},\\
				\Res\limits_{k=iz_{j}}{M}(k)&=\lim_{k\to iz_{j}}{M}(k)\begin{pmatrix}0&i\mu_{j}e^{-2i\theta(k;y,t)}\\ 0&0\\\end{pmatrix},\\
				\Res\limits_{k=-i\kappa_j}{M}(k)&
				=\lim_{k\to-i\kappa_j}{M}(k)\begin{pmatrix}0&-i\chi_je^{-2i\theta(k;y,t)}\\0&0\end{pmatrix},\\
				\Res\limits_{k=-iz_j}{M}(k)&
				=\lim_{k\to-iz_j}{M}(k)\begin{pmatrix}0&0\\ -i\mu_je^{2i\theta(k;y,t)}&0\end{pmatrix},
			\end{aligned}
		\end{equation}
		where the phase function $\theta(k)=\omega tk(\xi-\frac{2}{1+4k^2}),\ \xi=\frac{y}{\omega t}$ and $\chi_{j}, \mu_j$ are nonzero, positive and real constant.
		\item $M(k;x,t)\rightarrow \begin{pmatrix}
			1&1
		\end{pmatrix} ,\  k\rightarrow\infty$.
		\item $M(k;x,t)$ satisfies the symmetry
		\[
		\ M(-k)=M(k)\sigma_1,\ \ \sigma_1=\begin{pmatrix}
			0&1\\1&0\end{pmatrix}\ .
		\]
	\end{enumerate}
\end{RHP}
The  potential $u(x,t)$ is determined from $M$ via
\begin{align}
\label{sol}
\frac{M_{ 1}(x,t;\frac{i}{2})}{M_{ 2}(x,t;\frac{i}{2})}=\mathrm{e}^{x-y},
M_{1}(x,t;k)M_{2}(x,t;k)=\sqrt{\frac{m}{\omega}}\left(1+\frac{2i}{\omega}u(x,t)\left(k-\frac{i}{2}\right)+O\left(k-\frac{i}{2}\right)^{2}\right),
\end{align}
where $M_1(k),M_2(k)$ are the first and second entry of the vector $M(k)$, respectively.
We now are interested in the limit as $N  \to + \infty$ under the additional assumptions:
\begin{assumption}\label{assumption1}
	As $N\to\infty$, we assume that:
	\begin{enumerate}
		\item
		The poles $\{i\kappa_j\}_{j=1}^N$ and $\{iz_j\}_{j=1}^N$
		are uniformly distributed and interlaced on $i[\eta_1,\eta_2]$, where
		$0<\eta_1<\eta_2<\frac{1}{2}$, namely
		\[
		\kappa_j=\eta_1+j\frac{\eta_2-\eta_1}{N},
		\qquad
		z_j=\eta_1+\left(j-\frac12\right)\frac{\eta_2-\eta_1}{N},
		\qquad j=1,\dots,N.
		\]
		
		\item
		The coefficients $\{i\chi_j\}_{j=1}^N$ and $\{i\mu_j\}_{j=1}^N$
		are purely imaginary and are discretizations of two given analytic functions:
		\begin{gather}
			i\chi_j
			=
			i\frac{\eta_2-\eta_1}{2N}r_1(i\kappa_j)
			\prod_{\substack{m=1\\ m\neq j}}^N
			\frac{\kappa_j-z_m}{\kappa_j-\kappa_m},
			\qquad j=1,\dots,N,
			\\
			i\mu_j
			=
			i\frac{\eta_2-\eta_1}{2N}\rho_1(iz_j)
			\prod_{\substack{m=1\\ m\neq j}}^N
			\frac{z_j-\kappa_m}{z_j-z_m},
			\qquad j=1,\dots,N,
		\end{gather}
		where $r_1(k)$ and $\rho_1(k)$ are analytic function in a neighborhood of the intervals
		$i[\eta_1,\eta_2]\cup i[-\eta_2,-\eta_1]$, satisfy
		\[
		r_1(k)\rho_1(k)< 1,
		\qquad
		r_1(-k)=r_1(k),
		\qquad
		\rho_1(-k)=\rho_1(k),
		\]
		and is further assumed to be a real valued positive
		and non-vanishing function of $k$ for $k\in i[\eta_1,\eta_2]$.
	\end{enumerate}
	\end{assumption}

We introduce two positively oriented closed contours
\(\Gamma_{1+}\subset\mathbb{C}_{+}\) and
\(\Gamma_{1-}\subset\mathbb{C}_{-}\). The contour
\(\Gamma_{1+}\) encloses the upper spectral band
\(i[\eta_1,\eta_2]\), while \(\Gamma_{1-}\) encloses the
reflected lower band \(i[-\eta_2,-\eta_1]\). These contours are
chosen sufficiently close to the corresponding bands and do not
intersect each other, see Fig~\ref{fig:Gamma-contours}.
\begin{figure}[htbp]
	\centering
\begin{tikzpicture}[>=stealth,line cap=round,line join=round]
	\usetikzlibrary{arrows.meta,decorations.markings}
	
	\draw[->] (-4,0) -- (4.5,0);
	
	\draw[
	postaction={decorate},
	decoration={markings,mark=at position 0.15 with {\arrow{stealth}}}
	]
	(0.45,1.8) arc[start angle=0,end angle=360,x radius=0.45,y radius=1.0];
	
	\draw[
	postaction={decorate},
	decoration={markings,mark=at position 0.5 with {\arrow{stealth}}}
	]
	(0,1) -- (0,2.6);
	
	\node[right] at (0.4,2.5) {$i\eta_2$};
	\node[right] at (0.4,1.0) {$i\eta_1$};
	
	\draw[
	postaction={decorate},
	decoration={markings,mark=at position 0.15 with {\arrow{stealth}}}
	]
	(0.45,-1.8) arc[start angle=0,end angle=360,x radius=0.45,y radius=1.0];
	
	\draw[
	postaction={decorate},
	decoration={markings,mark=at position 0.5 with {\arrow{stealth}}}
	]
	(0,-2.6) -- (0,-1);
	\node[left] at (-0.55,1.85) {$\Gamma_{1+}$};
	\node[left] at (-0.55,-1.85) {$\Gamma_{1-}$};
	\node[right] at (0.15,-1.0) {$-i\eta_1$};
	\node[right] at (0.15,-2.5) {$-i\eta_2$};
	
\end{tikzpicture}
\caption{The closed contours $\Gamma_{1+}$ and $\Gamma_{1-}$ enclosing the spectral bands $i[\eta_1,\eta_2]$ and $i[-\eta_2,-\eta_1]$, respectively.}
\label{fig:Gamma-contours}
\end{figure}
We first remove the poles by setting
\begin{equation}\label{Zdef}
	Z(k)=M(k)
	\begin{cases}
		\begin{pmatrix}
			1&0\\
			-r_1(k)e^{2i\theta(k)}\displaystyle\prod_{j=1}^N\frac{k-iz_j}{k-i\kappa_j}&1
		\end{pmatrix}
		\begin{pmatrix}
			1&\rho_1(k)e^{-2i\theta(k)}\displaystyle\prod_{j=1}^N\frac{k-i\kappa_j}{k-iz_j}\\
			0&1
		\end{pmatrix},
		& k \text{ inside }\Gamma_{1+},
		\\[4ex]
		\begin{pmatrix}
			1&-r_1(k)e^{-2i\theta(k)}\displaystyle\prod_{j=1}^N\frac{k+iz_j}{k+i\kappa_j}\\
			0&1
		\end{pmatrix}
		\begin{pmatrix}
			1&0\\
			\rho_1(k)e^{2i\theta(k)}\displaystyle\prod_{j=1}^N\frac{k+i\kappa_j}{k+iz_j}&1
		\end{pmatrix},
		& k \text{ inside }\Gamma_{1-},
		\\[3ex]
		I,
		& \text{otherwise},
	\end{cases}
\end{equation}
where \(I\) denotes the \(2\times2\) identity matrix.

By construction, all discrete poles of \(M\) are converted into jumps
on \(\Gamma_{1+}\cup\Gamma_{1-}\). Hence the \(1\times2\) row-vector
\(Z(k;x,t)\) is analytic away from these contours and satisfies
\begin{equation}\label{Zjump-discrete}
	Z_+(k)=Z_-(k)
	\begin{cases}
		\begin{pmatrix}
			1&0\\
			-r_1(k)e^{2i\theta(k)}\displaystyle\prod_{j=1}^N\frac{k-iz_j}{k-i\kappa_j}&1
		\end{pmatrix}
		\begin{pmatrix}
			1&\rho_1(k)e^{-2i\theta(k)}\displaystyle\prod_{j=1}^N\frac{k-i\kappa_j}{k-iz_j}\\
			0&1
		\end{pmatrix},
		& k\in \Gamma_{1+},
		\\[4ex]
		\begin{pmatrix}
			1&-r_1(k)e^{-2i\theta(k)}\displaystyle\prod_{j=1}^N\frac{k+iz_j}{k+i\kappa_j}\\
			0&1
		\end{pmatrix}
		\begin{pmatrix}
			1&0\\
			\rho_1(k)e^{2i\theta(k)}\displaystyle\prod_{j=1}^N\frac{k+i\kappa_j}{k+iz_j}&1
		\end{pmatrix},
		& k\in \Gamma_{1-}.
	\end{cases}
\end{equation}
Here and below, the boundary value \(Z_+(k)\) is taken from the left
side of the oriented contour, while \(Z_-(k)\) is taken from the right
side.

To pass to the soliton-gas limit, define
\begin{equation}\label{BNdef}
	B_N(k):=\prod_{j=1}^N\frac{k-iz_j}{k-i\kappa_j},
	\qquad
	\widetilde B_N(k):=\prod_{j=1}^N\frac{k+iz_j}{k+i\kappa_j}.
\end{equation}

\begin{proposition}\label{prop:BNlimit}
	Let
	\[
	\kappa_j=\eta_1+j\frac{\eta_2-\eta_1}{N},
	\qquad
	z_j=\eta_1+\left(j-\frac12\right)\frac{\eta_2-\eta_1}{N},
	\qquad j=1,\dots,N.
	\]
	Then, as \(N\to\infty\),
	\begin{gather}
		B_N(k)\longrightarrow
		\beta(k):=
		\left(\frac{k-i\eta_1}{k-i\eta_2}\right)^{1/2},
		\qquad
		k\in\mathbb{C}\setminus i[\eta_1,\eta_2],
		\label{limit1-new}
		\\
		\widetilde B_N(k)\longrightarrow
		\widetilde\beta(k):=
		\left(\frac{k+i\eta_1}{k+i\eta_2}\right)^{1/2},
		\qquad
		k\in\mathbb{C}\setminus i[-\eta_2,-\eta_1].
		\label{limit2-new}
	\end{gather}
	The convergence is uniform on compact subsets of the corresponding
	domains. The branches are fixed by the normalizations
	\[
	\beta(k)\to1,
	\qquad
	\widetilde\beta(k)\to1,
	\qquad
	k\to\infty.
	\]
\end{proposition}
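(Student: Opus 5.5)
Write $h=(\eta_2-\eta_1)/N$, so that $z_j=\kappa_j-h/2$. The plan is to take logarithms and recognize a Riemann sum for a logarithmic potential:
\[
\log B_N(k)=\sum_{j=1}^N\Bigl[\log(k-i\kappa_j+\tfrac{ih}{2})-\log(k-i\kappa_j)\Bigr].
\]
Fix a compact set $K\subset\mathbb{C}\setminus i[\eta_1,\eta_2]$ and let $\delta=\operatorname{dist}(K,i[\eta_1,\eta_2])>0$. For $N$ large, every segment joining $i\kappa_j$ to $iz_j$ lies in $i[\eta_1,\eta_2]$. Hence $\log\frac{k-iz_j}{k-i\kappa_j}$ is well defined with the principal branch, because $\bigl|\frac{k-iz_j}{k-i\kappa_j}-1\bigr|\le h/(2\delta)$.

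\textbf{Taylor expansion.} Expanding each term,
\[
\log\frac{k-iz_j}{k-i\kappa_j}=\log\Bigl(1+\frac{ih/2}{k-i\kappa_j}\Bigr)=\frac{ih}{2}\,\frac{1}{k-i\kappa_j}+O\!\left(\frac{h^2}{\delta^2}\right),
\]
with the error uniform in $j$ and in $k\in K$. Summing over $j$, the total error is $N\cdot O(h^2)=O(1/N)\to0$.

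\textbf{Riemann sum.} The main term is $\frac{i}{2}\sum_j h\,(k-i\kappa_j)^{-1}$, a right-endpoint Riemann sum for
\[
\frac{i}{2}\int_{\eta_1}^{\eta_2}\frac{ds}{k-is}.
\]
Since the integrand $s\mapsto (k-is)^{-1}$ has derivative bounded by $\delta^{-2}$, the Riemann-sum error is $O(h\,\delta^{-2})$, again uniform on $K$. Computing the integral gives
\[
\frac{i}{2}\int_{\eta_1}^{\eta_2}\frac{ds}{k-is}=\frac12\Bigl[-\log(k-is)\Bigr]_{\eta_1}^{\eta_2}=\frac12\log\frac{k-i\eta_1}{k-i\eta_2}.
\]
The logarithm of the quotient here is understood as the branch analytic on $\mathbb{C}\setminus i[\eta_1,\eta_2]$ that vanishes at infinity. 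Such a branch exists because $\frac{k-i\eta_1}{k-i\eta_2}$ is nonvanishing off the segment and has winding number zero around any loop in that domain.

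\textbf{Main obstacle: branch consistency.} The sum of principal logarithms and the integral must define the same branch, so that exponentiation yields $\beta$ with $\beta\to1$ at infinity. I would handle this as follows. Both $\sum_j\log\frac{k-iz_j}{k-i\kappa_j}$ and $\frac{i}{2}\int_{\eta_1}^{\eta_2}\frac{ds}{k-is}$ are single-valued analytic functions on $\mathbb{C}\setminus i[\eta_1,\eta_2]$ that tend to $0$ as $k\to\infty$; the integral is single-valued by definition. The expansion above is uniform on any closed set at positive distance from the segment, including neighborhoods of infinity. Therefore the difference converges to $0$ uniformly on $K$, and exponentiating gives $B_N\to\beta$ uniformly on $K$, with $\beta(k)=\exp\bigl(\tfrac{i}{2}\int_{\eta_1}^{\eta_2}\frac{ds}{k-is}\bigr)$ satisfying $\beta\to1$ and $\beta^2=\frac{k-i\eta_1}{k-i\eta_2}$.

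\textbf{The reflected band.} The result for $\widetilde B_N$ follows from the identity $\widetilde B_N(k)=B_N(-k)$ and $\widetilde\beta(k)=\beta(-k)$. Alternatively, one can repeat the same argument with $i\kappa_j,iz_j$ replaced by $-i\kappa_j,-iz_j$.
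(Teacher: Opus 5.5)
Your proposal is correct and follows the same route as the paper. You write $\log B_N$ as a sum of $\log\bigl(1+\frac{ih/2}{k-i\kappa_j}\bigr)$, Taylor-expand to get a Riemann sum for $\frac{i}{2}\int_{\eta_1}^{\eta_2}\frac{ds}{k-is}$ plus an $\mathcal O(Nh^2)=\mathcal O(1/N)$ error, and evaluate the integral as $\frac12\log\frac{k-i\eta_1}{k-i\eta_2}$. Your explicit $\delta$-dependent bounds and the reflection $\widetilde B_N(k)=B_N(-k)$ only make precise what the paper leaves implicit. The branch issue you flag as the main obstacle is automatic: $B_N=\exp\bigl(\sum_j\log(\cdot)\bigr)$ for any choice of logarithms, and $\exp\bigl(\frac{i}{2}\int_{\eta_1}^{\eta_2}\frac{ds}{k-is}\bigr)$ visibly squares to $\frac{k-i\eta_1}{k-i\eta_2}$ and tends to $1$ at infinity.
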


\begin{proof}
	Set
	\[
	\delta_N:=\frac{\eta_2-\eta_1}{N},
	\qquad
	\kappa_j=\eta_1+j\delta_N,
	\qquad
	z_j=\eta_1+\left(j-\frac12\right)\delta_N.
	\]
	Then
	\[
	B_N(k)=\prod_{j=1}^N
	\frac{k-i\left(\eta_1+(j-\frac12)\delta_N\right)}
	{k-i(\eta_1+j\delta_N)}.
	\]
	Therefore,
	\[
	\log B_N(k)
	=
	\sum_{j=1}^N
	\log\left(
	1+\frac{i\delta_N/2}{k-i(\eta_1+j\delta_N)}
	\right).
	\]
	For \(k\) in any compact subset of
	\(\mathbb{C}\setminus i[\eta_1,\eta_2]\), the expansion
	\(\log(1+w)=w+\mathcal{O}(w^2)\) gives, uniformly on such compact
	sets,
	\[
	\log B_N(k)
	=
	\frac12\sum_{j=1}^N
	\frac{i\delta_N}{k-i(\eta_1+j\delta_N)}
	+\mathcal{O}(N\delta_N^2).
	\]
	Since \(N\delta_N^2=(\eta_2-\eta_1)^2/N\to0\), the Riemann sum
	converges uniformly on compact subsets to
	\[
	\frac12\int_{\eta_1}^{\eta_2}
	\frac{i\,d\zeta}{k-i\zeta}.
	\]
	Hence
	\[
	\log B_N(k)\to
	\frac12\int_{\eta_1}^{\eta_2}
	\frac{i\,d\zeta}{k-i\zeta}
	=
	\frac12\log\left(\frac{k-i\eta_1}{k-i\eta_2}\right),
	\]
	which proves \eqref{limit1-new}. The proof of
	\eqref{limit2-new} follows in the same way.
\end{proof}

By Proposition~\ref{prop:BNlimit}, the jumps on
\(\Gamma_{1+}\cup\Gamma_{1-}\) admit a well-defined continuum limit.
Consequently, in the soliton-gas limit \(N\to\infty\),
\eqref{Zjump-discrete} becomes
\begin{equation}\label{Zjump-limit}
	Z_+(k)=Z_-(k)
	\begin{cases}
		\begin{pmatrix}
			1&0\\
			-r_1(k)e^{2i\theta(k)}\beta(k)&1
		\end{pmatrix}
		\begin{pmatrix}
			1&\rho_1(k)e^{-2i\theta(k)}\beta(k)^{-1}\\
			0&1
		\end{pmatrix},
		& k\in\Gamma_{1+},
		\\[4ex]
		\begin{pmatrix}
			1&-r_1(k)e^{-2i\theta(k)}\widetilde\beta(k)&\\
			0&1
		\end{pmatrix}
		\begin{pmatrix}
			1&0\\
			\rho_1(k)e^{2i\theta(k)}\widetilde\beta(k)^{-1}&1
		\end{pmatrix},
		& k\in\Gamma_{1-}.
	\end{cases}
\end{equation}
Next we eliminate the jumps on the  contours
\(\Gamma_{1+}\cup\Gamma_{1-}\). Define
\begin{equation}\label{Xdef}
	X(k)=Z(k)
	\begin{cases}
		\begin{pmatrix}
			1&-\rho_1(k)e^{-2i\theta(k)}\beta(k)^{-1}\\
			0&1
		\end{pmatrix}
		\begin{pmatrix}
			1&0\\
			r_1(k)e^{2i\theta(k)}\beta(k)&1
		\end{pmatrix},
		& k \text{ inside }\Gamma_{1+},
		\\[4ex]
		\begin{pmatrix}
			1&0\\
			-\rho_1(k)e^{2i\theta(k)}\widetilde\beta(k)^{-1}&1
		\end{pmatrix}
		\begin{pmatrix}
			1&r_1(k)e^{-2i\theta(k)}\widetilde\beta(k)\\
			0&1
		\end{pmatrix},
		& k \text{ inside }\Gamma_{1-},
		\\[3ex]
		I,
		& \text{otherwise}.
	\end{cases}
\end{equation}
By this transformation the jumps on \(\Gamma_{1+}\) and \(\Gamma_{1-}\)
are cancelled. Hence the only remaining jumps are supported on the
limiting spectral bands
$
i[\eta_1,\eta_2]\cup i[-\eta_2,-\eta_1].
$
With the branch choices fixed by
\[
\beta(k)=\left(\frac{k-i\eta_1}{k-i\eta_2}\right)^{1/2},
\qquad
\widetilde\beta(k)=\left(\frac{k+i\eta_1}{k+i\eta_2}\right)^{1/2},
\qquad
\beta(k),\widetilde\beta(k)\to1,\quad k\to\infty,
\]
one has
\begin{align}
	\beta_+(k)=-\beta_-(k),
	\qquad & k\in i(\eta_1,\eta_2),\\
	\widetilde\beta_+(k)=-\widetilde\beta_-(k),
	\qquad & k\in i(-\eta_2,-\eta_1),
\end{align}
where both cuts are oriented upward. We therefore arrive at the following
continuum RH problem associated with the Camassa--Holm
soliton gas.

\begin{RHP}\label{RHP2}
	Find a \(1\times2\) row-vector-valued function \(X(k;x,t)\) with the
	following properties.
	\begin{enumerate}
		\item
		\(X(k;x,t)\) is analytic for
		\[
		k\in\mathbb{C}\setminus
		\bigl(i[\eta_1,\eta_2]\cup i[-\eta_2,-\eta_1]\bigr).
		\]
		
		\item
		For \(k\in i(\eta_1,\eta_2)\cup i(-\eta_2,-\eta_1)\), the boundary
		values \(X_{\pm}(k)\) are taken from the left and right sides of the
		oriented cuts, respectively, and satisfy
		\begin{equation}\label{Xjump-factorized}
			X_+(k)=X_-(k)V_X(k).
		\end{equation}
		The jump matrix is given by the factorized form
		\begin{equation}\label{VX-factorized}
			V_X(k)=
			\begin{cases}
				\begin{pmatrix}
					1&0\\
					r_1(k)\beta_+(k)e^{2i\theta(k)}&1
				\end{pmatrix}
				\begin{pmatrix}
					1&-\frac{2\rho_1(k)}{\beta_+(k)}e^{-2i\theta(k)}\\
					0&1
				\end{pmatrix}
				\begin{pmatrix}
					1&0\\
					r_1(k)\beta_+(k)e^{2i\theta(k)}&1
				\end{pmatrix},
				& k\in i(\eta_1,\eta_2),
				\\[5ex]
				\begin{pmatrix}
					1&r_1(k)\widetilde\beta_+(k)e^{-2i\theta(k)}\\
					0&1
				\end{pmatrix}
				\begin{pmatrix}
					1&0\\
					-2\frac{\rho_1(k)}{\widetilde\beta_+(k)}e^{2i\theta(k)}&1
				\end{pmatrix}
				\begin{pmatrix}
					1&r_1(k)\widetilde\beta_+(k)e^{-2i\theta(k)}\\
					0&1
				\end{pmatrix},
				& k\in i(-\eta_2,-\eta_1).
			\end{cases}
		\end{equation}
		
		\item
		As \(k\to\infty\),
		\[
		X(k;x,t)=\begin{pmatrix}1&1\end{pmatrix}
		+\mathcal{O}(k^{-1}).
		\]
	\end{enumerate}
\end{RHP}

The symmetry of the condensed spectral data will be useful in what
follows. Since
\[
r_1(-k)=r_1(k),
\qquad
\rho_1(-k)=\rho_1(k),
\qquad
\widetilde\beta(k)=\beta(-k),
\]
and since the Camassa--Holm phase has the corresponding reflection
symmetry, the jump matrix satisfies
\[
V_X(-k)=\sigma_1V_X(k)\sigma_1.
\]
Consequently, the solution satisfies the symmetry
$
X(-k)=X(k)\sigma_1.
$
We now rewrite the jump matrix on the upper band in a real-positive
form. For \(k\in i(\eta_1,\eta_2)\), our branch convention gives
\[
\beta_+(k)\in -i\mathbb{R}_+.
\]
Hence we introduce the real functions \( r\) and \(\rho\) by
\begin{equation}\label{hat-rho-def}
	-i r(k)=r_1(k)\beta_+(k),
	\qquad
	i\frac{\rho(k)}{1+ r(k)\rho(k)}
	=\rho_1(k)\beta_+(k)^{-1},
	\qquad k\in i(\eta_1,\eta_2).
\end{equation}
Then \( r(k)>0\) and
\[
\frac{\rho(k)}
{1+ r(k)\rho(k)}>0,
\qquad
k\in i(\eta_1,\eta_2).
\]
Moreover, if the spectral densities satisfy the subcriticality condition
\[
0<r_1(k)\rho_1(k)<1,
\qquad k\in i(\eta_1,\eta_2),
\]
then \(\rho(k)>0\) on \(i(\eta_1,\eta_2)\).

In terms of \( r\) and \(\rho\), the jump matrix can be
rewritten as
\begin{equation}\label{factor1}
	V_X(k)=
	\begin{cases}
		\begin{pmatrix}
			1&0\\
			-i r(k)e^{2i\theta(k)}&1
		\end{pmatrix}
		\begin{pmatrix}
			1&-2i\dfrac{\rho(k)}
			{1+r(k)\rho(k)}e^{-2i\theta(k)}\\
			0&1
		\end{pmatrix}
		\begin{pmatrix}
			1&0\\
			-i r(k)e^{2i\theta(k)}&1
		\end{pmatrix},
		& k\in i(\eta_1,\eta_2),
		\\[5ex]
		\begin{pmatrix}
			1&i r(-k)e^{-2i\theta(k)}\\
			0&1
		\end{pmatrix}
		\begin{pmatrix}
			1&0\\
			2i\dfrac{\rho(-k)}
			{1+ r(-k)\rho(-k)}e^{2i\theta(k)}&1
		\end{pmatrix}
		\begin{pmatrix}
			1&ir(-k)e^{-2i\theta(k)}\\
			0&1
		\end{pmatrix},
		& k\in i(-\eta_2,-\eta_1).
	\end{cases}
\end{equation}
Multiplying the three factors in \eqref{factor1}, we obtain the
equivalent  representation
\begin{equation}\label{VX-expanded}
	V_X(k)=
	\begin{cases}
		\begin{pmatrix}
			\dfrac{1- r(k)\rho(k)}
			{1+ r(k)\rho(k)}
			&
			-\dfrac{2i\rho(k)}
			{1+ r(k)\rho(k)}e^{-2i\theta(k)}
			\\[2ex]
			-\dfrac{2i r(k)}
			{1+ r(k)\rho(k)}e^{2i\theta(k)}
			&
			\dfrac{1- r(k)\rho(k)}
			{1+ r(k)\rho(k)}
		\end{pmatrix},
		& k\in i(\eta_1,\eta_2),
		\\[5ex]
		\begin{pmatrix}
			\dfrac{1- r(-k)\rho(-k)}
			{1+ r(-k)\rho(-k)}
			&
			\dfrac{2i r(-k)}
			{1+ r(-k)\rho(-k)}e^{-2i\theta(k)}
			\\[2ex]
			\dfrac{2i\rho(-k)}
			{1+ r(-k)\rho(-k)}e^{2i\theta(k)}
			&
			\dfrac{1- r(-k)\rho(-k)}
			{1+ r(-k)\rho(-k)}
		\end{pmatrix},
		& k\in i(-\eta_2,-\eta_1).
	\end{cases}
\end{equation}
It is worth noting that jump matrices of this form also appear in Ref.~\cite{DZZ,GJZZ,GYW,Zhu}.

Since the above RH problem~\ref{RHP2} arises as the limit of a sequence of discrete problems, its solvability is not automatic. We give a proof of existence and uniqueness for the solution of RH problem~\ref{RHP2}.
\begin{proposition}\label{prop:RHP2-solvability}
	Since
	\[
	\rho(ip)e^{-2i\theta(ip)}>0,\qquad r(ip)e^{2i\theta(ip)}>0,\ p\in(\eta_1,\eta_2),
	\]
	then RH problem~\ref{RHP2} admits a unique solution.
\end{proposition}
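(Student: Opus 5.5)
The plan is to use Zhou's vanishing lemma: Fredholm theory plus a Schwarz-reflection positivity argument. Set $\Sigma=i[\eta_1,\eta_2]\cup i[-\eta_2,-\eta_1]$. First reformulate RH problem~\ref{RHP2} as a singular integral equation. Write $\mu=X_-$, $w_\pm$ from a factorization $V_X=(I-w_-)^{-1}(I+w_+)$, and use the Beals--Coifman operator $C_w f=C_+(fw_-)+C_-(fw_+)$ on $L^2(\Sigma)$. Since $V_X$ is continuous, has $\det V_X=1$, and equals a nontrivial constant-coefficient matrix only up to the endpoint behaviour, the operator $I-C_w$ is Fredholm of index zero. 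This holds because $V_X$ is a smooth perturbation, and the index vanishes by the standard homotopy $V_X^{(s)}$, $s\in[0,1]$, obtained by scaling $r,\rho$ by $s$, along which $\det=1$ persists. The endpoints $\pm i\eta_j$ need a little care, since $V_X$ does not tend to $I$ there. I would handle this either by working in a weighted $L^2$ space, or by first conjugating with a local parametrix so that the jump becomes continuous across the endpoints. An alternative is to replace $X$ by $X$ times a function with prescribed $|k\mp i\eta_j|^{\pm1/4}$ behaviour. Existence then reduces to uniqueness, that is, to showing that the homogeneous problem has only the trivial solution. Uniqueness of the vector problem itself follows in the standard way: the difference of two solutions solves the vanishing problem with $\mathcal O(k^{-1})$ at infinity.

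The core step is the vanishing lemma. Suppose $X_0$ is a $1\times2$ solution with the jump $V_X$ on $\Sigma$, with $X_0=\mathcal O(k^{-1})$ at infinity, and with at most $|k\mp i\eta_j|^{-1/4}$ singularities at the endpoints. Define
\[
H(k)=X_0(k)\,\overline{X_0(-\bar k)}^{\,\top},
\]
or a variant using the reflection $k\mapsto\bar k$ combined with $\sigma_1$, chosen so that the cuts are mapped to themselves. Since $-\bar k=k$ on $i\mathbb R$, the function $H$ is analytic off $\Sigma$, is $\mathcal O(k^{-2})$, and is integrable at the endpoints. Cauchy's theorem applied to a contour wrapping $\Sigma$ then gives $\int_\Sigma (H_+-H_-)\,dk=0$. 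The aim is to write $H_+-H_-$ as a quadratic form
\[
X_{0-}\bigl(V_X-\overline{V_X}^{\,\top\,*}\bigr)\overline{X_{0-}}^{\,\top}
\]
whose Hermitian part is definite, using the relation between $V_X$ on $i(\eta_1,\eta_2)$ and its image on the reflected side, together with orientation.

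Here the hypothesis enters. On $k=ip$ the phase $2i\theta$ is real, so the off-diagonal entries of $V_X$ in \eqref{VX-expanded} are $-2i$ times the positive quantities $\rho e^{-2i\theta}/(1+r\rho)$ and $re^{2i\theta}/(1+r\rho)$. After symmetrizing with a diagonal conjugation by $\mathrm{diag}\bigl((\rho e^{-2i\theta})^{1/2},(re^{2i\theta})^{1/2}\bigr)$, which is legitimate precisely because both products are positive, the matrix $V_X+V_X^{\dagger}$ becomes $\frac{2(1-r\rho)}{1+r\rho}I$. The skew part is killed by taking real parts. One ends up with
\[
\int_{\eta_1}^{\eta_2}\frac{1}{1+r\rho}\Bigl(\rho e^{-2i\theta}|X_{0,1-}|^2+re^{2i\theta}|X_{0,2-}|^2\Bigr)\,dp=0
\]
or an equivalent identity. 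The lower band contributes the same by the symmetry $X_0(-k)=X_0(k)\sigma_1$, which is inherited after symmetrizing $X_0$. This forces $X_{0-}\equiv0$ on the upper band, hence $X_{0+}=X_{0-}V_X\equiv0$, and by symmetry $X_{0\pm}\equiv0$ on the lower band as well. Then $X_0\equiv0$ by analyticity and Morera's theorem.

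I expect the main obstacle to be choosing the correct reflection $H$ so that the boundary form is genuinely sign-definite. A naive choice yields the indefinite coefficient $1-r\rho$ rather than the positive weights above. My first attempt would be to work directly with the factorized form \eqref{factor1}: split off the outer lower-triangular factors by defining $\widetilde X_0$ piecewise in lens regions on either side of the cut. This reduces the jump to the single upper-triangular matrix with entry $-2i\rho e^{-2i\theta}/(1+r\rho)$. For that jump, the choice $H=\widetilde X_0\overline{\widetilde X_0(\bar k)}^{\,\top}$ is the classical case where $V+V^\dagger$ restricted to the relevant component is positive, giving $\int \rho e^{-2i\theta}(1+r\rho)^{-1}|\widetilde X_{0,2}|^2\,dp=0$. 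Positivity of $re^{2i\theta}$ is then needed to control the lens factors and the lower band via symmetry. The second, more routine obstacle is justifying the Fredholm setting despite the endpoint singularities.
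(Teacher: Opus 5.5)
Your overall strategy (Fredholm index zero plus a vanishing lemma) is a legitimate alternative to the paper's route. The paper instead uses the ansatz $X=(\chi(k),\chi(-k))$, a Cauchy-integral representation and a coercivity estimate. However, your proposal has genuine gaps.

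\textbf{The positivity step is not produced by the mechanism you describe.} This step carries the whole argument. Write $A=\rho e^{-2i\theta}$ and $B=re^{2i\theta}$ on $k=ip$.
\begin{itemize}
\item With $H(k)=X_0(k)\overline{X_0(-\bar k)}^{T}$ and a contour wrapping $\Sigma$, the boundary form is $X_{0-}(V_X-V_X^{\dagger})X_{0-}^{\dagger}$. On $i(\eta_1,\eta_2)$ one has $i(V_X-V_X^{\dagger})=\frac{2(A+B)}{1+AB}\sigma_1$, which is indefinite.
\item The diagonal conjugation is not induced by any function analytic off $\Sigma$ and normalized at infinity, so it is not an admissible RH step. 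Even formally it gives the weight $\frac{1-r\rho}{1+r\rho}$, which has no sign: $r\rho=\frac{r_1\rho_1}{1-r_1\rho_1}>1$ as soon as $r_1\rho_1>\frac12$.
\item The weighted identity you write down is essentially the right one, but it does not follow from these steps. The lens variant is left unchecked.
\end{itemize}
The missing idea is a contour that \emph{separates} the two bands. Wrapping all of $\Sigma$ is useless: for $H(k)=X_0(k)\overline{X_0(\bar k)}^{T}$ one has $H(\bar k)=\overline{H(k)}$, so the two bands contribute with opposite signs and you get $0=0$. Instead, apply Cauchy's theorem to this $H$ in the upper half-plane minus $i[\eta_1,\eta_2]$. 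For symmetric $X_0$ one has $X_{0\pm}(-ip)=X_{0\mp}(ip)\sigma_1$. Hence $H_+-H_-=X_{0-}\bigl(V_X\sigma_1-\sigma_1V_X^{\dagger}\bigr)X_{0-}^{\dagger}$, with $V_X\sigma_1-\sigma_1V_X^{\dagger}=-\frac{4i}{1+AB}\,\mathrm{diag}(A,B)$. This gives
\[
\int_{\mathbb R}|X_0(k)|^2\,dk+4\int_{\eta_1}^{\eta_2}\frac{A\,|X_{0,1-}(ip)|^2+B\,|X_{0,2-}(ip)|^2}{1+AB}\,dp=0 .
\]
Here positivity of both $A$ and $B$ is used, and no condition on $r\rho$ is needed. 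This is exactly the paper's coercivity identity in analytic form. By Plancherel, the real-axis term is, up to normalization constants, $\langle\varpi_1,\mathcal C\varpi_1\rangle+\langle\varpi_2,\mathcal C\varpi_2\rangle$. The band term is likewise $\|\mathcal U_1\|_2^2+\|\mathcal U_2\|_2^2$.

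\textbf{The reduction to symmetric $X_0$ is not harmless.} The Fredholm alternative on the full space needs the whole kernel to vanish, but the identity above uses $X_0(-k)=X_0(k)\sigma_1$. For an antisymmetric homogeneous solution, $X_0(-k)=-X_0(k)\sigma_1$, the band term flips sign (effectively $(r,\rho)\mapsto(-r,-\rho)$) and nothing follows. ``Symmetrizing $X_0$'' simply discards this part, and it need not vanish. Already for one pair of poles $\pm i\kappa$ with the $\chi$-type residue conditions of RH problem~\ref{RHP1}, the vector $a\bigl((k-i\kappa)^{-1},(k+i\kappa)^{-1}\bigr)$ is a nontrivial antisymmetric homogeneous solution whenever $\chi e^{2i\theta(i\kappa)}=2\kappa$.

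So the Fredholm/vanishing argument must be run on the symmetric subspace. This is compatible with the jump because $V_X(-k)=\sigma_1V_X(k)^{-1}\sigma_1$, since $k\mapsto-k$ reverses the band orientation. Your scaling homotopy preserves this relation, so the index on that subspace is still zero. The result is existence and uniqueness among symmetric solutions, which is precisely the class the paper's ansatz $X=(\chi(k),\chi(-k))$ covers.

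\textbf{Your endpoint class is too small.} Since $\beta_+$ vanishes at $i\eta_1$ and blows up at $i\eta_2$, one has $\rho\sim|k-i\eta_1|^{-1/2}$ and $r\sim|k-i\eta_2|^{-1/2}$. Thus $V_X$ itself is unbounded at the endpoints. The actual solution has $|k\mp i\eta_j|^{-1/2}$ singularities in one component, not $-1/4$. The identity above still closes in that class, because the singular component of $X_0$ is always paired with whichever of $A,B$ vanishes there.
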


\begin{proof}
	By the symmetry
	\[
	X(-k)=X(k)\sigma_1,
	\]
	it is enough to determine one scalar function \(\chi\) such that
	\[
	X(k)=\bigl(\chi(k),\chi(-k)\bigr).
	\]
	Following the argument in \cite{DZZ}, we use the Cauchy representation
	\[
	\chi(k)=1+i\int_{\eta_1}^{\eta_2}\frac{\varpi_1(s)}{k-is}\,ds
	+i\int_{\eta_1}^{\eta_2}\frac{\varpi_2(s)}{k+is}\,ds.
	\]
	By the Plemelj formula, the jump condition of RHP~\ref{RHP2} is equivalent to the following singular integral system:
	\begin{align}
		\varpi_1(p)+a(p)\left(\int_{\eta_1}^{\eta_2}\frac{\varpi_1(q)}{p+q}\,dq+
		\operatorname{p.v.}\int_{\eta_1}^{\eta_2}\frac{\varpi_2(q)}{p-q}\,dq\right)&=a(p),\label{eq:DZZ-system-1}\\
		\varpi_2(p)+b(p)\left(\operatorname{p.v.}\int_{\eta_1}^{\eta_2}\frac{\varpi_1(q)}{p-q}\,dq+
		\int_{\eta_1}^{\eta_2}\frac{\varpi_2(q)}{p+q}\,dq\right)&=-b(p),\label{eq:DZZ-system-2}
	\end{align}
	where
	\[
	a(p)=r(ip)e^{2i\theta(ip)}>0,\qquad
	b(p)=\rho(ip)e^{-2i\theta(ip)}>0,\qquad p\in(\eta_1,\eta_2).
	\]
	We prove that \eqref{eq:DZZ-system-1}--\eqref{eq:DZZ-system-2} is uniquely
	solvable. Define
	\[
	(\mathcal Ch)(p):=\int_{\eta_1}^{\eta_2}\frac{h(q)}{p+q}\,dq,\qquad
	(\mathcal Hh)(p):=\operatorname{p.v.}\int_{\eta_1}^{\eta_2}\frac{h(q)}{p-q}\,dq.
	\]
	The operator \(\mathcal C\) is nonnegative. Indeed,
	\[
	\frac1{p+q}=\int_0^\infty e^{-(p+q)y}\,dy,
	\]
	and therefore
	\[
	\langle h,\mathcal Ch\rangle
	=
	\int_0^\infty
	\left|\int_{\eta_1}^{\eta_2}h(p)e^{-py}\,dp\right|^2dy
	\ge0.
	\]
	On the other hand, \(\mathcal H\) is skew-adjoint, since
	$
	\frac1{p-q}=-\frac1{q-p}.
	$
	Now set
	\[
	\varpi_1=\sqrt a\,\mathcal U_1,\qquad \varpi_2=\sqrt b\,\mathcal U_2.
	\]
	The homogeneous version of \eqref{eq:DZZ-system-1}--\eqref{eq:DZZ-system-2} then becomes
	\begin{align}
		\mathcal U_1+\sqrt a\,\mathcal C(\sqrt a\,\mathcal U_1)
		+\sqrt a\,\mathcal H(\sqrt b\,\mathcal U_2)&=0,\label{eq:U-hom-1}\\
		\mathcal U_2+\sqrt b\,\mathcal H(\sqrt a\,\mathcal U_1)
		+\sqrt b\,\mathcal C(\sqrt b\,\mathcal U_2)&=0.\label{eq:U-hom-2}
	\end{align}
	Taking the \(L^2\)-inner product of \eqref{eq:U-hom-1} with \(\mathcal U_1\), the \(L^2\)-inner product of \eqref{eq:U-hom-2} with \(\mathcal U_2\), adding the two identities and taking real parts, the terms containing \(\mathcal H\) cancel. Hence
	\[
	\|\mathcal U_1\|_2^2+\|\mathcal U_2\|_2^2
	+\langle \sqrt a\,\mathcal U_1,\mathcal C(\sqrt a\,\mathcal U_1)\rangle
	+\langle \sqrt b\,\mathcal U_2,\mathcal C(\sqrt b\,\mathcal U_2)\rangle=0.
	\]
	Every term on the left-hand side is nonnegative. Therefore
	\[
	\mathcal U_1=\mathcal U_2=0.
	\]
	Thus the homogeneous system has only the trivial solution.
	
	Moreover, the same computation gives the coercive estimate
	\[
	\|\mathcal L U\|_2\ge \|U\|_2,\qquad U=(\mathcal U_1,\mathcal U_2)^T,
	\]
	where \(\mathcal L\) denotes the operator determined by the left-hand side of
	\eqref{eq:U-hom-1}--\eqref{eq:U-hom-2}. Hence \(\operatorname{Ran}\mathcal L\)
	is closed. Applying the same estimate to the adjoint	operator \(\mathcal L^*\), we obtain
	\[
	\ker\mathcal L^*=\{0\}.
	\]
	Therefore
	\[
	(\operatorname{Ran}\mathcal L)^\perp=\ker\mathcal L^*=\{0\},
	\]
	so \(\operatorname{Ran}\mathcal L\) is dense. Since it is also closed, we have
	\[
	\operatorname{Ran}\mathcal L=L^2(\eta_1,\eta_2)\oplus L^2(\eta_1,\eta_2).
	\]
	Thus \(\mathcal L\) is invertible, and the system
	\eqref{eq:DZZ-system-1}--\eqref{eq:DZZ-system-2} has a unique solution
	\[
	(\varpi_1,\varpi_2)\in L^2(\eta_1,\eta_2)\oplus L^2(\eta_1,\eta_2).
	\]
	
	Finally, the Cauchy representation of \(\chi\) gives a solution of
	RH problem~\ref{RHP2}. If two solutions existed, their difference would solve the homogeneous problem, hence would correspond to \(\varpi_1=\varpi_2=0\), and therefore would be identically zero. This proves both existence and uniqueness.
\end{proof}
As \(N\to\infty\), the quantity \(Z\), defined by \eqref{Zjump-discrete}, converges to the solution of  \eqref{Zjump-limit}. As a consequence, the corresponding \(N\)-soliton potential \(u(x,t)\) converges to the potential reconstructed from the solution of the soliton gas RH problem ~\ref{RHP2}. Finally, the Camassa-Holm soliton gas is recovered from the vector-valued function $X(k)$ by
\begin{equation}\label{u-recover}
\frac{X_{ 1}(\frac{i}{2})}{X_{ 2}(\frac{i}{2})}=\mathrm{e}^{x-y}, \quad
X_{1}(k) X_{2}(k)=\sqrt{\frac{m}{\omega}}\left(1+\frac{2i}{\omega}u(x,t)\left(k-\frac{i}{2}\right)+\mathcal{O}\left(k-\frac{i}{2}\right)^{2}\right).
\end{equation}
\section{Construction of the \(g\)-functions in Cases I and II}\label{sec3}
For \(0<\eta_1<\eta_2<\frac12\), define
\[
M_j=M_j(\eta_1,\eta_2):=\int_0^{\eta_1}
s^{2j}\frac{\sqrt{\eta_1^2-s^2}}
{\left(\frac14-s^2\right)^2\sqrt{\eta_2^2-s^2}}\,\mathrm ds,
\qquad j=0,1,2,
\]
and
\[
N_\pm:=\frac{2}{\frac{1}{1-4\eta_2^2}-\frac{1}{1-4\eta_1^2}\pm1}.
\]
We classify the pair \((\eta_1,\eta_2)\) as follows:
\[
\text{\rm Case I:}\qquad 1-4\frac{M_1}{M_0}> N_+,\qquad 1-4\frac{M_2}{M_1}< N_-,
\]
and
\[
\text{\rm Case II:}\qquad 1-4\frac{M_1}{M_0}< N_+.
\]
\subsection{The $g$-function for Case I}
We now consider Case I. In this regime we introduce a new scalar
\(g\)-function, denoted by \(\hat g\), in order to perform the nonlinear
steepest descent analysis for the RH problem for \(X(k)\).
The function \(\hat g\) is required to satisfy the following scalar
RH conditions:
\begin{enumerate}
	\item The jump relations
	\begin{align}
		&\hat g_+(k)+\hat g_-(k)=0,
		&& k\in i(\eta_1,\eta_2)\cup i(-\eta_2,-\eta_1),\label{7.4}\\
		&\hat g_+(k)-\hat g_-(k)=\hat\Omega,\label{7.5}
		&& k\in i[-\eta_1,\eta_1].
	\end{align}
	\item The normalization at infinity
	\begin{equation}\label{7.6}
		\hat g(k)-\left(\omega k\xi-\frac{2\omega k}{1+4k^2}\right)
		=\mathcal O(k^{-1}),\qquad k\to\infty .
	\end{equation}
	\item The regularity condition at \(k=i/2\):
	\begin{equation}\label{7.7}
		\lim_{k\to i/2}
		\left[
		\hat g(k)-\left(\omega k\xi-\frac{2\omega k}{1+4k^2}\right)
		\right]
		\quad\text{exists and is finite}.
	\end{equation}
\end{enumerate}

We construct \(\hat g\) by means of two auxiliary functions. The first one is
defined by
\begin{equation}\label{g1-def}
	g_1(k;\hat\xi)
	=
	\int_{i\eta_2}^{k}
	\frac{
		\omega\hat\xi(\zeta^2+\mu_0^2)(\zeta^2-k_1^2)
		\sqrt{\zeta^2+\eta_1^2}
	}{
		\left(\zeta^2+\frac14\right)^2
		\sqrt{\zeta^2+\eta_2^2}
	}
	\,d\zeta ,
\end{equation}
As explained in section 5.3 of \cite{GYJ}, for Case I, the parameters $\hat{\xi}$, $\mu_0$, and $k_1$ are the unique solution of the following system of equations.
 \begin{subequations}\label{xtnew1}
 	\begin{align}
 		&\frac{-\hat{\xi}\sqrt{1-4\eta_1^{2}}\left(4k_{1}^{2}+1\right)(1-4\mu_{0}^{2})}{16\sqrt{1-4\eta_2^{2}}}=\frac{1}{4},\label{t1} \\
 		&1-4\mu_{0}^{2}=\frac{2(1-4\eta_2^{2})(1-4\eta_1^{2})(1+4k_{1}^{2})}{\left\{4\eta_2^{2}+4k_{1}^{2}+(1-4\eta_2^{2})4k_{1}^{2}\right\}(1-4\eta_1^{2})-(1-4\eta_2^{2})(1+4k_{1}^{2})}, \label{t2}\\
 		&\int\limits_{0}^{i\eta_1}\frac{(k^{2}-k_{1}^{2})(k^{2}+\mu_{0}^{2})\sqrt{k^{2}+\eta_1^{2}}}{\left(k^{2}+\frac{1}{4}\right)^{2}\sqrt{k^{2}+\eta_2^{2}}}\mathrm{d}k=0.\label{t3}
 	\end{align}
 	\end{subequations}
Here the branches are chosen consistently with the prescribed jumps on
\(i(\eta_1,\eta_2)\cup i(-\eta_2,-\eta_1)\). Then \(g_1\) satisfies
\begin{align}
	&g_{1+}(k)+g_{1-}(k)=0,
	&& k\in i(\eta_1,\eta_2)\cup i(-\eta_2,-\eta_1), \label{g1-jump-1}\\
	&g_{1+}(k)-g_{1-}(k)=\Omega_1,
	&& k\in i[-\eta_1,\eta_1], \label{g1-jump-2}
\end{align}
where
\begin{equation}\label{Omega1-def}
	\Omega_1
	=
	2\int_{i\eta_2}^{i\eta_1}
	\frac{
		\omega\hat\xi(\zeta^2+\mu_0^2)(\zeta^2-k_1^2)
	}{
		\left(\zeta^2+\frac14\right)^2
	}
	\left(
	\frac{\sqrt{\zeta^2+\eta_1^2}}
	{\sqrt{\zeta^2+\eta_2^2}}
	\right)_+
	d\zeta .
\end{equation}
Moreover,
\begin{equation}\label{g1-infty}
	g_1(k;\hat\xi)=\omega k\hat\xi+\mathcal O(k^{-1}),
	\qquad k\to\infty ,
\end{equation}
and the singularity at \(k=i/2\) is matched in the sense that
\begin{equation}\label{g1-half}
	\lim_{k\to i/2}
	\left[
	g_1(k;\hat\xi)
	-\left(\omega k\hat\xi-\frac{2\omega k}{1+4k^2}\right)
	\right]
	\quad\text{exists and is finite}.
\end{equation}

The second auxiliary function is given by
\begin{equation}\label{g2}
	g_2(k)
	=
	\int_{i\eta_2}^{k}
	\frac{\omega(\zeta^2+\kappa)}
	{\sqrt{(\zeta^2+\eta_1^2)(\zeta^2+\eta_2^2)}}
	\,d\zeta .
\end{equation}
It is analytic in \(\mathbb C\setminus i[-\eta_2,\eta_2]\) and satisfies
\begin{align}
	&g_{2+}(k)+g_{2-}(k)=0,
	&& k\in i(\eta_1,\eta_2)\cup i(-\eta_2,-\eta_1), \label{g2-jump-1}\\
	&g_{2+}(k)-g_{2-}(k)=\Omega_2,
	&& k\in i[-\eta_1,\eta_1], \label{g2-jump-2}\\
	&g_2(k)=\omega k+\mathcal O(k^{-1}),
	&& k\to\infty . \label{g2-infty}
\end{align}
Here
\begin{equation}\label{Omega2-def}
	\Omega_2
	=
	-\frac{\omega\pi\eta_2}{K(\hat m)},
	\qquad
	\hat m=\frac{\eta_1}{\eta_2},
	\qquad
	\kappa
	=
	\eta_2^2
	\left(
	1-\frac{E(\hat m)}{K(\hat m)}
	\right),
\end{equation}
where $K( \hat{m})=\int_0^{\frac{\pi}{2}}\frac{\d\vartheta }{\sqrt{1- \hat{m}^2\sin^2\vartheta }}$ and $E( \hat{m})=\int_0^{\frac{\pi}{2}}\sqrt{1- \hat{m}^2\sin^2\vartheta }\, \d\vartheta $ is the complete elliptic integral of the first and second kinds, respectively.

We now define
\begin{equation}\label{hatg-def}
	\hat g(k)
	=
	g_1(k;\hat\xi)+(\xi-\hat\xi)g_2(k).
\end{equation}
It follows immediately from \eqref{g1-jump-1}--\eqref{g2-infty} that
\(\hat g\) satisfies the jump conditions \eqref{7.4}--\eqref{7.5}, the
normalization \eqref{7.6}, and the regularity condition \eqref{7.7}. The
corresponding constant in the central jump is
\begin{equation}\label{hw}
	\hat\Omega
	=
	\Omega_1+(\xi-\hat\xi)\Omega_2 .
\end{equation}
In particular, the scalar RH problem
\eqref{7.4}--\eqref{7.7} determines \(\hat g\) uniquely.

Finally, near the endpoints of the cuts, the function \(\hat g\) has the
standard square-root behavior:
\begin{equation}\label{Eq1B}
	\begin{aligned}
		\hat g_+(k)-\hat g_-(k)
		&=\mathcal O\!\left(\sqrt{k\mp i\eta_2}\right),
		&& k\to \pm i\eta_2,\\
		\hat g_+(k)-\hat g_-(k)-\hat\Omega
		&=\mathcal O\!\left(\sqrt{k\mp i\eta_1}\right),
		&& k\to \pm i\eta_1.
	\end{aligned}
\end{equation}
As $\xi<\hat{\xi}$, from  \cite{GYJ}, we know the function $\hat{g}(k)$ satisfies the following inequalities:
\begin{eqnarray}
	&&\Re \le[2 i\hat{g}(k)\ri]  >0 \ \mbox{ for }k \in \mathcal{C}_{1} \setminus \{ i\eta_1, i\eta_{2} \},\label{gn11}
	\\
	&&\Re \le[2i \hat{g}(k) \ri]  <0 \ \mbox{ for }k \in \mathcal{C}_{2} \setminus \{ - i\eta_{2}, - i\eta_1 \}\label{gn12},
\end{eqnarray}
where the contours $\mathcal{C}_{1}$ and $\mathcal{C}_{2}$ shown as Fig~\ref{openinglenses1}.
For $\xi>\hat{\xi}$, to obtain the signature table of $\Re (2i\hat{g}(k))$, we need analyze the zeros distribution of the function $\hat{g}'(k)$.
\begin{proposition}[Distribution of the zeros of $\hat g'$ for $\xi>\hat\xi$]\label{prop:zeros-hatg-xi-larger}
	Assume that \(0<\eta_1<\eta_2<\frac12\) belongs to Case I, and let \(\hat\xi\) be the transition point determined by the endpoint system \eqref{xtnew1}. Let the corresponding parameters satisfy
	\[
	0<\mu_0<\eta_1<\eta_2<\frac12,\qquad k_1>0,\qquad \hat\xi<0.
	\]
	Set
	\begin{equation}\label{eq:xi-star-hatg}
		\xi_*=\hat\xi+
		\frac{-\hat\xi(\eta_2^2-\mu_0^2)(\eta_2^2-\eta_1^2)(\eta_2^2+k_1^2)}
		{(\eta_2^2-\kappa)\left(\frac14-\eta_2^2\right)^2}.
	\end{equation}
	Then \(\xi_*>\hat\xi\). For \(\xi>\hat\xi\), the zeros of \(\hat g'(k;\xi)\) are described as follows.
	First, there is always one pair of purely imaginary zeros
	\[
	k=\pm i\beta_0(\xi),\qquad 0<\beta_0(\xi)<\eta_1.
	\]
	Second, the middle pair satisfies
	\[
	\begin{array}{ll}
		\hat\xi<\xi<\xi_*:
		& k=\pm i\beta_1(\xi),\qquad \eta_1<\beta_1(\xi)<\eta_2,\\[1mm]
		\xi=\xi_*:
		& k=\pm i\eta_2,\\[1mm]
		\xi>\xi_*:
		& k=\pm i\beta_1(\xi),\qquad \eta_2<\beta_1(\xi)<\dfrac12.
	\end{array}
	\]
	Finally, the remaining pair is distributed according to the sign of \(\xi\):
	\[
	\begin{array}{ll}
		\hat\xi<\xi<0:
		& k=\pm \gamma(\xi),\qquad \gamma(\xi)>0,\\[1mm]
		\xi=0:
		& \text{the remaining pair escapes to infinity},\\[1mm]
		\xi>0:
		& k=\pm i\beta_2(\xi),\qquad \beta_2(\xi)>\dfrac12.
	\end{array}
	\]
	In particular, for \(\hat\xi<\xi<0\), the function \(\hat g'\) has one pair of real zeros; for \(\xi>0\), all finite zeros of \(\hat g'\) lie on the imaginary axis.
\end{proposition}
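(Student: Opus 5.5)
The plan is to reduce everything to the sign pattern of a single real cubic polynomial. From \eqref{g1-def}, \eqref{g2} and \eqref{hatg-def}, putting both terms over the common denominator gives
\[
\hat g'(k)=\frac{\omega\,P(k^2)}{\left(k^2+\frac14\right)^2\sqrt{(k^2+\eta_1^2)(k^2+\eta_2^2)}},
\]
where
\[
P(s)=\hat\xi\,(s+\mu_0^2)(s-k_1^2)(s+\eta_1^2)+(\xi-\hat\xi)(s+\kappa)\left(s+\tfrac14\right)^2 .
\]
The denominator has no zeros away from the branch points and $\pm i/2$. So the finite zeros of $\hat g'$ are exactly $k=\pm\sqrt{s_j}$, where $s_j$ runs over the roots of $P$. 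The leading coefficient of $P$ is $\hat\xi+(\xi-\hat\xi)=\xi$. Hence for $\xi\neq0$ the polynomial $P$ is a genuine cubic with exactly three roots, giving three pairs of zeros. At $\xi=0$ the degree drops, and one root goes to $s=\infty$; this is the pair that ``escapes to infinity''. On the imaginary axis, $k=ip$ corresponds to $s=-p^2<0$, while real zeros $k=\pm\gamma$ correspond to $s>0$. It therefore suffices to locate three sign changes of $P$ on $\mathbb R$.

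\textbf{Sign evaluations.} I will use $\hat\xi<0$ and $0<\mu_0<\eta_1$. I will also use $0<\kappa<\eta_1^2$, which is the statement that the zero $\pm i\sqrt\kappa$ of the normalized differential $g_2'$ lies in the gap. This follows from \eqref{Omega2-def}, since $\eta_2^2(1-E/K)<\eta_1^2$, and it is also consistent with the positivity of $\eta_2^2-\kappa$ in \eqref{eq:xi-star-hatg}. Throughout, $\xi>\hat\xi$. The evaluations are as follows.
\begin{itemize}
\item At $s=0$: $P(0)=-\hat\xi\mu_0^2k_1^2\eta_1^2+(\xi-\hat\xi)\kappa/16>0$.
\item At $s=-\eta_1^2$: the first term vanishes, and $P(-\eta_1^2)=(\xi-\hat\xi)(\kappa-\eta_1^2)(\frac14-\eta_1^2)^2<0$.
\item At $s=-\frac14$: $P(-\frac14)=\hat\xi(\mu_0^2-\frac14)(-\frac14-k_1^2)(\eta_1^2-\frac14)>0$, since $\hat\xi<0$ multiplies a product of three negative factors.
\item At $s=-\eta_2^2$: the first term is $\hat\xi(\mu_0^2-\eta_2^2)(-\eta_2^2-k_1^2)(\eta_1^2-\eta_2^2)>0$, and the second is $(\xi-\hat\xi)(\kappa-\eta_2^2)(\frac14-\eta_2^2)^2<0$. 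Solving $P(-\eta_2^2)=0$ for $\xi$ gives exactly \eqref{eq:xi-star-hatg}. Thus $P(-\eta_2^2)>0$ for $\hat\xi<\xi<\xi_*$, vanishes at $\xi_*$, and is negative for $\xi>\xi_*$.
\end{itemize}
The inequality $\xi_*>\hat\xi$ is immediate from \eqref{eq:xi-star-hatg}, because every factor in the added fraction is positive.

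\textbf{Locating the roots.} From $P(-\eta_1^2)<0<P(0)$ there is a root in $(-\eta_1^2,0)$, which gives $\pm i\beta_0$ with $0<\beta_0<\eta_1$. For $\xi<\xi_*$, the signs $P(-\eta_2^2)>0>P(-\eta_1^2)$ give a root in $(-\eta_2^2,-\eta_1^2)$, so $\eta_1<\beta_1<\eta_2$. For $\xi>\xi_*$, the signs $P(-\frac14)>0>P(-\eta_2^2)$ give a root in $(-\frac14,-\eta_2^2)$, so $\eta_2<\beta_1<\frac12$. At $\xi=\xi_*$ the root sits at $-\eta_2^2$. The third root comes from the behaviour at infinity, $P(s)\sim\xi s^3$.
\begin{itemize}
\item If $\xi>0$, then $P(-\infty)<0<P(-\frac14)$, giving a root below $-\frac14$, so $\beta_2>\frac12$.
\item If $\hat\xi<\xi<0$, then $P(+\infty)<0<P(0)$, giving a positive root $s=\gamma^2$, i.e.\ real zeros $\pm\gamma$.
\end{itemize}
Since the cubic has only three roots, these intervals account for all of them. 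This forces simplicity of the roots and excludes any non-real, non-imaginary zeros.

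\textbf{Main obstacle.} The delicate point is the input $0<\kappa<\eta_1^2$, i.e.\ that $1-E(\hat m)/K(\hat m)<\hat m^2$. I would establish it by the standard argument that the normalized differential $g_2'$ (zero period across the band/gap structure, as encoded by $\Omega_2$) must change sign inside the gap $i(-\eta_1,\eta_1)$. Alternatively, it follows from the elementary inequality $E>(1-\hat m^2)K$, which holds because the integrand $\sqrt{1-\hat m^2\sin^2\vartheta}$ is pointwise larger than $(1-\hat m^2)/\sqrt{1-\hat m^2\sin^2\vartheta}$. The remaining sign checks are direct.
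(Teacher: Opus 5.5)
Your proposal is correct and follows essentially the same route as the paper. You write $\hat g'$ with the cubic $P_\xi(s)$ in the numerator and locate its three roots from the signs at $s=0,-\eta_1^2,-\eta_2^2,-\frac14$ and at $\pm\infty$ (leading coefficient $\xi$). The differences are minor improvements: you place the innermost root via $P_\xi(-\eta_1^2)<0<P_\xi(0)$ instead of comparing $P_\xi(-\mu_0^2)$ with $P_\xi(-\kappa)$, and you justify $0<\kappa<\eta_1^2$ through $E(\hat m)>(1-\hat m^2)K(\hat m)$, which the paper only asserts.
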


\begin{proof}
	We first note that
	$
	0<\kappa<\eta_1^2.
	$
	By the fixed-endpoint construction,
	\[
	\hat g(k;\xi)=\hat g(k;\hat\xi)+(\xi-\hat\xi)g_2(k).
	\]
	Consequently,
	\[
	\hat g'(k;\xi)
	=
	\frac{\omega P_\xi(k^2)}
	{\left(k^2+\frac14\right)^2\sqrt{(k^2+\eta_1^2)(k^2+\eta_2^2)}},
	\]
	where
	\[
	P_\xi(s)
	=
	\hat\xi(s+\mu_0^2)(s+\eta_1^2)(s-k_1^2)
	+(\xi-\hat\xi)(s+\kappa)\left(s+\frac14\right)^2.
	\]
	Therefore the zeros of \(\hat g'(k;\xi)\), away from the branch points, are determined by
	$
	P_\xi(k^2)=0.
	$
	We first locate the zero closest to the origin. Since
	$
	P_\xi(-\mu_0^2)
	=
	(\xi-\hat\xi)(\kappa-\mu_0^2)\left(\frac14-\mu_0^2\right)^2
	$
	and
	$
	P_\xi(-\kappa)
	=
	\hat\xi(\mu_0^2-\kappa)(\eta_1^2-\kappa)(-\kappa-k_1^2),
	$
	these two quantities have opposite signs unless \(\mu_0^2=\kappa\), in which case \(s=-\kappa\) itself is a zero. Since both \(\mu_0^2\) and \(\kappa\) are smaller than \(\eta_1^2\), \(P_\xi\) has a zero in \((-\eta_1^2,0)\). Hence
	$
	k=\pm i\beta_0(\xi), 0<\beta_0(\xi)<\eta_1.
	$
	Next, we locate the middle zero. We have
	\[
	P_\xi(-\eta_1^2)
	=
	(\xi-\hat\xi)(\kappa-\eta_1^2)\left(\frac14-\eta_1^2\right)^2<0.
	\]
	On the other hand,
	\[
	P_\xi(-\eta_2^2)
	=
	-\hat\xi(\eta_2^2-\mu_0^2)(\eta_2^2-\eta_1^2)(\eta_2^2+k_1^2)
	-(\xi-\hat\xi)(\eta_2^2-\kappa)\left(\frac14-\eta_2^2\right)^2.
	\]
	Thus \(P_\xi(-\eta_2^2)\) is strictly decreasing as a function of \(\xi\). By the definition of \(\xi_*\) in \eqref{eq:xi-star-hatg},
	$
	P_{\xi_*}(-\eta_2^2)=0.
	$
	Therefore,
	$
	P_\xi(-\eta_2^2)>0,d \hat\xi<\xi<\xi_*,
	$
	$
	P_\xi(-\eta_2^2)=0, \xi=\xi_*,
	$
	and
	$
	P_\xi(-\eta_2^2)<0, \xi>\xi_*.
	$
	If \(\hat\xi<\xi<\xi_*\), then
	$
	P_\xi(-\eta_2^2)>0, P_\xi(-\eta_1^2)<0,
	$
	so \(P_\xi\) has a zero in \((-\eta_2^2,-\eta_1^2)\). Hence,
	$
	k=\pm i\beta_1(\xi), \eta_1<\beta_1(\xi)<\eta_2.
	$
	If $\xi=\xi_*$, then \(P_{\xi_*}(-\eta_2^2)=0\), and the middle pair reaches the branch endpoints
	$
	k=\pm i\eta_2.
	$
	If \(\xi>\xi_*\), then
	$
	P_\xi(-\eta_2^2)<0,
	$
	whereas
	\[
	P_\xi\left(-\frac14\right)
	=
	\hat\xi\left(\mu_0^2-\frac14\right)
	\left(\eta_1^2-\frac14\right)
	\left(-\frac14-k_1^2\right)>0.
	\]
	Thus \(P_\xi\) has a zero in \((-\frac14,-\eta_2^2)\), which gives
	$
	k=\pm i\beta_1(\xi), \eta_2<\beta_1(\xi)<\frac12.
	$
	It remains to determine the last zero. The leading coefficient of the cubic polynomial \(P_\xi\) is
	$
	\hat\xi+(\xi-\hat\xi)=\xi.
	$
	If \(\hat\xi<\xi<0\), then
	$
	P_\xi(s)\sim \xi s^3\to -\infty, s\to+\infty.
	$
	Since
	\[
	P_\xi(0)
	=
	-\hat\xi\,\mu_0^2\eta_1^2 k_1^2
	+(\xi-\hat\xi)\frac{\kappa}{16}>0,
	\]
	there is a zero of \(P_\xi\) in \((0,+\infty)\). This gives a pair of real zeros
	$
	k=\pm\gamma(\xi), \gamma(\xi)>0.
	$
	
	If \(\xi=0\), then the cubic term of \(P_\xi\) vanishes. Hence the last zero escapes to infinity, and only the two finite pairs described above remain.
	
	If \(\xi>0\), then
	\[
	P_\xi(s)\sim \xi s^3\to -\infty,\qquad s\to-\infty.
	\]
	Since
	$
	P_\xi\left(-\frac14\right)>0,
	$
	there is a zero of \(P_\xi\) in \((-\infty,-\frac14)\). Therefore,
	$
	k=\pm i\beta_2(\xi), \beta_2(\xi)>\frac12.
	$
	For \(\xi\ne0\), the polynomial \(P_\xi\) is cubic, and the three zeros found above exhaust all its zeros. For \(\xi=0\), \(P_\xi\) is quadratic and the two finite pairs found above exhaust all finite zeros. The claimed distribution follows.
\end{proof}
By Proposition~\ref{prop:zeros-hatg-xi-larger}, we obtain the following sign properties of the phase function on the lens contours.

\begin{lemma}\label{lemma3.1}
	Assume that $\eta_1$ and $\eta_2$ belong to Case I. If
	$\xi\in(\hat{\xi},0)\cup(0,\xi_*)$, then
	\begin{align}
		&\Re\bigl(2i\hat{g}(k;\xi)\bigr)>0,
		\qquad k\in \tilde{\mathcal C}_1\setminus\{i\eta_2,i\beta_1\},\label{inn1}\\
		&\Re\bigl(2i\hat{g}(k;\xi)\bigr)<0,
		\qquad k\in \tilde{\mathcal C}_{-1}\setminus\{-i\eta_2,-i\beta_1\},\label{inn2}\\
		&\Re\bigl(2i\hat{g}(k;\xi)\bigr)<0,
		\qquad k\in \tilde{\mathcal C}_2\setminus\{i\eta_1,i\beta_1\},\label{inn3}\\
		&\Re\bigl(2i\hat{g}(k;\xi)\bigr)>0,
		\qquad k\in \tilde{\mathcal C}_{-2}\setminus\{-i\eta_1,-i\beta_1\},\label{inn4}
	\end{align}
	where the contours $\tilde{\mathcal C}_{\pm1}$ and
	$\tilde{\mathcal C}_{\pm2}$ are shown in Fig.~\ref{openinglenses2}.
	
	If $\xi>\xi_*$, then
	\begin{align}
		&\Re\bigl(2i\hat{g}(k;\xi)\bigr)<0,
		\qquad k\in {\mathcal C}_1\setminus\{i\eta_1,i\eta_2\},\label{inn5}\\
		&\Re\bigl(2i\hat{g}(k;\xi)\bigr)>0,
		\qquad k\in {\mathcal C}_2\setminus\{-i\eta_2,-i\eta_1\},\label{inn6}
	\end{align}
	where the contours ${\mathcal C}_1$ and ${\mathcal C}_2$ are shown in
	Fig.~\ref{openinglenses1}.
\end{lemma}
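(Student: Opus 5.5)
The plan is to read off the sign of $\Re(2i\hat g)$ on each lens contour from its sign on the imaginary axis. On $i\mathbb R$ near the cuts, $\hat g$ is real or purely imaginary, and the zeros of $\hat g'$ given by Proposition~\ref{prop:zeros-hatg-xi-larger} fix the monotonicity there. Write $k=ip$ and $F(k):=\Re(2i\hat g(k))$. On the gap $i(-\eta_1,\eta_1)$ the jump \eqref{7.5} is a constant, and on the bands the boundary values satisfy $\hat g_+=-\hat g_-$ by \eqref{7.4}. With the branch normalisation at $i\eta_2$ coming from \eqref{g1-def} and \eqref{g2}, $F$ vanishes identically on the bands $i[\eta_1,\eta_2]$ and $i[-\eta_2,-\eta_1]$, up to the purely imaginary constant $\hat\Omega$, which does not affect real parts. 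Hence the sign of $F$ just off the imaginary axis is governed by the Cauchy--Riemann equations: for $k=ip+\epsilon$ with $\epsilon>0$ small,
$$F(ip+\epsilon)\approx \epsilon\,\partial_\epsilon \Re(2i\hat g)=-2\epsilon\,\Im \hat g'(ip).$$
The sign of $F$ on a lens boundary close to a band therefore equals minus the sign of $\Im\hat g'_\pm(ip)$ on that band. From the representation
$$\hat g'(k)=\omega P_\xi(k^2)\Big/\Bigl[\bigl(k^2+\tfrac14\bigr)^2\sqrt{(k^2+\eta_1^2)(k^2+\eta_2^2)}\Bigr],$$
this sign is the sign of $P_\xi(-p^2)$ multiplied by a fixed branch sign.

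\textbf{Case $\xi\in(\hat\xi,0)\cup(0,\xi_*)$.} By Proposition~\ref{prop:zeros-hatg-xi-larger}, $P_\xi(-p^2)$ has exactly one simple zero in $(\eta_1,\eta_2)$, namely at $p=\beta_1$. On $(\beta_1,\eta_2)$ we have $P_\xi(-p^2)>0$, using $P_\xi(-\eta_2^2)>0$. On $(\eta_1,\beta_1)$ we have $P_\xi(-p^2)<0$, using $P_\xi(-\eta_1^2)<0$. This is why the lens splits at $i\beta_1$: $\tilde{\mathcal C}_1$ surrounds $i[\beta_1,\eta_2]$ and $\tilde{\mathcal C}_2$ surrounds $i[\eta_1,\beta_1]$. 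I would fix the branch sign at one reference point, for example by comparing with the known case $\xi<\hat\xi$. There \eqref{gn11} gives $F>0$ on the whole of $\mathcal C_1$, and $P_\xi(-p^2)$ has a single sign on $(\eta_1,\eta_2)$ because $\hat\xi<0$ and all the factors of the first term of $P_\xi$ are signed. Continuity in $\xi$ of the branch factor then shows that $F>0$ on the part adjacent to $(\beta_1,\eta_2)$ and $F<0$ on the part adjacent to $(\eta_1,\beta_1)$. This gives \eqref{inn1} and \eqref{inn3} locally near the cut. The lower half-plane statements \eqref{inn2} and \eqref{inn4} follow from the symmetry $\hat g(-k)=-\hat g(k)$, which gives $F(-k)=-F(k)$.

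\textbf{Case $\xi>\xi_*$.} Now $\beta_1\in(\eta_2,\tfrac12)$, so $P_\xi(-p^2)$ has a single sign on $(\eta_1,\eta_2)$. This sign is opposite to the one on $(\beta_1,\eta_2)$ in the previous case, because $P_\xi(-\eta_2^2)<0$. The whole lens $\mathcal C_1$ then carries $F<0$, which is \eqref{inn5}, and \eqref{inn6} follows by symmetry.

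\textbf{Globalisation (main obstacle).} The argument above only covers lens contours sufficiently close to the bands. The remaining task is to make sure the level set $\{F=0\}$ does not cross the chosen contours away from the axis. I would argue as follows. The set $\{F=0\}$ consists of analytic arcs emanating from the imaginary axis at the critical points $\pm i\beta_1$ and at the branch points. By Proposition~\ref{prop:zeros-hatg-xi-larger}, the only other critical points are $\pm i\beta_0$ (in the gap), $\pm i\beta_2$ or $\pm\gamma$ (which lie outside the disc of radius $\tfrac12$ or on $\mathbb R$), and the pole at $i/2$. By the maximum principle for the harmonic function $F$, a zero-level curve leaving $i\beta_1$ cannot close up inside a region that avoids these points. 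Hence it must end at $\infty$, at a pole, or on the real axis, where $F=0$ by symmetry since $\hat g$ is real there. Consequently a thin enough lens around $i[\eta_1,\eta_2]$, pinched at $i\eta_1$, $i\beta_1$ and $i\eta_2$, lies inside a single sign component on each side. The delicate points are near $i\beta_1$ and near the endpoints, where $F$ vanishes to order $|k-i\beta_1|^2$ and $|k-i\eta_j|^{3/2}$ respectively. There I would use the local expansions $F\sim c\,\Re\bigl((k-i\beta_1)^2\bigr)$ and $F\sim c\,\Re\bigl((k-i\eta_j)^{3/2}\bigr)$ to choose the opening angles of the lens so that they lie strictly inside the correct sectors.
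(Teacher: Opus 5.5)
Your argument is essentially the paper's proof. Near the band, the sign of $\Re(2i\hat g)=-2\Im\hat g$ is read off from $\partial_a\Im\hat g_+\big|_{a=0}=\Im\hat g_+'(ib)$, which equals $\omega P_\xi(-b^2)$ divided by a positive quantity. Proposition~\ref{prop:zeros-hatg-xi-larger} then gives the sign of $P_\xi$ on $(\eta_1,\beta_1)$ and on $(\beta_1,\eta_2)$, and the lower half-plane follows by symmetry. The paper states this branch sign directly instead of calibrating it against \eqref{gn11}, and it does not examine the behaviour near $\pm i\beta_1$ or near the endpoints.

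Two corrections to your side remarks:
\begin{itemize}
\item $\hat\Omega$ is real, not purely imaginary. That is exactly why it drops out of $\Re(2i\hat g)$; a purely imaginary constant would shift it.
\item For $\xi\neq\hat\xi,\xi_*$ one has $P_\xi(-\eta_j^2)\neq0$. So near $i\eta_j$, $\hat g-\hat g_\pm(i\eta_j)\sim c\,(k-i\eta_j)^{1/2}$, a hard edge as in \eqref{Eq1B} and the Bessel parametrices, not order $3/2$. Hence $\Re(2i\hat g)$ keeps a single sign in a punctured neighbourhood of each endpoint off the cut, and only the opening angles at $\pm i\beta_1$ need care.
\end{itemize}
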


\begin{proof}
	We first consider the case
	$\xi\in(\hat{\xi},0)\cup(0,\xi_*)$. It is enough to prove the assertions in
	the upper half-plane, since the lower half-plane estimates follow from the
	symmetry of $\hat g$ and from the symmetric choice of the lens contours.
	
	Let $k=a+ib$, and write
	\[
	\hat g_+(k;\xi)=A_1(a,b)+iB_1(a,b),
	\qquad A_1(a,b),B_1(a,b)\in\mathbb R .
	\]
	By the definition of $\hat g$ in \eqref{hatg-def}, the boundary value
	$\hat g_+(k;\xi)$ is real on
	$i(\eta_1,\eta_2)\cup i(-\eta_2,-\eta_1)$. Hence
	\[
	B_1(0,b)=0,\qquad b\in(\eta_1,\eta_2).
	\]
	For $b\in(\beta_1,\eta_2)$, namely $k=ib\in i(\beta_1,\eta_2)$, a direct
	differentiation of \eqref{hatg-def} gives
	\begin{align}
		\left.\frac{\partial B_1}{\partial a}\right|_{a=0}
		=\Im \hat g_+'(ib;\xi)
		=\begin{cases}
			\displaystyle
			\omega\xi
			\frac{(k^2+\beta_0^2)(k^2+\beta_1^2)(k^2-\gamma^2)}
			{\left(k^2+\frac14\right)^2
				\left|\sqrt{(k^2+\eta_1^2)(k^2+\eta_2^2)}_+\right|}
			>0,
			& \xi\in(\hat{\xi},0),
			\\[3mm]
			\displaystyle
			\omega\xi
			\frac{(k^2+\beta_0^2)(k^2+\beta_1^2)(k^2+\beta_2^2)}
			{\left(k^2+\frac14\right)^2
				\left|\sqrt{(k^2+\eta_1^2)(k^2+\eta_2^2)}_+\right|}
			>0,
			& \xi\in(0,\xi_*).
		\end{cases}
	\end{align}
	Therefore, on the side of the cut where the contour
	$\tilde{\mathcal C}_1$ is placed, one has
	\[
	\Im \hat g(k;\xi)=B_1(a,b)<0,
	\qquad
	k\in\tilde{\mathcal C}_1\setminus\{i\eta_2,i\beta_1\}.
	\]
	Since
	$
	\Re\bigl(2i\hat g(k;\xi)\bigr)=-2\Im \hat g(k;\xi),
	$
	this proves \eqref{inn1}.
	
	Similarly, for $b\in(\eta_1,\beta_1)$, namely
	$k=ib\in i(\eta_1,\beta_1)$, we obtain
	\begin{align}
		\left.\frac{\partial B_1}{\partial a}\right|_{a=0}
		=\Im \hat g_+'(ib;\xi)
		=\begin{cases}
			\displaystyle
			\omega\xi
			\frac{(k^2+\beta_0^2)(k^2+\beta_1^2)(k^2-\gamma^2)}
			{\left(k^2+\frac14\right)^2
				\left|\sqrt{(k^2+\eta_1^2)(k^2+\eta_2^2)}_+\right|}
			<0,
			& \xi\in(\hat{\xi},0),
			\\[3mm]
			\displaystyle
			\omega\xi
			\frac{(k^2+\beta_0^2)(k^2+\beta_1^2)(k^2+\beta_2^2)}
			{\left(k^2+\frac14\right)^2
				\left|\sqrt{(k^2+\eta_1^2)(k^2+\eta_2^2)}_+\right|}
			<0,
			& \xi\in(0,\xi_*).
		\end{cases}
	\end{align}
	Thus
	\[
	\Im \hat g(k;\xi)>0,
	\qquad
	k\in\tilde{\mathcal C}_2\setminus\{i\eta_1,i\beta_1\},
	\]
	and hence \eqref{inn3} follows. The estimates \eqref{inn2} and \eqref{inn4}
	are obtained in the same way from the reflection symmetry.
	
	It remains to consider $\xi>\xi_*$. In this case, Proposition
	\ref{prop:zeros-hatg-xi-larger} shows that no additional zero of
	$\hat g'$ lies on the interval $i(\eta_1,\eta_2)$. Consequently, the sign of
	$\Im\hat g$ cannot change along the corresponding lens contour. Evaluating
	the sign near the cut, as above, gives
	\[
	\Im\hat g(k;\xi)>0,
	\qquad
	k\in{\mathcal C}_1\setminus\{i\eta_1,i\eta_2\}.
	\]
	Therefore,
	\[
	\Re\bigl(2i\hat g(k;\xi)\bigr)<0,
	\qquad
	k\in{\mathcal C}_1\setminus\{i\eta_1,i\eta_2\},
	\]
	which proves \eqref{inn5}. The estimate \eqref{inn6} follows again by the
	same symmetry argument. The proof is complete.
\end{proof}
\subsection{The $g$-function for Case II}
For the Case II, we first introduce a system of equations
\begin{subequations}\label{xtnew2}
	\begin{align}
		&\frac{\grave{\xi}\sqrt{1-4\eta_1^{2}}\left(4\mu_{1}^{2}-1\right)(1-4\mu_{0}^{2})}{16\sqrt{1-4\eta_2^{2}}}=\frac{1}{4},\\
		&1-4\mu_{0}^{2}=\frac{2(1-4\eta_2^{2})(1-4\eta_1^{2})(4\mu_{1}^{2}-1)}{\left[4\mu_{1}^{2}-1+(1-4\eta_2^{2})(4\mu_{1}^{2}+1)\right](1-4\eta_1^{2})-(1-4\eta_2^{2})(4\mu_{1}^{2}-1)},\\
		&\int\limits_{0}^{i\eta_1}\frac{(k^{2}+\mu_{0}^{2})(k^{2}+\mu_{1}^{2})\sqrt{k^{2}+\eta_1^{2}}}{\left(k^{2}+\frac{1}{4}\right)^{2}\sqrt{k^{2}+\eta_2^{2}}}\mathrm{d}k=0.
	\end{align}
\end{subequations}
The above system has a unique solution $\grave{\xi}, \mu_1,\mu_0$ and the parameters satisfy $\grave{\xi}>0$ and $0<\mu_0<\eta_1<\eta_2<\frac12<\mu_1$. For all $\xi\in\mathbb R$, we introduce a scalar $\grave g$-function which will be used to deform the RH problem for $X(k)$. We first construct the critical part of this function. Let
\begin{equation}\label{eq:grave-g1-def}
	\grave g_1(k;\grave\xi)
	=\int_{i\eta_2}^{k}
	\frac{\omega\grave\xi(\zeta^2+\mu_0^2)(\zeta^2+\mu_1^2)\sqrt{\zeta^2+\eta_1^2}}
	{(\zeta^2+\frac14)^2\sqrt{\zeta^2+\eta_2^2}}\,d\zeta ,
\end{equation}
where the square roots are taken with cuts along $i(\eta_1,\eta_2)\cup i(-\eta_2,-\eta_1)$, and the quotient of the square roots is normalized to be $1+\mathcal O(k^{-2})$ as $k\to\infty$. Then $\grave g_1$ is analytic in $\mathbb C\setminus(i(\eta_1,\eta_2)\cup i(-\eta_2,-\eta_1))$ and satisfies the following scalar RH conditions:
\begin{align}
	\grave g_{1+}(k)+\grave g_{1-}(k)&=0,
	&& k\in i(\eta_1,\eta_2)\cup i(-\eta_2,-\eta_1), \label{eq:grave-g1-jump-1}\\
	\grave g_{1+}(k)-\grave g_{1-}(k)&=\grave\Omega_1,
	&& k\in i[-\eta_1,\eta_1], \label{eq:grave-g1-jump-2}
\end{align}
where
\begin{equation}\label{eq:grave-Omega1}
	\grave\Omega_1
	=2\int_{i\eta_2}^{i\eta_1}
	\frac{\omega\grave\xi(\zeta^2+\mu_0^2)(\zeta^2+\mu_1^2)}
	{(\zeta^2+\frac14)^2}
	\left(\frac{\sqrt{\zeta^2+\eta_1^2}}{\sqrt{\zeta^2+\eta_2^2}}\right)_+\,d\zeta .
\end{equation}
Moreover, the normalization at infinity is
\begin{equation}\label{eq:grave-g1-infty}
	\grave g_1(k;\grave\xi)=\omega\grave\xi\, k+\mathcal O(k^{-1}),
	\qquad k\to\infty,
\end{equation}
and the pole at $k=i/2$ is matched with the original phase in the sense that
\begin{equation}\label{eq:grave-g1-i2}
	\lim_{k\to i/2}
	\left[
	\grave g_1(k;\grave\xi)
	-\left(\omega\grave\xi\, k-\frac{2\omega k}{1+4k^2}\right)
	\right]
\end{equation}
exists and is finite.

We now extend this critical $g$-function to arbitrary values of the self-similar parameter $\xi$. Define
\begin{equation}\label{eq:grave-g-def}
	\grave g(k;\xi)
	=\grave g_1(k;\grave\xi)+(\xi-\grave\xi)g_2(k),
\end{equation}
where $g_2$ is the function defined in \eqref{g2}. By the defining properties of $g_2$, the function $\grave g$ satisfies the following scalar RH problem:
\begin{align}
	\grave g_+(k)+\grave g_-(k)&=0,
	&& k\in i(\eta_1,\eta_2)\cup i(-\eta_2,-\eta_1), \label{eq:grave-g-jump-1}\\
	\grave g_+(k)-\grave g_-(k)&=\grave\Omega,
	&& k\in i[-\eta_1,\eta_1], \label{eq:grave-g-jump-2}
\end{align}
where
\begin{equation}\label{eq:grave-Omega}
	\grave\Omega=\grave\Omega_1+(\xi-\grave\xi)\Omega_2 .
\end{equation}
Furthermore, $\grave g$ has the prescribed behavior at infinity:
\begin{equation}\label{eq:grave-g-infty}
	\grave g(k;\xi)
	-\left(\omega\xi\,k-\frac{2\omega k}{1+4k^2}\right)
	=\mathcal O(k^{-1}),
	\qquad k\to\infty,
\end{equation}
and the singularity at $k=i/2$ is removable after subtracting the corresponding singular part of the phase, namely
\begin{equation}\label{eq:grave-g-i2}
	\lim_{k\to i/2}
	\left[
	\grave g(k;\xi)
	-\left(\omega\xi\,k-\frac{2\omega k}{1+4k^2}\right)
	\right]\in\mathbb C .
\end{equation}
 
\begin{proposition}[Zeros of $\grave g'$ for $\xi>\grave\xi$ in Case II]
	Assume that $0<\eta_1<\eta_2<\frac12$ belongs to Case II. Let $\grave\xi$ be the transition point determined by the endpoint system~\eqref{xtnew2}, and the corresponding parameters satisfy
	\[
	0<\mu_0<\eta_1<\eta_2<\frac12<\mu_1 .
	\]
	Set
	\begin{equation}\label{eq:xi2-grave}
		\xi_2=\grave\xi+
		\frac{\grave\xi(\eta_2^2-\mu_0^2)(\eta_2^2-\eta_1^2)(\mu_1^2-\eta_2^2)}
		{(\eta_2^2-\kappa)\left(\frac14-\eta_2^2\right)^2}.
	\end{equation}
	Then $\xi_2>\grave\xi$. For every $\xi>\grave\xi$, all zeros of $\grave g'(k;\xi)$ lie on the imaginary axis. More precisely, the zeros are given by
	\[
	k=\pm i\beta_0(\xi),\qquad k=\pm i\beta_1(\xi),\qquad k=\pm i\beta_2(\xi),
	\]
	with the following distribution:
	\[
	\begin{array}{ll}
		\grave\xi<\xi<\xi_2:
		& 0<\beta_0(\xi)<\eta_1<\beta_1(\xi)<\eta_2<\dfrac12<\beta_2(\xi)<\mu_1,\\[2mm]
		\xi=\xi_2:
		& 0<\beta_0(\xi_2)<\eta_1,\qquad \beta_1(\xi_2)=\eta_2,\qquad \dfrac12<\beta_2(\xi_2)<\mu_1,\\[2mm]
		\xi>\xi_2:
		& 0<\beta_0(\xi)<\eta_1<\eta_2<\beta_1(\xi)<\dfrac12<\beta_2(\xi)<\mu_1.
	\end{array}
	\]
	Equivalently, as $\xi$ increases past $\xi_2$, the middle pair of zeros crosses the branch endpoints $\pm i\eta_2$ and moves from the spectral bands into the gaps
	$
	i(\eta_2,\tfrac12)\cup i(-\tfrac12,-\eta_2).
	$
	No pair of zeros leaves the imaginary axis.
\end{proposition}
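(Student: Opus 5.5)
Mirroring Proposition~\ref{prop:zeros-hatg-xi-larger}, we reduce the problem to a real cubic. Since $\grave g(k;\xi)=\grave g_1(k;\grave\xi)+(\xi-\grave\xi)g_2(k)$, differentiating \eqref{eq:grave-g1-def} and \eqref{g2} gives
\[
\grave g'(k;\xi)=\frac{\omega Q_\xi(k^2)}{\left(k^2+\frac14\right)^2\sqrt{(k^2+\eta_1^2)(k^2+\eta_2^2)}},
\]
where
\[
Q_\xi(s)=\grave\xi(s+\mu_0^2)(s+\mu_1^2)(s+\eta_1^2)+(\xi-\grave\xi)(s+\kappa)\left(s+\tfrac14\right)^2 .
\]
The zeros of $\grave g'$ are therefore $k=\pm\sqrt{s}$ with $Q_\xi(s)=0$. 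The leading coefficient of the cubic $Q_\xi$ is $\xi>\grave\xi>0$, so $Q_\xi$ has degree exactly three. If we find three distinct negative real roots, then every zero is of the form $\pm i\beta_j$ and none leaves the imaginary axis. We use throughout that $0<\kappa<\eta_1^2$, as in the proof of Proposition~\ref{prop:zeros-hatg-xi-larger}, together with $0<\mu_0<\eta_1<\eta_2<\frac12<\mu_1$.

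Three of the sign evaluations follow directly from this ordering, and one root is located by intermediate values:
\begin{itemize}
\item[(i)] $Q_\xi(-\mu_0^2)=(\xi-\grave\xi)(\kappa-\mu_0^2)(\frac14-\mu_0^2)^2$ and $Q_\xi(-\kappa)=\grave\xi(\mu_0^2-\kappa)(\mu_1^2-\kappa)(\eta_1^2-\kappa)$ have opposite signs, or one of them vanishes when $\mu_0^2=\kappa$. This gives $\beta_0\in(0,\eta_1)$.
\item[(ii)] $Q_\xi(-\eta_1^2)=(\xi-\grave\xi)(\kappa-\eta_1^2)(\frac14-\eta_1^2)^2<0$.
\item[(iii)] $Q_\xi(-\frac14)=\grave\xi(\mu_0^2-\frac14)(\mu_1^2-\frac14)(\eta_1^2-\frac14)>0$, being the product of a positive factor and two negative factors.
\item[(iv)] $Q_\xi(-\mu_1^2)=(\xi-\grave\xi)(\kappa-\mu_1^2)(\frac14-\mu_1^2)^2<0$, which together with (iii) gives a root $s=-\beta_2^2$ with $\frac12<\beta_2<\mu_1$.
\end{itemize}

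The middle root is governed by
\[
Q_\xi(-\eta_2^2)=\grave\xi(\eta_2^2-\mu_0^2)(\mu_1^2-\eta_2^2)(\eta_2^2-\eta_1^2)-(\xi-\grave\xi)(\eta_2^2-\kappa)\left(\tfrac14-\eta_2^2\right)^2 .
\]
The first term is positive. The second is strictly decreasing and linear in $\xi$, so $Q_\xi(-\eta_2^2)$ vanishes exactly at $\xi=\xi_2$ given by \eqref{eq:xi2-grave}. This also shows $\xi_2>\grave\xi$, because the added fraction is positive. For $\grave\xi<\xi<\xi_2$, the value $Q_\xi(-\eta_2^2)>0$ combined with (ii) places the middle root in $(-\eta_2^2,-\eta_1^2)$, so $\eta_1<\beta_1<\eta_2$. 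At $\xi=\xi_2$ we get $\beta_1=\eta_2$. For $\xi>\xi_2$, the value $Q_\xi(-\eta_2^2)<0$ combined with (iii) places the root in $(-\frac14,-\eta_2^2)$, so $\eta_2<\beta_1<\frac12$.

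The roots lie in pairwise disjoint intervals, which gives three distinct real roots. A cubic has no others, and this proves the stated distribution. The main point to be careful about is the sign bookkeeping, chiefly verifying $0<\kappa<\eta_1^2$ from \eqref{Omega2-def}. That inequality is equivalent to $1-\hat m^2<E(\hat m)/K(\hat m)<1$, the standard bound $(1-\hat m^2)K<E<K$. A secondary point is interpreting the zero at $\xi=\xi_2$: it coincides with the branch point, where $\grave g'$ vanishes like $(k\mp i\eta_2)^{1/2}$ rather than having an ordinary zero.
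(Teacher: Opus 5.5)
Your proof is correct and follows the paper's argument. You reduce to the cubic $Q_\xi$, bracket the outer roots using the signs at $-\eta_1^2$, $-\frac14$ and $-\mu_1^2$, and track the middle root through $Q_\xi(-\eta_2^2)$, which is linear and decreasing in $\xi$ and vanishes exactly at $\xi_2$. The differences are minor: you bracket $\beta_0$ between $-\mu_0^2$ and $-\kappa$ (as in the Case I proof) where the paper uses $Q_\xi(0)>0$ and $Q_\xi(-\eta_1^2)<0$, and you justify $0<\kappa<\eta_1^2$ via $(1-\hat m^2)K<E<K$, which the paper only asserts.
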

\begin{proof}
	By the construction of the fixed-endpoint function \(\grave g\), we have
$$
	\grave g'(k;\xi)
	=
	\frac{\omega P_\xi(k^2)}
	{\left(k^2+\frac14\right)^2
		\sqrt{(k^2+\eta_1^2)(k^2+\eta_2^2)}},
$$
	where
	\[
	P_\xi(s)
	=
	\grave\xi(s+\mu_0^2)(s+\eta_1^2)(s+\mu_1^2)
	+(\xi-\grave\xi)(s+\kappa)\left(s+\frac14\right)^2.
	\]
	Since the denominator does not create zeros away from the branch points, the zeros of
	\(\grave g'\) are determined by the equation
	$
	P_\xi(k^2)=0.
	$
	We now prove that all zeros of \(P_\xi\) are real and negative.
	First,
	$
	P_\xi(0)>0,
	$
	whereas
	\[
	P_\xi(-\eta_1^2)
	=
	(\xi-\grave\xi)(\kappa-\eta_1^2)
	\left(\frac14-\eta_1^2\right)^2<0,
	\]
	because \(\xi>\grave\xi\) and \(0<\kappa<\eta_1^2\). Therefore \(P_\xi\) has a zero in
	$
	(-\eta_1^2,0).
	$
	This gives one pair of purely imaginary zeros
	$
	k=\pm i\beta_0(\xi), 0<\beta_0(\xi)<\eta_1.
	$
	Next,
	\[
	P_\xi\left(-\frac14\right)
	=
	\grave\xi\left(\mu_0^2-\frac14\right)
	\left(\eta_1^2-\frac14\right)
	\left(\mu_1^2-\frac14\right)>0,
	\]
	since \(0<\mu_0<\eta_1<\frac12<\mu_1\). On the other hand,
	\[
	P_\xi(-\mu_1^2)
	=
	(\xi-\grave\xi)(\kappa-\mu_1^2)
	\left(\frac14-\mu_1^2\right)^2<0.
	\]
	Thus \(P_\xi\) has another zero in
	$
	(-\mu_1^2,-\tfrac14),
	$
	which gives
	$
	k=\pm i\beta_2(\xi),
	 \frac12<\beta_2(\xi)<\mu_1.
	$
	It remains to locate the middle zero. We have
	\[
	P_\xi(-\eta_2^2)
	=
	\grave\xi(\mu_0^2-\eta_2^2)(\eta_1^2-\eta_2^2)(\mu_1^2-\eta_2^2)
	+(\xi-\grave\xi)(\kappa-\eta_2^2)
	\left(\frac14-\eta_2^2\right)^2.
	\]
	Since
	\[
	(\mu_0^2-\eta_2^2)(\eta_1^2-\eta_2^2)(\mu_1^2-\eta_2^2)>0,
	\qquad
	\kappa-\eta_2^2<0,
	\]
	the quantity \(P_\xi(-\eta_2^2)\) is strictly decreasing as a function of \(\xi\). By the definition
	\[
	\xi_2=\grave\xi+
	\frac{\grave\xi(\eta_2^2-\mu_0^2)(\eta_2^2-\eta_1^2)(\mu_1^2-\eta_2^2)}
	{(\eta_2^2-\kappa)\left(\frac14-\eta_2^2\right)^2},
	\]
	we have
	$
	P_{\xi_2}(-\eta_2^2)=0.
	$
	Hence
	\[
	P_\xi(-\eta_2^2)>0,\quad \grave\xi<\xi<\xi_2,\ \ 
	P_\xi(-\eta_2^2)=0,\quad \xi=\xi_2,
	\]
	and
	$
	P_\xi(-\eta_2^2)<0, \xi>\xi_2.
	$
	
	If \(\grave\xi<\xi<\xi_2\), then
	$
	P_\xi(-\eta_2^2)>0,
	P_\xi(-\eta_1^2)<0.
	$
	Therefore \(P_\xi\) has a zero in
	$
	(-\eta_2^2,-\eta_1^2),
	$
	and so
	$
	k=\pm i\beta_1(\xi),
	\eta_1<\beta_1(\xi)<\eta_2.
	$
	If \(\xi=\xi_2\), then
	$
	P_{\xi_2}(-\eta_2^2)=0.
	$
	Thus the middle pair of zeros reaches the branch endpoints,
	$
	k=\pm i\eta_2.
	$
	Equivalently,
	$
	\beta_1(\xi_2)=\eta_2.
	$
	
	If \(\xi>\xi_2\), then
	$
	P_\xi(-\eta_2^2)<0,
	P_\xi\left(-\frac14\right)>0.
	$
	Therefore \(P_\xi\) has a zero in
	$
	(-\tfrac14,-\eta_2^2),
	$
	and hence
	$
	k=\pm i\beta_1(\xi),
	\eta_2<\beta_1(\xi)<\frac12.
	$
	
	Since \(P_\xi\) is a cubic polynomial and we have already found three zeros in mutually disjoint intervals, these are all the zeros of \(P_\xi\). Therefore all zeros of \(\grave g'(k;\xi)\) lie on the imaginary axis. The asserted classification follows immediately.
\end{proof}
Similar argument as Lemma~\ref{lemma3.1}, we have the the following inequalities.
\begin{lemma}\label{lemma3.2} 
	Assume that $\eta_1$ and $\eta_2$ belong to Case II. If
	$\xi\in(\grave{\xi},\xi_2)$, then
	\begin{align}
		&\Re\bigl(2i\grave{g}(k;\xi)\bigr)>0,
		\qquad k\in \tilde{\mathcal C}_1\setminus\{i\eta_2,i\beta_1\},\label{ginn1}\\
		&\Re\bigl(2i\grave{g}(k;\xi)\bigr)<0,
		\qquad k\in \tilde{\mathcal C}_{-1}\setminus\{-i\eta_2,-i\beta_1\},\label{ginn2}\\
		&\Re\bigl(2i\grave{g}(k;\xi)\bigr)<0,
		\qquad k\in \tilde{\mathcal C}_2\setminus\{i\eta_1,i\beta_1\},\label{ginn3}\\
		&\Re\bigl(2i\grave{g}(k;\xi)\bigr)>0,
		\qquad k\in \tilde{\mathcal C}_{-2}\setminus\{-i\eta_1,-i\beta_1\},\label{ginn4}
	\end{align}
	where the contours $\tilde{\mathcal C}_{\pm1}$ and
	$\tilde{\mathcal C}_{\pm2}$ are shown in Fig.~\ref{openinglenses2}.
	
	If $\xi>\xi_2$, then
	\begin{align}
		&\Re\bigl(2i\grave{g}(k;\xi)\bigr)<0,
		\qquad k\in {\mathcal C}_1\setminus\{i\eta_1,i\eta_2\},\label{ginn5}\\
		&\Re\bigl(2i\grave{g}(k;\xi)\bigr)>0,
		\qquad k\in {\mathcal C}_2\setminus\{-i\eta_2,-i\eta_1\},\label{ginn6}
	\end{align}
	where the contours ${\mathcal C}_1$ and ${\mathcal C}_2$ are shown in
	Fig.~\ref{openinglenses1}.
	
	If $\xi<\grave \xi$, then
	\begin{eqnarray}
		&&\Re \left[2 i\grave{g}(k)\right]  >0 \ \mbox{ for }k \in \mathcal{C}_{1} \setminus \{ i\eta_1, i\eta_{2} \},\label{ggn11}
		\\
		&&\Re \left[2i \grave{g}(k) \right]  <0 \ \mbox{ for }k \in \mathcal{C}_{2} \setminus \{ - i\eta_{2}, - i\eta_1 \}.\label{ggn12}
	\end{eqnarray} 
\end{lemma}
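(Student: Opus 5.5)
The plan follows the proof of Lemma~\ref{lemma3.1}, with the zero distribution of $\grave g'$ from the preceding proposition in place of Proposition~\ref{prop:zeros-hatg-xi-larger}. By the symmetry $\grave g(-k)=-\grave g(k)$ (up to the branch convention), which reflects the lens contours, it suffices to treat the upper half-plane. Write $\grave g_+(ib)=A(a,b)+iB(a,b)$ near the cut. By \eqref{eq:grave-g-jump-1}, $\grave g_+$ is real on $i(\eta_1,\eta_2)$, so $B(0,b)=0$ there. Cauchy--Riemann gives $\partial_aB|_{a=0}=\Im\grave g_+'(ib)$, and the sign of $\Re(2i\grave g)=-2\Im\grave g$ on a lens close to the cut is the sign of $-\partial_aB$ on the corresponding side. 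Using the factorization
\[
\grave g'(k;\xi)=\frac{\omega\,\xi\,(k^2+\beta_0^2)(k^2+\beta_1^2)(k^2+\beta_2^2)}{\left(k^2+\frac14\right)^2\sqrt{(k^2+\eta_1^2)(k^2+\eta_2^2)}},
\]
I would evaluate the sign on $k=ib$, $b\in(\eta_1,\eta_2)$. For $\xi>\grave\xi>0$ the leading coefficient $\xi$ is positive. Also $(k^2+\beta_0^2)<0$ and $(k^2+\beta_2^2)>0$ there, and $\left(\sqrt{\cdot}\right)_+$ is purely imaginary with a fixed sign. Hence the sign of $\Im\grave g_+'$ is determined by the sign of $(\beta_1^2-b^2)$.

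For $\xi\in(\grave\xi,\xi_2)$, with $\eta_1<\beta_1<\eta_2$, the sign of $\Im\grave g'_+$ changes exactly at $i\beta_1$. This yields opposite signs on $i(\beta_1,\eta_2)$ and $i(\eta_1,\beta_1)$, and placing $\tilde{\mathcal C}_1$ and $\tilde{\mathcal C}_2$ on the appropriate sides, as in Fig.~\ref{openinglenses2}, gives \eqref{ginn1} and \eqref{ginn3}. The orientation is fixed by comparing with the $\xi\in(0,\xi_*)$ computation in Lemma~\ref{lemma3.1}, which has exactly the same sign pattern. For $\xi>\xi_2$ we have $\beta_1>\eta_2$, so there is no sign change on the band and one sign holds throughout, which gives \eqref{ginn5}. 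The lower half-plane statements \eqref{ginn2}, \eqref{ginn4} and \eqref{ginn6} then follow by reflection.

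The case $\xi<\grave\xi$ is not covered by the proposition, so the zeros must be located again. I would write $P_\xi(s)=\grave\xi(s+\mu_0^2)(s+\eta_1^2)(s+\mu_1^2)+(\xi-\grave\xi)(s+\kappa)(s+\frac14)^2$, where now $\xi-\grave\xi<0$. Then $P_\xi(-\eta_1^2)>0$ and $P_\xi(-\eta_2^2)>0$, since both terms are positive there because $\kappa<\eta_1^2<\eta_2^2<\frac14<\mu_1^2$. By a continuity argument I would check that $P_\xi$ has no zero in $(-\eta_2^2,-\eta_1^2)$. For instance, on that interval the first term is $\ge0$, because $(s+\mu_0^2)<0$, $(s+\eta_1^2)\le0$ and $(s+\mu_1^2)>0$, while $(\xi-\grave\xi)(s+\kappa)(s+\frac14)^2>0$. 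So $P_\xi>0$ on the whole closed band, $\Im\grave g_+'$ keeps a single sign on $i(\eta_1,\eta_2)$, and this sign is the reverse of the $\xi>\xi_2$ case (compare $P_{\xi}(-\eta_2^2)<0$ there). Hence \eqref{ggn11} and \eqref{ggn12} hold, matching \eqref{gn11}--\eqref{gn12}.

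The main obstacle is going from the local sign near the cut to the whole lens contour. I would deform the lenses to stay within a thin neighborhood of the band, where the first-order Taylor expansion in $a$ controls the sign uniformly away from the endpoints and $i\beta_1$. Near those points I would use the square-root behavior \eqref{Eq1B} and the local expansion around the simple zero $i\beta_1$ to see that the lenses can leave each point inside the correct sectors.
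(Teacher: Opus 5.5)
Your proposal is correct and follows essentially the paper's route. The paper proves this lemma only by pointing to the argument of Lemma~\ref{lemma3.1}, namely reading off the sign of $\Im\grave g_+'$ on $i(\eta_1,\eta_2)$ from the zero distribution of $\grave g'$ and transferring it to nearby lens contours by symmetry, which is what you do. Your direct check that $P_\xi>0$ on $[-\eta_2^2,-\eta_1^2]$ for $\xi<\grave\xi$ (so the sign on the band is the reverse of the $\xi>\xi_2$ case) fills in the one range the paper leaves implicit.
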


\section{Behavior of the potential $u(x,t)$ as $t\rightarrow +\infty$}\label{sec4}
In this paper we mainly focus on the Case I, the Case II can be treated in a similar way.
Since $r(k), \rho(k)\neq 0$, one of the off-diagonal terms of the jump matrix
$V_X(k)$ is always growing. To overcome this problem, we first introduce two
new triangular factorizations for $V_X(k)$ on $i(\eta_1,\eta_2)$:
\begin{align}
	V_X(k)
	&=\begin{pmatrix}
		1&0\\
		\scriptstyle {-i\frac{r(k)\rho(k)-1}{2\rho(k)}e^{2i\theta(k)}}&1
	\end{pmatrix}
	\begin{pmatrix}
		0&\scriptstyle {-i\frac{2\rho(k)}{1+r(k)\rho(k)}e^{-2i\theta(k)}}\\
		\scriptstyle {-i\le(\frac{2\rho(k)}{1+r(k)\rho(k)}\ri)^{-1}e^{2i\theta(k)}}&0
	\end{pmatrix}
	\begin{pmatrix}
		1&0\\
		\scriptstyle {-i\frac{r(k)\rho(k)-1}{2\rho(k)}e^{2i\theta(k)}}&1
	\end{pmatrix},\label{factor11}\\
	&=\begin{pmatrix}
		1&\scriptstyle {-i\frac{r(k)\rho(k)-1}{2r(k)}e^{-2i\theta(k)}}\\
		0&1
	\end{pmatrix}
	\begin{pmatrix}
		0&\scriptstyle {-i\le(\frac{2r(k)}{1+r(k)\rho(k)}\ri)^{-1}e^{-2i\theta(k)}}\\
		\scriptstyle {-i\frac{2r(k)}{1+r(k)\rho(k)}e^{2i\theta(k)}}&0
	\end{pmatrix}
	\begin{pmatrix}
		1&\scriptstyle {-i\frac{r(k)\rho(k)-1}{2r(k)}e^{-2i\theta(k)}}\\
		0&1
	\end{pmatrix}.\label{factor12}
\end{align}
Similarly, on $i(-\eta_2,-\eta_1)$, we have
\begin{align}
	V_X(k)
	&=\begin{pmatrix}
		1&\scriptstyle {i\frac{r(-k)\rho(-k)-1}{2\rho(-k)}e^{-2i\theta(k)}}\\
		0&1
	\end{pmatrix}
	\begin{pmatrix}
		0&\scriptstyle {i\le(\frac{2\rho(-k)}{1+r(-k)\rho(-k)}\ri)^{-1}e^{-2i\theta(k)}}\\
		\scriptstyle {i\frac{2\rho(-k)}{1+r(-k)\rho(-k)}e^{2i\theta(k)}}&0
	\end{pmatrix}
	\begin{pmatrix}
		1&\scriptstyle {i\frac{r(-k)\rho(-k)-1}{2\rho(-k)}e^{-2i\theta(k)}}\\
		0&1
	\end{pmatrix},\label{factor21}\\
	&=\begin{pmatrix}
		1&0\\
		\scriptstyle {i\frac{r(-k)\rho(-k)-1}{2r(-k)}e^{2i\theta(k)}}&1
	\end{pmatrix}
	\begin{pmatrix}
		0&\scriptstyle {i\frac{2r(-k)}{1+r(-k)\rho(-k)}e^{-2i\theta(k)}}\\
		\scriptstyle {i\le(\frac{2r(-k)}{1+r(-k)\rho(-k)}\ri)^{-1}e^{2i\theta(k)}}&0
	\end{pmatrix}
	\begin{pmatrix}
		1&0\\
		\scriptstyle {i\frac{r(-k)\rho(-k)-1}{2r(-k)}e^{2i\theta(k)}}&1
	\end{pmatrix}.\label{factor22}
\end{align}
\subsection{The region $\xi>\xi_{*}$}\label{sec4.1}
By using the factorization \eqref{factor11} and \eqref{factor21}, we first open lens around $i(\eta_1,\eta_2)\cup i(-\eta_2,-\eta_1)$ by defining a new vector-valued function $Y(k)$:
\begin{align}
	Y(k)=X(k)\begin{cases}
		\begin{pmatrix} {1}& 0\\   {i  \dfrac{r(k)\rho(k)-1}{2\rho(k)} e^{2 i\theta(k)} }  & {1} \end{pmatrix},\quad k\in \text{lens left of $i(\eta_1,\eta_2)$ },\\
		\begin{pmatrix} {1}& 0\\   {-i  \dfrac{r(k)\rho(k)-1}{2\rho(k)} e^{2 i\theta(k)} }  & {1} \end{pmatrix},\quad k\in \text{lens right of $i(\eta_1,\eta_2)$ },\\
		\begin{pmatrix} 1 &    -i\dfrac{r(-k)\rho(-k)-1}{2\rho(-k)}e^{-2i\theta(k)} \\ 0 & 1\end{pmatrix}, \quad k\in \text{lens left of $i(-\eta_2,-\eta_1)$ },\\
		\begin{pmatrix} 1 &    i\dfrac{r(-k)\rho(-k)-1}{2\rho(-k)}e^{-2i\theta(k)} \\ 0 & 1\end{pmatrix}, \quad k\in \text{lens right of $i(-\eta_2,-\eta_1)$ },
	\end{cases}
\end{align}
where the lenses are shown in Fig.~\ref{openinglenses1}. Then $Y(k)$ has the jump conditions:
\begin{align}
	Y_{+}(k)=Y_{-}(k)\begin{cases}
		\begin{pmatrix} {1}& 0\\   {-i  \dfrac{r(k)\rho(k)-1}{2\rho(k)} e^{2 i\theta(k)} }  & {1} \end{pmatrix},\quad k\in \mathcal{C}_1,\\
		\begin{pmatrix} 1 &    i\dfrac{r(-k)\rho(-k)-1}{2\rho(-k)}e^{-2i\theta(k)} \\ 0 & 1\end{pmatrix}, \quad k\in\mathcal{C}_{2},\\
		\begin{pmatrix} 0 &   {-i  \dfrac{2\rho(k)}{1+r(k)\rho(k)}e^{-2 i\theta(k)} }\\  -i  \le(\dfrac{2\rho(k)}{1+r(k)\rho(k)}\ri)^{-1}e^{2 i\theta(k)}  & 0 \end{pmatrix},\quad k\in i(\eta_1,\eta_2),\\
		\begin{pmatrix} 0 &  i\le(\dfrac{2\rho(-k)}{1+r(-k)\rho(-k)}\ri)^{-1}e^{-2i\theta(k)}\\   i\dfrac{2\rho(-k)}{1+r(-k)\rho(-k)}e^{2i\theta(k)} & 0\end{pmatrix},\quad k\in i(-\eta_2,-\eta_1).
	\end{cases}
\end{align}
We introduce the following scalar function:
\begin{align}
	f(k)=\exp\left\{
	\frac{R(k)}{2\pi i}
	\left(
	\int_{i\eta_1}^{i\eta_2}
	\frac{\log \dfrac{2\rho(\zeta)}{1+r(\zeta)\rho(\zeta)}}
	{R_+(\zeta)(\zeta-k)}\,d\zeta
	-\int_{-i\eta_2}^{-i\eta_1}
	\frac{\log \dfrac{2\rho(-\zeta)}{1+r(-\zeta)\rho(-\zeta)}}
	{R_+(\zeta)(\zeta-k)}\,d\zeta
	+\int_{-i\eta_1}^{i\eta_1}
	\frac{i\Delta}{R(\zeta)(\zeta-k)}\,d\zeta
	\right)
	\right\}.
\end{align}
Here $R^{2}(k)=(k^2+\eta_1^2)(k^2+\eta_2^2)$ and the constant $\Delta$ is chosen as
\begin{align}
	\Delta
	&=
	i\left[
	\int_{i\eta_1}^{i\eta_2}
	\frac{\log \dfrac{2\rho(\zeta)}{1+r(\zeta)\rho(\zeta)}}{R_+(\zeta)}\,d\zeta
	-\int_{-i\eta_2}^{-i\eta_1}
	\frac{\log \dfrac{2\rho(-\zeta)}{1+r(-\zeta)\rho(-\zeta)}}{R_+(\zeta)}\,d\zeta
	\right]
	\left[
	\int_{-i\eta_1}^{i\eta_1}\frac{1}{R(\zeta)}\,d\zeta
	\right]^{-1} 
	\in \mathbb R .
\end{align}
Since
\[
\frac{2\rho(\zeta)}{1+r(\zeta)\rho(\zeta)}>0,
\qquad \zeta\in i(\eta_1,\eta_2),
\]
the logarithm is real-valued on $i(\eta_1,\eta_2)$. Therefore, by the symmetry of the two cuts and the above choice of branches, one obtains
$\Delta\in \mathbb R$.

It follows directly from the Plemelj formula that $f(k)$ satisfies the following scalar RH conditions:
\begin{align}
	&f_+(k)f_-(k)
	=
	\frac{2\rho(k)}{1+r(k)\rho(k)},
	\qquad
	k\in i(\eta_1,\eta_2),\\
	&f_+(k)f_-(k)
	=
	\left(
	\frac{2\rho(-k)}{1+r(-k)\rho(-k)}
	\right)^{-1},
	\qquad
	k\in i(-\eta_2,-\eta_1),\\
	&\frac{f_+(k)}{f_-(k)}
	=
	e^{i\Delta},
	\qquad
	k\in i[-\eta_1,\eta_1],\\
	&f(k)=1+\mathcal O\left(\frac{1}{k}\right),
	\qquad k\to\infty .
\end{align}

\begin{figure}[htbp]
	\centering
	\begin{tikzpicture}[
		scale=0.5,
		>=stealth,
		line cap=round,
		line join=round,
		contour/.style={
			line width=1.2pt,
			postaction={decorate},
			decoration={markings,mark=at position 0.5 with {\arrow{stealth}}}
		},
		dashedline/.style={dashed,line width=0.9pt}
		]
		
		\coordinate (A) at (0,6);   
		\coordinate (B) at (0,2);   
		\coordinate (C) at (0,-2);  
		\coordinate (D) at (0,-6);  
		
		\draw[dashedline] (0,7.2) -- (0,6.3);
		\draw[dashedline] (0,-7.2) -- (0,-6.3);
		
		\draw[contour] (D) -- (C);
		\draw[contour] (C) -- (B);
		\draw[contour] (B) -- (A);
		
		\draw[contour] (B) .. controls (-2.0,3.2) and (-2.0,4.8) .. (A);
		\draw[contour] (B) .. controls ( 2.0,3.2) and ( 2.0,4.8) .. (A);
		
		\draw[contour] (D) .. controls (-1.8,-4.8) and (-1.8,-3.2) .. (C);
		\draw[contour] (D) .. controls ( 1.8,-4.8) and ( 1.8,-3.2) .. (C);
		
		\fill (A) circle (2.5pt);
		\fill (B) circle (2.5pt);
		\fill (C) circle (2.5pt);
		\fill (D) circle (2.5pt);
		
		\node[right] at (A) {$i\eta_2$};
		\node[right] at (B) {$i\eta_1$};
		\node[right] at (C) {$-i\eta_1$};
		\node[right] at (D) {$-i\eta_2$};
		
		
		\node at (-3.1,4.2) {$\mathcal{C}_1$};
		\node at ( 3.1,4.2) {$\mathcal{C}_1$};
		\node at (-3.0,-4.0) {$\mathcal{C}_2$};
		\node at ( 3.0,-4.0) {$\mathcal{C}_2$};
		
	\end{tikzpicture}
	\caption{Opening lenses. All contours are oriented upward.}
	\label{openinglenses1}
\end{figure}
We introduce the following new vector function
\begin{equation}\label{EqD}
\begin{array}{l}
T(k)=Y(k)e^{i (t \hat{g}(k)-\theta(k))\sigma_3}f(k)^{\sigma_3},
\end{array}
\end{equation}
It is easy to see that  $T(k)$ determined by \eqref{EqD} satisfies the RH  problem:
\begin{RHP}\label{RHP3}
	Find a $1\times2$ vector-valued function ${T}(k;x,t)$ with the following properties:
	\begin{enumerate}
		\item
		${T}(k;x,t)$ is analytic for
		\[
		k\in \C\setminus \bigl(i[-\eta_2,\eta_2]\cup \mathcal{C}_1\cup \mathcal{C}_{2}\bigr).
		\]
		
		\item
		For $k\in i[-\eta_2,\eta_2]\cup \mathcal{C}_1\cup \mathcal{C}_{2}$, the boundary values ${T}_{\pm}(k)$ satisfy the jump condition
		\begin{align}
			{T}_+(k)={T}_-(k)V_{T}(k),
		\end{align}
		where
		\begin{equation}
			V_{T}(k)=
			\begin{cases}
				\begin{pmatrix}
					1&0\\[1mm]
					-i\dfrac{r(k)\rho(k)-1}{2\rho(k)}
					f^2(k)e^{2it\hat g(k)}
					&1
				\end{pmatrix},
				& k\in\mathcal{C}_1,
				\\[5mm]
				\begin{pmatrix}
					1&
					i\dfrac{r(-k)\rho(-k)-1}{2\rho(-k)}
					f^{-2}(k)e^{-2it\hat g(k)}
					\\[1mm]
					0&1
				\end{pmatrix},
				& k\in\mathcal{C}_{2},
				\\[5mm]
				\begin{pmatrix}
					0&-i\\
					-i&0
				\end{pmatrix},
				& k\in i(\eta_1,\eta_2),
				\\[5mm]
				\begin{pmatrix}
					0&i\\
					i&0
				\end{pmatrix},
				& k\in i(-\eta_2,-\eta_1),
				\\[5mm]
				\begin{pmatrix}
					e^{it\Omega+i\Delta}&0\\
					0&e^{-it\Omega-i\Delta}
				\end{pmatrix},
				& k\in[-\eta_1,\eta_1].
			\end{cases}
		\end{equation}
		
		\item
		As $k\to\infty$,
		\[
		{T}(k)
		=
		\begin{pmatrix}
			1&1
		\end{pmatrix}
		+\mathcal{O}\Big(\frac1k\Big).
		\]
	\end{enumerate}
\end{RHP}
By Lemma~\ref{lemma3.1}, the off-diagonal entries of the jump matrices for $T(k)$ decay exponentially fast as $t\to+\infty$ on both the left and right lens boundaries. Consequently, one is naturally led to the following model RH problem.
\begin{RHP}\label{RHP4}
	Find a $1\times2$ vector-valued function ${T}^{\infty}(k;x,t)$ with the following properties:
	\begin{enumerate}
		\item
		${T}(k;x,t)$ is analytic for
		$
		k\in \C\setminus i[-\eta_2,\eta_2].
		$
		\item
		For $k\in i[-\eta_2,\eta_2]$, the boundary values ${T}^{\infty}_{\pm}(k)$ satisfy the jump condition
\begin{gather}
T^{\infty}_+(k) = T^{\infty}_-(k)
\begin{cases}
\begin{pmatrix} e^{i(t\hat\Omega+\Delta)} &0 \\ 0 & e^{-i(t\hat\Omega+\Delta)}\end{pmatrix},  & k \in i[-\eta_1,\eta_1],  \\
 \begin{pmatrix}0 & -i\\ -i & 0 \end{pmatrix}, &k \in i(\eta_1,\eta_2),  \\
  \begin{pmatrix}0 & i\\ i & 0 \end{pmatrix}, &k \in i(-\eta_2,-\eta_1),  \\
\end{cases}
\end{gather}
\item
\begin{gather}
 T^{\infty}(k)=\begin{pmatrix} 1& 1\end{pmatrix}+\mathcal{O}\le(\frac{1}{k}\ri),\quad k\to\infty.
\end{gather}
\end{enumerate}
\end{RHP}
To construct the solution of the model RH problem~\ref{RHP4}, we first introduce the elliptic
Riemann surface naturally associated with the function \(R(k)\). Let
\[
\mathfrak X=\left\{(k,\eta)\in\mathbb C^2:\;
\eta^2=R^2(k)=(k^2+\eta_1^2)(k^2+\eta_2^2)\right\}.
\]
This is a two-sheeted Riemann surface of genus one. Throughout this section,
the first sheet is fixed by the condition
$
R(k)>0, k\in\mathbb R .
$
We choose the canonical homology basis \((\mathscr A,\mathscr B)\) as shown
in Figure~\ref{frakX} and introduce the normalized holomorphic differential
\[
\hat\omega
=\left(4\int_0^{i\eta_1}\frac{\mathrm d k}{R(k)}\right)^{-1}
\frac{\mathrm d k}{R(k)}
=\frac{\eta_2}{4iK(\hat m)}\frac{\mathrm d k}{R(k)},
\qquad
\hat m=\frac{\eta_1}{\eta_2}.
\]
With this normalization one has
\[
\oint_{\mathscr A}\hat\omega=1.
\]
The corresponding \(\mathscr B\)-period is
\[
\tau=\oint_{\mathscr B}\hat\omega
=\frac{i}{2}\frac{K\bigl(\sqrt{1-\hat m^2}\bigr)}{K(\hat m)} .
\]
We shall use the Jacobi theta function
\[
\vartheta_3(z;\tau)=\sum_{n\in\mathbb Z}
\exp\left(2\pi i n z+\pi i n^2\tau\right),
\qquad z\in\mathbb C,
\]
which is even in \(z\) and satisfies the quasi-periodicity relation
\[
\vartheta_3(z+h+\lambda\tau;\tau)
=
e^{-\pi i\lambda^2\tau-2\pi i\lambda z}
\vartheta_3(z;\tau),
\qquad h,\lambda\in\mathbb Z .
\]
Define the Abel map
\[
A(k)=\int_{i\eta_2}^{k}\hat\omega,
\qquad
k\in\mathbb C\setminus i[-\eta_2,\eta_2],
\]
where the path is taken on the first sheet. A direct computation gives
\[
A(\infty)=-\frac14,\qquad
A_+(i\eta_1)=-\frac{\tau}{2},\qquad
A_+(-i\eta_1)=-\frac12-\frac{\tau}{2},\qquad
A_+(-i\eta_2)=-\frac12 .
\]
Moreover, the boundary values of \(A\) satisfy
\[
\begin{aligned}
	&A_+(k)+A_-(k)=0,
	&& k\in\Sigma_1,\\
	&A_+(k)-A_-(k)=-\tau,
	&& k\in i[-\eta_1,\eta_1],\\
	&A_+(k)+A_-(k)=-1,
	&& k\in\Sigma_2 .
\end{aligned}
\]
We are now in a position to write down the solution of the model problem.
	The solution \(T^\infty(k)\) of the model RH problem~\ref{RHP4} is given by
	\[
	\begin{aligned}
		T^\infty(k)
		&=
		\gamma(k)
		\frac{\vartheta_3(0;2\tau)}
		{\vartheta_3\left(\dfrac{t\hat\Omega+\Delta}{2\pi};2\tau\right)}
		\begin{pmatrix}
			\displaystyle
			\frac{
				\vartheta_3\left(
				2A(k)+\dfrac{t\hat\Omega+\Delta}{2\pi}-\dfrac12;2\tau
				\right)}
			{
				\vartheta_3\left(
				2A(k)-\dfrac12;2\tau
				\right)}
			&
			\displaystyle
			\frac{
				\vartheta_3\left(
				-2A(k)+\dfrac{t\hat\Omega+\Delta}{2\pi}-\dfrac12;2\tau
				\right)}
			{
				\vartheta_3\left(
				-2A(k)-\dfrac12;2\tau
				\right)}
		\end{pmatrix},
	\end{aligned}
	\]
	where
	\[
	\gamma(k)=
	\left(\frac{k^2+\eta_1^2}{k^2+\eta_2^2}\right)^{1/4},
	\qquad
	\gamma(k)\to1,\quad k\to\infty .
	\]
\begin{figure}
\centering
\scalebox{.9}{
\begin{tikzpicture}[>=stealth]
\path (0,0) coordinate (O);

\draw (-4,-1) -- (5,-1);
\draw (-2,1) -- (7,1);
\draw (-4,-1) -- (-2,1);
\draw (5,-1) -- (7,1);
\node at (5.4,-0.1) {$\times$};
\node[above] at (5.5,-0.1) {$\infty^-$};

\draw (-4, 3) -- (5,3);
\draw (-2,5) -- (7,5);
\draw (-4,3) -- (-2,5);
\draw (5,3) -- (7,5);
\node at (5.4,3.9) {$\times$};
\node[above] at (5.5,3.9) {$\infty^+$};

\draw (-2,0) -- (.5,0);
\draw (1.5,0) -- (4,0);

\draw (-2,4) -- (.5,4);
\draw (1.5,4) -- (4,4);

\draw[dashed, black!30] (-2,0) -- (-2,4);
\draw[dashed, black!30] (.5,0) -- (.5,4);
\draw[dashed, black!30] (1.5,0) -- (1.5,4);
\draw[dashed, black!30] (4,0) -- (4,4);

\draw[fill] (-2,0) circle [radius=0.025];
\node[below ] at (-2,0) {\tiny $-i\eta_2$};
\draw[fill] (0.5,0) circle [radius=0.025];
\node[below ] at (0.5,0) {\tiny $-i\eta_1$};
\draw[fill] (1.5,0) circle [radius=0.025];
\node[below ] at (1.5,0) {\tiny $i\eta_1$};
\draw[fill] (4,0) circle [radius=0.025];
\node[below ] at (4,0) {\tiny $i\eta_2$};

\draw[fill] (-2,4) circle [radius=0.025];
\node[above ] at (-2,4) {\tiny $-i\eta_2$};
\draw[fill] (0.5,4) circle [radius=0.025];
\node[above ] at (0.5,4) {\tiny $-i\eta_1$};
\draw[fill] (1.5,4) circle [radius=0.025];
\node[above ] at (1.5,4) {\tiny $i\eta_1$};
\draw[fill] (4,4) circle [radius=0.025];
\node[above ] at (4,4) {\tiny $i\eta_2$};

\draw[->- = .25, red] (-1,4) .. controls + (70:.5cm) and + (70:.5cm) .. (2.5,4);
\draw[->- = .25, red] (2.5,0) .. controls + (-110:.5cm) and + (-110:.5cm) .. (-1,0);
\draw[red!30, dashed] (-1,0) -- (-1,4);
\draw[red!30, dashed] (2.5,0) -- (2.5,4);
\node[above, red] at (1,4.3) {\small $\mathscr{A}$};

\draw[->- = .7, blue] (1,4) .. controls + (70:.5cm) and + (70:.5cm) .. (4.5,4);
\draw[->- = .7, blue] (4.5,4) .. controls + (-110:.5cm) and + (-110:.5cm) .. (1,4);
\node[blue, above] at (3.25,4.3) {\small $\mathscr{B}$};
\end{tikzpicture}
}
\caption{The homology basis for the genus-$1$ Riemann surface $\mathfrak{X}$ .}
\label{frakX}
\end{figure}
We also need to define a matrix-valued solution for later use.
Define
\begin{equation}
p(k)=i\int_{i\eta_2}^{k}\frac{\zeta^2+\kappa}
{\sqrt{(\zeta^2+\eta_1^2)(\zeta^2+\eta_2^2)}}\d \zeta,\ \ \ p(k)\to ik,\ k\to\infty,
\end{equation}
which satisfies the relations
\begin{equation}
\label{rel1}
p_+(k)+p_-(k)=0,\quad k\in i(\eta_1,\eta_2)\cup i(-\eta_2,-\eta_1),
\end{equation}
\begin{equation}
\label{rel2}
p_+(k)-p_-(k)
=
-i\frac{\pi\eta_2}{K(\hat m)},\quad k\in i(-\eta_1,\eta_1).
\end{equation}
Here, we use the same technology in \cite{Girotti-1}. Set
\begin{equation}
\label{P_infinity0}
P^{\infty}(k)=\frac{1}{2}\begin{pmatrix}
(1+\frac{p(k)}{ik})T^{\infty}_1+\dfrac{1}{ik}T^{\infty}_{1y}
&(1-\frac{p(k)}{ik})T^{\infty}_2+\dfrac{1}{ik}T^{\infty}_{2y}\\
(1-\frac{p(k)}{ik})T^{\infty}_1-\dfrac{1}{ik}T^{\infty}_{1y}&
(1+\frac{p(k)}{ik})T^{\infty}_2-\dfrac{1}{ik}T^{\infty}_{2y}
\end{pmatrix},
\end{equation}
where $T^{\infty}_1$ and $T^{\infty}_2$ denote the first and second entries of the vector-valued function $T^{\infty}$ and 
 \[
T^{\infty}_{1y}(k):=\gamma(k)\dfrac{\vartheta_3(0;2\tau)}{\vartheta_3 \le(2 A(k)  -\frac{1}{2};2\tau\ri)}
\frac{-\eta_2}{2K(\hat m)}\dfrac{\d}{\d z}\left[ \frac{\vartheta_3 \le(z+2A(k) +\frac{t \hat{\Omega}+{\Delta}}{2\pi }-\frac{1}{2};2\tau\ri)}{\vartheta_3\le(z+ \frac{t \hat{\Omega}+{\Delta}}{2\pi };2\tau\ri)}\right]\bigg|_{z=0} \ ,
\]

\[
{T}^{\infty}_{2y}(k):=\gamma(k)\dfrac{\vartheta_3(0;2\tau)}{\vartheta_3 \le(-2 A(k)  -\frac{1}{2};2\tau\ri)}
\frac{-\eta_2}{2K(\hat m)}\dfrac{\d}{\d z}\left[ \frac{\vartheta_3 \le(z-2 A(k) +\frac{t \hat{\Omega}+{\Delta}}{2\pi }-\frac{1}{2};2\tau\ri)}{\vartheta_3\le(z+ \frac{t \hat{\Omega}+{\Delta}}{2\pi };2\tau\ri)}\right]\bigg|_{z=0} \ .
\]
It follows from the construction that \(P^\infty(k)\) has the same jumps as
\(T^\infty(k)\). Moreover, its normalization at infinity is
\[
P^\infty(k)=I+\mathcal O(k^{-1}),\qquad k\to\infty .
\]
By the same argument as in \cite{Girotti-1}, one obtains
\[
\det P^\infty(k)\equiv1.
\]
Consequently, \(P^\infty(k)\) is invertible in its domain of analyticity.

We next recall the local behavior of the \(g\)-function at the branch points.
More precisely,
\[
\hat g_+(k)=\mathcal O\!\left((k\mp i\eta_2)^{1/2}\right),
\qquad k\to\pm i\eta_2,
\]
and
\[
\hat g_+(k)-\frac{\hat\Omega}{2}
=\mathcal O\!\left((k\mp i\eta_1)^{1/2}\right),
\qquad k\to\pm i\eta_1 .
\]
Therefore, in small neighborhoods of the four endpoints
\(\pm i\eta_1\) and \(\pm i\eta_2\), the local parametrices can be constructed
in the standard way by matching \(T(k)\) with the modified Bessel
model.

Combining the global parametrix, the four local parametrices, and the usual
small-norm estimates, as in \cite{Girotti-1,GYJ}, we obtain the following
expansion near \(k=i/2\):
\[
T(k)=
\left(
\mathcal E^{(0)}(y,t)
+\mathcal E^{(1)}(y,t)\left(k-\frac{i}{2}\right)
+\mathcal O\!\left(\left(k-\frac{i}{2}\right)^2\right)
\right)P^\infty(k),
\qquad k\to\frac{i}{2},
\]
where
\[
\mathcal E^{(0)}(y,t)=\begin{pmatrix}
	1&1
\end{pmatrix}+\mathcal O(t^{-1}),
\qquad
\mathcal E^{(1)}(y,t)=\mathcal O(t^{-1}).
\]

Undoing the transformations introduced above, we recover the original
RH problem unknown. Since \(X(k)=Y(k)\) for $k$ near $\frac{i}{2}$, we have
\[
X(k)
=
T(k)
\exp\{-i(t\hat g(k)-\theta(k))\sigma_3\}
f(k)^{-\sigma_3}.
\]
Hence, as \(k\to i/2\),
\[
\begin{aligned}
	X(k)
	&=
	\left(
	\mathcal E^{(0)}(y,t)
	+\mathcal E^{(1)}(y,t)\left(k-\frac{i}{2}\right)
	+\mathcal O\!\left(\left(k-\frac{i}{2}\right)^2\right)
	\right)
	P^\infty(k) \\
	&\quad \times
	\exp\{-i(t\hat g(k)-\theta(k))\sigma_3\}
	f(k)^{-\sigma_3}.
\end{aligned}
\]
Using the reconstruction formula \eqref{u-recover}, we then obtain
\[
e^{x-y}
=
\frac{T^\infty_1\!\left(y,t;\frac{i}{2}\right)}
{T^\infty_2\!\left(y,t;\frac{i}{2}\right)}
\exp\left\{-2i\left(t\hat g\!\left(\frac{i}{2}\right)
-\theta\!\left(\frac{i}{2}\right)\right)\right\}
f\!\left(\frac{i}{2}\right)^{-2}
\left(1+\mathcal O(t^{-1})\right),
\]
and
\[
\sqrt{\frac{m}{\omega}}
\left(
1+\frac{2i}{\omega}u(x,t)\left(k-\frac{i}{2}\right)
+\mathcal O\!\left(\left(k-\frac{i}{2}\right)^2\right)
\right)
=
T^\infty_1(y,t;k)T^\infty_2(y,t;k)
\left(1+\mathcal O(t^{-1})\right).
\]
	For each asymptotic sector in the \((y,t)\)-plane,
	set 
	\begin{align}\label{Phi}
	\Phi(\xi)=\xi-\frac{2i}{\omega}
	\lim_{k\to i/2}\bigl(\hat g(k;\xi)-\omega k\xi+\frac{2\omega k}{1+4k^2}\bigr).
	\end{align}
	Then the reconstruction formula at \(k=i/2\) gives
	\[
	\frac{x}{\omega t}=\Phi(\xi)+\mathcal O(t^{-1}),\qquad t\to+\infty.
	\]Consequently, if a sector in the \((y,t)\)-plane is described by
	\(\xi_a<\xi<\xi_b\), then its image in the \((x,t)\)-plane is described, up to \(\mathcal O(1)\)-shifts of the boundary curves, by
	\[
	\Phi(\xi_a)<\frac{x}{\omega t}<\Phi(\xi_b).
	\]
We therefore arrive at the following asymptotic result.
\begin{thm}\label{thm:elliptic-asymptotics}
	As \(t\to+\infty\), the full CH soliton gas \(u(x,t)\) admits the following asymptotic description. In the elliptic region
	\[
	\xi>\xi_*,
	\]
	or equivalently, in terms of the physical variable \(x\),
	\[
	\frac{x}{\omega t}>\Phi(\xi_*),
	\]
	one has
	\begin{equation}\label{uy}
		u(x,t)
		=
		\frac{4\omega(\eta_2^2-\eta_1^2)}
		{(1-4\eta_2^2)(1-4\eta_1^2)}
		-
		\frac{\eta_2\omega}{8K(\hat m)R(\frac{i}{2})}
		\frac{\hat F'\!\left(A(\frac{i}{2})\right)}
		{\hat F\!\left(A(\frac{i}{2})\right)}
		+\mathcal O(t^{-1}),
	\end{equation}
	where
	\begin{equation}\label{xy-relation}
		\begin{aligned}
			x
			&=
			y
			-2i\left(t\hat g-\theta\right)\!\left(\frac{i}{2}\right)
			+\log f^{-2}\!\left(\frac{i}{2}\right)
			+\log
			\hat F(A(\frac i2))
			+\mathcal O(t^{-1}).
		\end{aligned}
	\end{equation}
	Moreover,
	\begin{equation}\label{m-asymptotics}
		\begin{aligned}
			\sqrt{\frac{m}{\omega}}
			&=
			\sqrt{\frac{1-4\eta_1^2}{1-4\eta_2^2}}\,
			\frac{\vartheta_3^2(0;2\tau)}
			{\vartheta_3^2\!\left(\dfrac{t\hat\Omega+\Delta}{2\pi};2\tau\right)}
		\hat F(A(\frac i2))
			+\mathcal O(t^{-1}).
		\end{aligned}
	\end{equation}
	Here
	\begin{equation}\label{F-def}
		\begin{aligned}
			\hat F(U)
			&=
			\frac{
				\vartheta_3\!\left(
				2U+\dfrac{t\hat\Omega+\Delta}{2\pi}-\dfrac12;2\tau
				\right)}
			{
				\vartheta_3\!\left(
				2U-\dfrac12;2\tau
				\right)}
			\frac{
				\vartheta_3\!\left(
				-2U+\dfrac{t\hat\Omega+\Delta}{2\pi}-\dfrac12;2\tau
				\right)}
			{
				\vartheta_3\!\left(
				-2U-\dfrac12;2\tau
				\right)} ,
		\end{aligned}
	\end{equation}
	and \(F'\) denotes differentiation with respect to \(U\).
\end{thm}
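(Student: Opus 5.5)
The plan is to read Theorem~\ref{thm:elliptic-asymptotics} off the expansion of $T(k)$ at $k=i/2$ obtained above, together with the reconstruction formula \eqref{u-recover}. We already have $T(k)=(\mathcal E^{(0)}+\mathcal E^{(1)}(k-\tfrac i2)+\cdots)P^\infty(k)$ with $\mathcal E^{(0)}=(1\ 1)+\mathcal O(t^{-1})$ and $\mathcal E^{(1)}=\mathcal O(t^{-1})$. Since $(1\ 1)P^\infty(k)=T^\infty(k)$ by \eqref{P_infinity0}, this gives $T(k)=T^\infty(k)(1+\mathcal O(t^{-1}))$ uniformly near $i/2$, entrywise and together with the first $k$-derivative. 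This control is legitimate because $i/2$ lies at positive distance from $i[-\eta_2,\eta_2]$, from the lenses $\mathcal C_{1,2}$, and from the four Bessel disks. The inequalities \eqref{inn5}--\eqref{inn6} of Lemma~\ref{lemma3.1} make the lens jumps exponentially small, so the small-norm residual is governed by the $\mathcal O(t^{-1})$ mismatch on the Bessel disk boundaries.

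\textbf{Step 1: the product $X_1X_2$ and $m$.} In the product $X_1X_2$ the factors $e^{\mp i(t\hat g-\theta)}$ and $f^{\mp1}$ cancel, so $X_1X_2=T_1T_2=T^\infty_1T^\infty_2(1+\mathcal O(t^{-1}))$. Evaluating at $k=i/2$ gives $\gamma(i/2)^2=\sqrt{(1-4\eta_1^2)/(1-4\eta_2^2)}$. The theta prefactors give $\vartheta_3^2(0;2\tau)/\vartheta_3^2(\tfrac{t\hat\Omega+\Delta}{2\pi};2\tau)$, and the remaining theta quotients assemble into $\hat F(A(i/2))$ by \eqref{F-def}. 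This yields \eqref{m-asymptotics}.

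\textbf{Step 2: the ratio $X_1/X_2$ and the relation \eqref{xy-relation}.} Taking $\log$ of $X_1/X_2=e^{x-y}$ at $k=i/2$ produces the factors $e^{-2i(t\hat g-\theta)(i/2)}$ and $f(i/2)^{-2}$. Here the regularity condition \eqref{7.7} guarantees that $(t\hat g-\theta)(i/2)$ is finite. The ratio $T_1^\infty/T_2^\infty$ at $i/2$ then has to be rewritten as $\hat F(A(\tfrac i2))$ times constants. I expect to do this using the evenness of $\vartheta_3$ and the fact that, for $k\in i\mathbb R$ off the cuts, $A(k)$ is real modulo periods, so that the two entries are related by $U\mapsto-U$. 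This gives \eqref{xy-relation}. Dividing by $\omega t$ and using \eqref{Phi} recovers $x/(\omega t)=\Phi(\xi)+\mathcal O(t^{-1})$, which translates the sector $\xi>\xi_*$ into $x/(\omega t)>\Phi(\xi_*)$.

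\textbf{Step 3: extracting $u$, the main obstacle.} The potential $u$ is the coefficient of $(k-\tfrac i2)$ in $X_1X_2$ divided by $\sqrt{m/\omega}$, that is,
\[
\frac{2i}{\omega}u=\frac{d}{dk}\log\bigl(T_1^\infty T_2^\infty\bigr)\Big|_{k=i/2}+\mathcal O(t^{-1}).
\]
There are two contributions. The first is $\tfrac{d}{dk}\log\gamma^2$ at $i/2$, which equals
\[
\frac{d}{dk}\tfrac12\log\frac{k^2+\eta_1^2}{k^2+\eta_2^2}\Big|_{i/2}=\frac{i}{2}\Bigl(\frac{1}{\eta_1^2-\tfrac14}-\frac{1}{\eta_2^2-\tfrac14}\Bigr);
\]
after multiplying by $\omega/(2i)$ this should produce the constant $4\omega(\eta_2^2-\eta_1^2)/((1-4\eta_2^2)(1-4\eta_1^2))$, and I will check the sign carefully. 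The second contribution comes from the theta factors, via the chain rule $\tfrac{d}{dk}\hat F(A(k))=\hat F'(A)\,A'(k)$ with $A'(i/2)=\eta_2/(4iK(\hat m)R(i/2))$. Multiplying by $\omega/(2i)$ gives the term $-\tfrac{\eta_2\omega}{8K(\hat m)R(i/2)}\hat F'/\hat F$.

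The delicate points all lie in this step. First, the branch and sign of $R(i/2)$ on the first sheet must be fixed. Second, the $\mathcal O(t^{-1})$ error must survive differentiation in $k$; for this I would use the Cauchy estimate on a small circle around $i/2$. Third, the theta quotients must stay bounded away from zero, which requires $\vartheta_3(2A(i/2)-\tfrac12;2\tau)\neq0$ and $\hat F(A(i/2))\neq0$. For this I would argue via reality: $T^\infty$ is nonvanishing on $i\mathbb R$ off the cuts, as in \cite{Girotti-1}.
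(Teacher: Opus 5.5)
\textbf{Step 2 fails.} Your Steps 1 and 3 follow the paper's route: you read off $T^\infty$ at $k=\frac i2$ through \eqref{u-recover}, and the diagonal factors cancel in $X_1X_2$. The constant term and the $\hat F'/\hat F$ term then come out exactly as you compute. The problem is your plan to rewrite $T_1^\infty/T_2^\infty$ as $\hat F(A(\frac i2))$ times constants; no such identity exists. Put $a=A(\frac i2)$ and $c=\frac{t\hat\Omega+\Delta}{2\pi}$. In the ratio the prefactor $\gamma\,\vartheta_3(0;2\tau)/\vartheta_3(c;2\tau)$ cancels. Evenness and $1$-periodicity give $\vartheta_3(-2a-\frac12;2\tau)=\vartheta_3(2a-\frac12;2\tau)$, so
\[
\frac{T_1^\infty(\frac i2)}{T_2^\infty(\frac i2)}=\frac{\vartheta_3(2a+c-\frac12;2\tau)}{\vartheta_3(-2a+c-\frac12;2\tau)},\qquad
\hat F(a)=\frac{\vartheta_3(2a+c-\frac12;2\tau)\,\vartheta_3(-2a+c-\frac12;2\tau)}{\vartheta_3(2a-\frac12;2\tau)^2}.
\]
The symmetry $U\mapsto -U$ between the two entries turns the ratio into $G(a)/G(-a)$, where $G(U)=\vartheta_3(2U+c-\frac12;2\tau)/\vartheta_3(2U-\frac12;2\tau)$. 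By contrast, $\hat F(a)=G(a)G(-a)$. The quotient of the two is $\vartheta_3(2a-\frac12;2\tau)^2/\vartheta_3(-2a+c-\frac12;2\tau)^2$, which varies with $t$ through $c$, and no reality argument removes it.

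\textbf{What the argument actually gives.} The paper's own derivation (the display for $e^{x-y}$ just before \eqref{Phi}) produces the ratio $T_1^\infty/T_2^\infty$, so the correct conclusion of your Step 2 is
\[
x=y-2i(t\hat g-\theta)(\tfrac i2)-2\log f(\tfrac i2)+\log\frac{\vartheta_3(2a+c-\frac12;2\tau)}{\vartheta_3(-2a+c-\frac12;2\tau)}+\mathcal O(t^{-1}).
\]
So the last logarithm in \eqref{xy-relation} must be read as the log of the quotient of the two factors in \eqref{F-def}, not of their product. The product $\hat F$ is correct only in \eqref{uy} and \eqref{m-asymptotics}, where $X_1X_2$ enters. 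Since this theta term is bounded, the relation $x/(\omega t)=\Phi(\xi)+\mathcal O(t^{-1})$ is unaffected, so you should prove the ratio version rather than the literal \eqref{xy-relation}.

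\textbf{Non-vanishing is easy.} Integrating $\hat\omega$ along $i(\eta_2,\frac12)$ shows that $a$ is real, and $c$ is real because $\hat\Omega,\Delta\in\mathbb R$. Since $2\tau\in i\mathbb R_+$, the product formula shows $\vartheta_3(\cdot\,;2\tau)>0$ on $\mathbb R$. Hence every theta factor above is positive and all the logarithms are real.
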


\begin{remark}
	The analysis in the region \(\xi\in(-\infty,\hat{\xi})\) is essentially the same as that in the region \(\xi\in(\xi_*,+\infty)\). The only difference is that, in the present case, one has to use the alternative triangular factorizations \eqref{factor12} and \eqref{factor22}. Consequently, the resulting asymptotic formula has the same form as \eqref{uy}, with \(\Delta\) replaced by
	\begin{align}
		\hat \Delta
		&=
		-i\left[
		\int_{i\eta_1}^{i\eta_2}
		\frac{\log \dfrac{2r(\zeta)}{1+r(\zeta)\rho(\zeta)}}{R_+(\zeta)}\,d\zeta
		-\int_{-i\eta_2}^{-i\eta_1}
		\frac{\log \dfrac{2r(-\zeta)}{1+r(-\zeta)\rho(-\zeta)}}{R_+(\zeta)}\,d\zeta
		\right]
		\left[
		\int_{-i\eta_1}^{i\eta_1}\frac{1}{R(\zeta)}\,d\zeta
		\right]^{-1}
		\in \mathbb R .
	\end{align}
\end{remark}
\subsection{The Region $\xi\in (\hat{\xi},0)\cup(0,\xi_{*})$}
For $\xi\in (\hat{\xi},0)\cup(0,\xi_{*})$, Proposition~\ref{prop:zeros-hatg-xi-larger} implies that
the phase $\hat{g}(k)$ possesses precisely two stationary points,
located at $\pm i\beta_1$, where $\beta_1\in(\eta_1,\eta_2)$. We therefore open lenses along the intervals
$i(\eta_1,\beta_1)$ and $i(\beta_1,\eta_2)$, together with their symmetric
counterparts in the lower half-plane. The choice of lenses is dictated by
the sign distribution of the phase and by the factorizations
\eqref{factor11}--\eqref{factor22}. We define a new unknown $\tilde Y$ by
\begin{align}
	\tilde Y(k)=X(k)
	\begin{cases}
		\begin{pmatrix}
			1&0\\
			\displaystyle i\frac{r(k)\rho(k)-1}{2\rho(k)}e^{2i\theta(k)}&1
		\end{pmatrix},
		& k\in \text{the left lens adjacent to }i(\eta_1,\beta_1),
		\\[3mm]
		\begin{pmatrix}
			1&0\\
			\displaystyle -i\frac{r(k)\rho(k)-1}{2\rho(k)}e^{2i\theta(k)}&1
		\end{pmatrix},
		& k\in \text{the right lens adjacent to }i(\eta_1,\beta_1),
		\\[3mm]
		\begin{pmatrix}
			1&\displaystyle i\frac{r(k)\rho(k)-1}{2r(k)}e^{-2i\theta(k)}\\
			0&1
		\end{pmatrix},
		& k\in \text{the left lens adjacent to }i(\beta_1,\eta_2),
		\\[3mm]
		\begin{pmatrix}
			1&\displaystyle -i\frac{r(k)\rho(k)-1}{2r(k)}e^{-2i\theta(k)}\\
			0&1
		\end{pmatrix},
		& k\in \text{the right lens adjacent to }i(\beta_1,\eta_2),
		\\[3mm]
		\begin{pmatrix}
			1&\displaystyle -i\frac{r(-k)\rho(-k)-1}{2\rho(-k)}e^{-2i\theta(k)}\\
			0&1
		\end{pmatrix},
		& k\in \text{the left lens adjacent to }i(-\beta_1,-\eta_1),
		\\[3mm]
		\begin{pmatrix}
			1&\displaystyle i\frac{r(-k)\rho(-k)-1}{2\rho(-k)}e^{-2i\theta(k)}\\
			0&1
		\end{pmatrix},
		& k\in \text{the right lens adjacent to }i(-\beta_1,-\eta_1),
		\\[3mm]
		\begin{pmatrix}
			1&0\\
			\displaystyle -i\frac{r(-k)\rho(-k)-1}{2r(-k)}e^{2i\theta(k)}&1
		\end{pmatrix},
		& k\in \text{the left lens adjacent to }i(-\eta_2,-\beta_1),
		\\[3mm]
		\begin{pmatrix}
			1&0\\
			\displaystyle i\frac{r(-k)\rho(-k)-1}{2r(-k)}e^{2i\theta(k)}&1
		\end{pmatrix},
		& k\in \text{the right lens adjacent to }i(-\eta_2,-\beta_1).
	\end{cases}
\end{align}
The lens contours are shown in Fig.~\ref{openinglenses2}. With this definition,
the jumps on the lips of the lenses are triangular, whereas the jumps on the
original cuts are reduced to purely off-diagonal form. More precisely, $\tilde Y$
satisfies
\begin{align}
	\tilde Y_+(k)=\tilde Y_-(k)
	\begin{cases}
		\begin{pmatrix}
			1&0\\
			\displaystyle -i\frac{r(k)\rho(k)-1}{2\rho(k)}e^{2i\theta(k)}&1
		\end{pmatrix},
		& k\in \tilde{\mathcal C}_2,
		\\[3mm]
		\begin{pmatrix}
			1&\displaystyle -i\frac{r(k)\rho(k)-1}{2r(k)}e^{-2i\theta(k)}\\
			0&1
		\end{pmatrix},
		& k\in \tilde{\mathcal C}_1,
		\\[3mm]
		\begin{pmatrix}
			1&\displaystyle i\frac{r(-k)\rho(-k)-1}{2\rho(-k)}e^{-2i\theta(k)}\\
			0&1
		\end{pmatrix},
		& k\in \tilde{\mathcal C}_{-2},
		\\[3mm]
		\begin{pmatrix}
			1&0\\
			\displaystyle i\frac{r(-k)\rho(-k)-1}{2r(-k)}e^{2i\theta(k)}&1
		\end{pmatrix},
		& k\in \tilde{\mathcal C}_{-1},
		\\[3mm]
		\begin{pmatrix}
			0&\displaystyle -i\frac{2\rho(k)}{1+r(k)\rho(k)}e^{-2i\theta(k)}\\
			\displaystyle -i\left(\frac{2\rho(k)}{1+r(k)\rho(k)}\right)^{-1}e^{2i\theta(k)}&0
		\end{pmatrix},
		& k\in (i\eta_1,i\beta_1),
		\\[3mm]
		\begin{pmatrix}
			0&\displaystyle -i\left(\frac{2r(k)}{1+r(k)\rho(k)}\right)^{-1}e^{-2i\theta(k)}\\
			\displaystyle -i\frac{2r(k)}{1+r(k)\rho(k)}e^{2i\theta(k)}&0
		\end{pmatrix},
		& k\in (i\beta_1,i\eta_2),
		\\[3mm]
		\begin{pmatrix}
			0&\displaystyle i\left(\frac{2\rho(-k)}{1+r(-k)\rho(-k)}\right)^{-1}e^{-2i\theta(k)}\\
			\displaystyle i\frac{2\rho(-k)}{1+r(-k)\rho(-k)}e^{2i\theta(k)}&0
		\end{pmatrix},
		& k\in (-i\beta_1,-i\eta_1),
		\\[3mm]
		\begin{pmatrix}
			0&\displaystyle i\frac{2r(-k)}{1+r(-k)\rho(-k)}e^{-2i\theta(k)}\\
			\displaystyle i\left(\frac{2r(-k)}{1+r(-k)\rho(-k)}\right)^{-1}e^{2i\theta(k)}&0
		\end{pmatrix},
		& k\in (-i\eta_2,-i\beta_1).
	\end{cases}
\end{align}
For later use, we remove the remaining scalar factors on the main cuts by introducing the function
$\tilde f(k)$ as follows:
\begin{align}\label{def-tildef}
	\tilde f(k)
	=\exp\Bigg\{
	\frac{R(k)}{2\pi i}
	\Bigg[
	&\int_{i\eta_1}^{i\beta_1}
	\frac{\log \dfrac{2\rho(\zeta)}{1+r(\zeta)\rho(\zeta)}}
	{R_+(\zeta)(\zeta-k)}\,d\zeta
	-\int_{i\beta_1}^{i\eta_2}
	\frac{\log \dfrac{2r(\zeta)}{1+r(\zeta)\rho(\zeta)}}
	{R_+(\zeta)(\zeta-k)}\,d\zeta
	\nonumber\\
	&-\int_{-i\beta_1}^{-i\eta_1}
	\frac{\log \dfrac{2\rho(-\zeta)}{1+r(-\zeta)\rho(-\zeta)}}
	{R_+(\zeta)(\zeta-k)}\,d\zeta
	+\int_{-i\eta_2}^{-i\beta_1}
	\frac{\log \dfrac{2r(-\zeta)}{1+r(-\zeta)\rho(-\zeta)}}
	{R_+(\zeta)(\zeta-k)}\,d\zeta
	\nonumber\\
	&+\int_{-i\eta_1}^{i\eta_1}
	\frac{i\tilde\Delta}{R(\zeta)(\zeta-k)}\,d\zeta
	\Bigg]
	\Bigg\}.
\end{align}
The constant $\tilde\Delta$ is determined by the normalization condition
$\tilde f(k)=1+\mathcal O(k^{-1})$ as $k\to\infty$. More precisely,
\begin{align}\label{def-tDelta}
	\tilde\Delta
	=
	-i\Bigg[
	&\int_{i\beta_1}^{i\eta_2}
	\frac{\log \dfrac{2r(\zeta)}{1+r(\zeta)\rho(\zeta)}}{R_+(\zeta)}\,d\zeta
	-\int_{i\eta_1}^{i\beta_1}
	\frac{\log \dfrac{2\rho(\zeta)}{1+r(\zeta)\rho(\zeta)}}{R_+(\zeta)}\,d\zeta
	\nonumber\\
	&+\int_{-i\beta_1}^{-i\eta_1}
	\frac{\log \dfrac{2\rho(-\zeta)}{1+r(-\zeta)\rho(-\zeta)}}{R_+(\zeta)}\,d\zeta
	-\int_{-i\eta_2}^{-i\beta_1}
	\frac{\log \dfrac{2r(-\zeta)}{1+r(-\zeta)\rho(-\zeta)}}{R_+(\zeta)}\,d\zeta
	\Bigg]
	\Bigg[
	\int_{-i\eta_1}^{i\eta_1}\frac{d\zeta}{R(\zeta)}
	\Bigg]^{-1}.
\end{align}
Using the symmetry of the integrands on the two cuts, the above expression reduces to
\begin{align}\label{def-tDelta-simplified}
	\tilde\Delta
	=
	\frac{\eta_2}{K(m)}
	\left(
	\int_{i\eta_1}^{i\beta_1}
	\frac{\log \dfrac{2\rho(\zeta)}{1+r(\zeta)\rho(\zeta)}}{R_+(\zeta)}\,d\zeta
	-\int_{i\beta_1}^{i\eta_2}
	\frac{\log \dfrac{2r(\zeta)}{1+r(\zeta)\rho(\zeta)}}{R_+(\zeta)}\,d\zeta
	\right)\in\R.
\end{align}
By the Plemelj formula, $\tilde f$ satisfies the scalar jump relations
\begin{align}
	\tilde f_+(k)\tilde f_-(k)
	&=
	\frac{2\rho(k)}{1+r(k)\rho(k)},
	&& k\in i(\eta_1,\beta_1),\\
	\tilde f_+(k)\tilde f_-(k)
	&=
	\left(
	\frac{2r(k)}{1+r(k)\rho(k)}
	\right)^{-1},
	&& k\in i(\beta_1,\eta_2),\\
	\tilde f_+(k)\tilde f_-(k)
	&=
	\left(
	\frac{2\rho(-k)}{1+r(-k)\rho(-k)}
	\right)^{-1},
	&& k\in i(-\beta_1,-\eta_1),\\
	\tilde f_+(k)\tilde f_-(k)
	&=
	\frac{2r(-k)}{1+r(-k)\rho(-k)},
	&& k\in i(-\eta_2,-\beta_1),\\
	\frac{\tilde f_+(k)}{\tilde f_-(k)}
	&=
	e^{i\tilde\Delta},
	&& k\in i[-\eta_1,\eta_1],\\
	\tilde f(k)
	&=
	1+\mathcal O\left(\frac1k\right),
	&& k\to\infty .
\end{align}
Moreover, since $\frac{2\rho(k)}{1+r(k)\rho(k)}>0$ and $\frac{2r(k)}{1+r(k)\rho(k)}>0$ for $k\in i(\eta_1,\eta_2)$, so by the Lemma C.1 of \cite{Boutet3}, we know $\tilde f$ is bounded and analytic function for $k\in \C \setminus i[-\eta_2,\eta_2]$.
\begin{figure}[htbp]
	\centering
	\begin{tikzpicture}[
		scale=0.5,
		line cap=round,
		line join=round,
		>=Stealth,
		curvearrow/.style={
			thick,
			postaction={decorate},
			decoration={markings, mark=at position 0.58 with {\arrow{>}}}
		}
		]
		
		
		\def\ya{2.0}   
		\def\yp{4.2}   
		\def\yb{6.6}   
		
		\def\hs{0.82}
		\def\hb{1.15}
		
		\coordinate (Dm) at (0,-\yb);   
		\coordinate (Dp) at (0,-\yp);   
		\coordinate (Da) at (0,-\ya);   
		
		\coordinate (Ua) at (0,\ya);    
		\coordinate (Up) at (0,\yp);    
		\coordinate (Ub) at (0,\yb);    
		
		
		\draw[thick] (Dm) -- (Dp) -- (Da);
		\draw[thick] (Ua) -- (Up) -- (Ub);
		
		
		\draw[curvearrow]
		(Dm) .. controls (-\hb,-5.95) and (-\hb,-4.95) .. (Dp);
		
		\draw[curvearrow]
		(Dm) .. controls (\hb,-5.95) and (\hb,-4.95) .. (Dp);
		
		
		\draw[curvearrow]
		(Dp) .. controls (-\hs,-3.45) and (-\hs,-2.65) .. (Da);
		
		\draw[curvearrow]
		(Dp) .. controls (\hs,-3.45) and (\hs,-2.65) .. (Da);
		
		
		\draw[curvearrow]
		(Ua) .. controls (-\hs,2.65) and (-\hs,3.45) .. (Up);
		
		\draw[curvearrow]
		(Ua) .. controls (\hs,2.65) and (\hs,3.45) .. (Up);
		
		
		\draw[curvearrow]
		(Up) .. controls (-\hb,4.95) and (-\hb,5.95) .. (Ub);
		
		\draw[curvearrow]
		(Up) .. controls (\hb,4.95) and (\hb,5.95) .. (Ub);
		
		
		\node[left] at (-1.45,-5.35) {$\tilde{\mathcal{C}}_{-1}$};
		\node[left] at (-1.18,-3.05) {$\tilde{\mathcal{C}}_{-2}$};
		
		\node[left]  at (Dm) {$-i\eta_2$};
		\node[left]  at (Da) {$-i\eta_1$};
		\node[right] at (Dp) {$-i\beta_1(\xi)$};
		
		
		\node[left] at (-1.18,3.05) {$\tilde{\mathcal{C}}_{2}$};
		\node[left] at (-1.45,5.35) {$\tilde{\mathcal{C}}_{1}$};
		
		\node[left]  at (Ua) {$i\eta_1$};
		\node[left]  at (Ub) {$i\eta_2$};
		\node[right] at (Up) {$i\beta_1(\xi)$};
		
	\end{tikzpicture}
	\caption{Opening lenses for $\xi\in (\hat{\xi},\xi_{*})$.}
	\label{openinglenses2}
\end{figure}
Follow the step as section \ref{sec4.1}, we define the transformation
\begin{align}
	\tilde T(k)=\tilde Y(k) e^{i(t\hat g(k)-\theta(k))\sigma_3}\tilde f^{\sigma_3}(k).
\end{align}
The final transformation converts the jumps on the main cuts into constant
off-diagonal matrices. More precisely, the function $\tilde T(k)$ satisfies
\begin{equation}\label{jump-S-tilde}
	\tilde T_+(k)=\tilde T_-(k)\,\tilde V_T(k),
\end{equation}
where the jump matrix is given by
\begin{equation}\label{jump-matrix-T-tilde}
	\tilde V_T(k)=
	\begin{cases}
		\begin{pmatrix}
			1&0\\[1mm]
			\displaystyle
			-i\,\frac{r(k)\rho(k)-1}{2\rho(k)}
			\,\tilde f(k)^2 e^{2it\hat g(k)}&1
		\end{pmatrix},
		& k\in \tilde{\mathcal C}_2,
		\\[5mm]
		\begin{pmatrix}
			1&
			\displaystyle
			-i\,\frac{r(k)\rho(k)-1}{2r(k)}
			\,\tilde f(k)^{-2} e^{-2it\hat g(k)}\\[1mm]
			0&1
		\end{pmatrix},
		& k\in \tilde{\mathcal C}_1,
		\\[5mm]
		\begin{pmatrix}
			1&
			\displaystyle
			i\,\frac{r(-k)\rho(-k)-1}{2\rho(-k)}
			\,\tilde f(k)^{-2} e^{-2it\hat g(k)}\\[1mm]
			0&1
		\end{pmatrix},
		& k\in \tilde{\mathcal C}_{-2},
		\\[5mm]
		\begin{pmatrix}
			1&0\\[1mm]
			\displaystyle
			i\,\frac{r(-k)\rho(-k)-1}{2r(-k)}
			\,\tilde f(k)^2 e^{2it\hat g(k)}&1
		\end{pmatrix},
		& k\in \tilde{\mathcal C}_{-1},
		\\[5mm]
		\begin{pmatrix}
			0&-i\\
			-i&0
		\end{pmatrix},
		& k\in i(\eta_1,\eta_2),
		\\[5mm]
		\begin{pmatrix}
			0&i\\
			i&0
		\end{pmatrix},
		& k\in i(-\eta_2,-\eta_1),
		\\[5mm]
		\begin{pmatrix}
			e^{it\hat\Omega+i\tilde\Delta}&0\\
			0&e^{-it\hat\Omega-i\tilde\Delta}
		\end{pmatrix},
		& k\in i[-\eta_1,\eta_1].
	\end{cases}
\end{equation}
Since $\tilde f(k)$ is bounded, $\rho (k)\neq 0$ and $r(k)\neq 0$, by Lemma~\ref{lemma3.1}, the off-diagonal entries of the jump matrix on
$
\tilde{\mathcal C}_{\pm1}\cup \tilde{\mathcal C}_{\pm2}
$
decay exponentially as $t\to+\infty$  outside of small neighborhoods of the endpoints 
$\pm i\eta_1$, $\pm i\eta_2$
and of the stationary phase point $\pm i\beta_1(\xi)$.
 The solution \(\tilde T\) also tends to a model problem with a constant jump matrix. To simplify the notation, we still denote it by \(T^{\infty}\), and construct the corresponding matrix model \(P^{\infty}\).

Define $D_{\epsilon}(a)$ denote the small disks centered at $a$ with radius $\epsilon$. Let $\epsilon=\epsilon(\xi)$ be so small that the six disks $$D_\epsilon(\pm i\beta_1),\ \  D_\epsilon(\pm i\eta_1).\ \  D_\epsilon(\pm i\eta_2),$$
are disjoint from each other.
The  local parametrix $P^{\pm i\eta_1}$ and $P^{\pm i\eta_2}$ near the points $\pm i\eta_1$, $\pm i\eta_2$
can be constructed by using the modified Bessel functions, see \cite{GYJ}. On boundaries $\partial D_\epsilon(\pm i\eta_1)$ and $\partial D_\epsilon(\pm i\eta_2)$, we have
\begin{align*}
	P^{\pm i\eta_1}(P^{\infty})^{-1}=\mathcal{O}(t^{-1}),\ \ k\in \partial D_\epsilon(\pm i\eta_1),\\
	P^{\pm i\eta_2}(P^{\infty})^{-1}=\mathcal{O}(t^{-1}),\ \ k\in \partial D_\epsilon(\pm i\eta_2).
\end{align*}
 The construction is standard, we omit it.
 Next, we show how to construct the local parametrices in neighborhoods of $\pm i\beta_1(\xi)$.
 
\subsubsection{The local model near $i\beta_1$} 
Since $\pm i\beta_1(\xi)$ are the stationary points of $\hat g(k)$, the local parametrices in neighborhoods of $\pm i\beta_1(\xi)$ can be constructed in terms of parabolic cylinder functions.
\begin{figure}[htbp]
	\centering
	\begin{tikzpicture}[scale=0.7,>=stealth,line cap=round,line join=round]
		
		\def\r{3}
		
		\coordinate (O) at (0,0);
		
		\coordinate (X1p) at (140:\r);
		\coordinate (X2p) at (220:\r);
		\coordinate (X3p) at (270:\r);
		\coordinate (X4p) at (320:\r);
		\coordinate (X5p) at (50:\r);
		\coordinate (X6p) at (90:\r);
		
		\draw[thick] (O) circle (\r);
		
		\draw[thick] (O) -- (X1p);
		\draw[thick] (O) -- (X2p);
		\draw[thick] (O) -- (X3p);
		\draw[thick] (O) -- (X4p);
		\draw[thick] (O) -- (X5p);
		\draw[thick] (O) -- (X6p);
		
		\draw[->,thick] (140:1.05) -- (140:2.04);
		\draw[->,thick] (50:1.05)  -- (50:2.04);
		\draw[->,thick] (90:1.05)  -- (90:2.04);
		
		\draw[->,thick] (220:2.04) -- (220:1.05);
		\draw[->,thick] (270:2.04) -- (270:1.05);
		\draw[->,thick] (320:2.04) -- (320:1.05);
		
		\node[above left]  at (140:3.35) {$X_1$};
		\node[below left]  at (220:3.35) {$X_2$};
		\node[below]       at (270:3.45) {$X_3$};
		\node[below right] at (320:3.35) {$X_4$};
		\node[above right] at (50:3.35)  {$X_5$};
		\node[above]       at (90:3.45)  {$X_6$};
		
		\node at (115:1.9) {$R_1$};
		\node at (180:1.8) {$R_2$};
		\node at (245:1.7) {$R_3$};
		\node at (295:1.7) {$R_4$};
		\node at (0:1.8)   {$R_5$};
		\node at (65:1.8)  {$R_6$};
		
	\end{tikzpicture}
	\caption{The contour in the disk $D_{\epsilon}(i\beta_1)$ and the sets $\{R_j\}_{1}^{6}$}
	\label{fig:disk-regions}
\end{figure}
 We define the $D_{\epsilon\pm }(i\beta_1)$ the semicircles on the left and right parts of the complex plane
with respect to the imaginaryaxis. Let
\[
X_{1}\cup X_5=\tilde{\mathcal{C}}_1\cap D_{\epsilon}(i\beta_1),\qquad
X_{2}\cup X_4=\tilde{\mathcal{C}}_2\cap D_{\epsilon}(i\beta_1),\qquad
X_3\cup X_6=i(\eta_1,\eta_2)\cap D_{\epsilon}(i\beta_1).
\]
These curves divide the disk $D_{\epsilon}(i\beta_1)$ into six regions $\{R_j\}_{j=1}^6$; see Fig.~\ref{fig:disk-regions}.
The expansion of $\hat g(k)$ near the point $k=i\beta_1$ has the form
\begin{align}\label{gexp}
	\hat g(k)=\pm (\hat g(i\beta_1)+\frac12 \hat g_+''(i\beta_1)(k-i\beta_1)^2+\mathcal{O}(k-i\beta_1)^3),\ \ k\to i\beta_1,\ k\in D_{\epsilon\pm }(i\beta_1),
\end{align}
where $$\hat g_+''(i\beta_1;\xi)=\begin{cases}
	\displaystyle
	\omega\xi
	\frac{-2\beta_1(\beta_1^2-\beta_0^2)(\beta_1^2+\gamma^2)}
	{\left(-\beta_1^2+\frac14\right)^2
		\left|\sqrt{(-\beta_1^2+\eta_1^2)(-\beta_1^2+\eta_2^2)}_+\right|}
	>0,
	& \xi\in(\hat{\xi},0),
	\\[3mm]
	\displaystyle
	\omega\xi
	\frac{2\beta_1(\beta_1^2-\beta_0^2)(\beta_2^2-\beta_1^2)}
	{\left(-\beta_1^2+\frac14\right)^2
		\left|\sqrt{(-\beta_1^2+\eta_1^2)(-\beta_1^2+\eta_2^2)}_+\right|}
	>0,
	& \xi\in(0,\xi_*).
\end{cases}.$$
  Define $g_{i\beta_1}(k)$ for $k$ near $i\beta_1$ by
\begin{align}
	g_{i\beta_1}(k)=
	\begin{cases}
		\hat g(k)-\hat g_{+}(i\beta_1),
		& k\in R_1\cup R_2\cup R_3,\\[1mm]
		-\bigl(\hat g(k)-\hat g_{-}(i\beta_1)\bigr),
		& k\in R_4\cup R_5\cup R_6.
	\end{cases}
\end{align}
We define the function $M^{(1)}(k)$ for $k$ near $i\beta_1$ by
\begin{align}
	M^{(1)}(k)
	=
	\tilde{T}(k)\tilde{f}^{-\sigma_3}(k)
	,
	\qquad
	k\in D_{\epsilon}(i\beta_1)\setminus \bigcup_{j=1}^{6}X_j .
\end{align}
Across the curves $\bigcup_{j=1}^{6}X_j$, $M^{(1)}(k)$ satisfies the jump condition
\[
M^{(1)}_{+}(k)=M^{(1)}_{-}(k)V^{(1)}(k),
\]
where
\begin{align}
	V^{(1)}(k)=
	\begin{cases}
		\begin{pmatrix}
			1 & -\displaystyle i\frac{r(k)\rho(k)-1}{2r(k)}e^{-2it\hat g(k)}\\[2mm]
			0 & 1
		\end{pmatrix},
		& k\in X_1\cup X_5,\\[5mm]
		\begin{pmatrix}
			1 & 0\\[2mm]
			-\displaystyle i\frac{r(k)\rho(k)-1}{2\rho(k)}e^{2it\hat g(k)} & 1
		\end{pmatrix},
		& k\in X_2\cup X_4,\\[5mm]
		\begin{pmatrix}
			0 & -\displaystyle i\frac{2\rho(k)}{1+r(k)\rho(k)}\\[2mm]
			-\displaystyle i\left(\frac{2\rho(k)}{1+r(k)\rho(k)}\right)^{-1} & 0
		\end{pmatrix},
		& k\in X_3,\\[5mm]
		\begin{pmatrix}
			0 & -\displaystyle i\left(\frac{2r(k)}{1+r(k)\rho(k)}\right)^{-1}\\[2mm]
			-\displaystyle i\frac{2r(k)}{1+r(k)\rho(k)} & 0
		\end{pmatrix},
		& k\in X_6.
	\end{cases}
\end{align}
Here all contours are oriented to the right as in Fig.~\ref{fig:disk-regions}.
We next introduce a local scalar function which removes the nonconstant
factors appearing on the two subintervals adjacent to the stationary point
$i\beta_1$. Define
\begin{align}\label{def-local-delta}
	\delta(k)
	=
	\exp\left\{
	\frac{1}{2\pi i}
	\left(
	\int_{i\eta_1}^{i\beta_1}
	\log\left(\frac{2\rho(\zeta)}{1+r(\zeta)\rho(\zeta)}\right)
	\frac{d\zeta}{\zeta-k}
	-
	\int_{i\beta_1}^{i\eta_2}
	\log\left(\frac{2r(\zeta)}{1+r(\zeta)\rho(\zeta)}\right)
	\frac{d\zeta}{\zeta-k}
	\right)
	\right\}.
\end{align}
The branches of the logarithms are chosen continuously on the corresponding
subintervals of the cut. Since
\[
\log\left(\frac{2\rho(i\beta_1)}{1+r(i\beta_1)\rho(i\beta_1)}\right)>0,
\qquad
\log\left(\frac{2r(i\beta_1)}{1+r(i\beta_1)\rho(i\beta_1)}\right)>0,
\]
Lemma~C.1 of \cite{Boutet3} implies that $\delta(k)$ is bounded and analytic in
$
D_\epsilon(i\beta_1)\setminus i(\eta_1,\eta_2).
$
Moreover, by the Plemelj formula, the boundary values of $\delta$ satisfy
\begin{align}
	\frac{\delta_+(k)}{\delta_-(k)}
	=
	\begin{cases}
		\displaystyle
		\frac{2\rho(k)}{1+r(k)\rho(k)},
		& k\in i(\eta_1,\beta_1),
		\\[3mm]
		\displaystyle
		\left(
		\frac{2r(k)}{1+r(k)\rho(k)}
		\right)^{-1},
		& k\in i(\beta_1,\eta_2).
	\end{cases}
\end{align}
We next introduce two piecewise defined functions $A(k)$ and $B(k)$ by
\begin{align}
	A(k)
	&=
	\begin{cases}
		\delta^{\sigma_3}(k), & k\in R_1\cup R_2\cup R_3,\\
		\delta^{-\sigma_3}(k), & k\in R_4\cup R_5\cup R_6,
	\end{cases}
	\\
	B(k)
	&=
	\begin{cases}
		\begin{pmatrix}
			0&i\\
			i&0
		\end{pmatrix}, & k\in R_1\cup R_2\cup R_3,\\[3mm]
		I, & k\in R_4\cup R_5\cup R_6.
	\end{cases}
\end{align}
Define
\begin{align}
	M^{(2)}(k)=M^{(1)}(k)A(k)B(k)e^{it\hat g_{+}(i\beta_1)\sigma_3}.
\end{align}
With this transformation, the jumps on $X_3$ and $X_6$ are removed. The function
$M^{(2)}$ therefore satisfies
\begin{align}
	M^{(2)}_+(k)=M^{(2)}_-(k)
	\begin{cases}
		\begin{pmatrix}
			1&0\\[1mm]
			\displaystyle
			-i\frac{r(k)\rho(k)-1}{2r(k)}
			\delta^{-2}(k)e^{-2i g_{i\beta_1}(k)}&1
		\end{pmatrix},
		& k\in X_1,
		\\[5mm]
		\begin{pmatrix}
			1&
			\displaystyle
			-i\frac{r(k)\rho(k)-1}{2r(k)}
			\delta^{2}(k)e^{2i g_{i\beta_1}(k)}
			\\[1mm]
			0&1
		\end{pmatrix},
		& k\in X_5,
		\\[5mm]
		\begin{pmatrix}
			1&
			\displaystyle
			-i\frac{r(k)\rho(k)-1}{2\rho(k)}
			\delta^{2}(k)e^{2i g_{i\beta_1}(k)}
			\\[1mm]
			0&1
		\end{pmatrix},
		& k\in X_2,
		\\[5mm]
		\begin{pmatrix}
			1&0\\[1mm]
			\displaystyle
			-i\frac{r(k)\rho(k)-1}{2\rho(k)}
			\delta^{-2}(k)e^{-2i g_{i\beta_1}(k)}&1
		\end{pmatrix},
		& k\in X_4,
		\\[5mm]
		I,
		& k\in X_3\cup X_6.
	\end{cases}
\end{align}

To identify this local RH problem with the model problem $m^X$ in
Appendix~A of \cite{Boutet3}, we introduce a conformal coordinate near
$i\beta_1$. Namely, for $k\in D_\epsilon(i\beta_1)$, set
\[
\zeta=\zeta(k):=\sqrt{-tg_{i\beta_1,+}(k)}.
\]
The branch of the square root is chosen as follows. Since
$g_{i\beta_1,+}(k)$ has a double zero at $k=i\beta_1$, we may write
\[
\zeta=i\sqrt{t}(k-i\beta_1)\psi_{i\beta_1}(k),
\]
where $\psi_{i\beta_1}$ is analytic in $D_\epsilon(i\beta_1)$. By the
expansion of $g$ in \eqref{gexp}, one has
$
g_{i\beta_1,+}''(i\beta_1)\in \mathbb R_+.
$
We fix the branch by requiring
$
\psi_{i\beta_1}(i\beta_1)>0.
$

For $\epsilon>0$ sufficiently small, the map $k\mapsto\zeta(k)$ is a
biholomorphism from $D_\epsilon(i\beta_1)$ onto a neighborhood of the origin.
Moreover, it maps the contours $X_3$ and $X_6$ to the rays $R_-$ and $R_+$,
respectively. 
We may assume that $X_1$, $X_2$, $X_4$, and $X_5$ are mapped to the straight
rays in the $\zeta$-plane with arguments
\[
\arg\zeta=\frac{\pi}{4},\qquad
\arg\zeta=\frac{3\pi}{4},\qquad
\arg\zeta=-\frac{3\pi}{4},\qquad
\arg\zeta=-\frac{\pi}{4},
\]
respectively.

We now describe the behavior of the local scalar function $\delta(k)$ as
$k\to i\beta_1$. Let
\[
\log_{i\eta_1}(k-i\beta_1),\qquad
\log_{i\eta_2}(k-i\beta_1)
\]
denote the logarithm of $k-i\beta_1$ with branch cuts along
$i(\eta_1,\beta_1)$ and $i(\beta_1,\eta_2)$, respectively. We also introduce
the auxiliary logarithmic functions
\begin{align*}
	L_{\eta_1}(s,k)&=\log(k-s),
	&& s\in i(\eta_1,\beta_1),\quad
	k\in D_\epsilon(i\beta_1)\setminus i(\eta_1,\beta_1),\\
	L_{\eta_2}(s,k)&=\log(k-s),
	&& s\in i(\beta_1,\eta_2),\quad
	k\in D_\epsilon(i\beta_1)\setminus i(\beta_1,\eta_2).
\end{align*}
The branches are chosen so that, for each fixed $k$, the functions
$L_{\eta_1}(s,k)$ and $L_{\eta_2}(s,k)$ are continuous in $s$ on their
respective intervals, and such that
\[
L_{\eta_1}(i\beta_1,k)=\log_{i\eta_1}(k-i\beta_1),
\qquad
L_{\eta_2}(i\beta_1,k)=\log_{i\eta_2}(k-i\beta_1).
\]

Integrating by parts, we obtain
\begin{align}
	&\int_{i\eta_1}^{i\beta_1}
	\log\left(\frac{2\rho(s)}{1+r(s)\rho(s)}\right)
	\frac{ds}{s-k}
	\nonumber\\
	&\quad =
	\log_{i\eta_1}(k-i\beta_1)
	\log\left(\frac{2\rho(i\beta_1)}
	{1+r(i\beta_1)\rho(i\beta_1)}\right)
	-
	L_{\eta_1}(i\eta_1,k)
	\log\left(\frac{2\rho(i\eta_1)}
	{1+r(i\eta_1)\rho(i\eta_1)}\right)
	\nonumber\\
	&\qquad
	-\int_{i\eta_1}^{i\beta_1}
	L_{\eta_1}(s,k)\,
	d\log\left(\frac{2\rho(s)}{1+r(s)\rho(s)}\right),
\end{align}
and
\begin{align}
	&\int_{i\beta_1}^{i\eta_2}
	\log\left(\frac{2r(s)}{1+r(s)\rho(s)}\right)
	\frac{ds}{s-k}
	\nonumber\\
	&\quad =
	L_{\eta_2}(i\eta_2,k)
	\log\left(\frac{2r(i\eta_2)}
	{1+r(i\eta_2)\rho(i\eta_2)}\right)
	-
	\log_{i\eta_2}(k-i\beta_1)
	\log\left(\frac{2r(i\beta_1)}
	{1+r(i\beta_1)\rho(i\beta_1)}\right)
	\nonumber\\
	&\qquad
	-\int_{i\beta_1}^{i\eta_2}
	L_{\eta_2}(s,k)\,
	d\log\left(\frac{2r(s)}{1+r(s)\rho(s)}\right).
\end{align}
Consequently, $\delta(k)$ admits the local representation
\begin{align}\label{delta-local-representation}
	\delta(k)
	=
	\exp\left\{
	\nu_\rho \log_{i\eta_1}(k-i\beta_1)
	+
	\nu_r \log_{i\eta_2}(k-i\beta_1)
	+
	\chi(k)
	\right\},
\end{align}
where
\begin{align}\label{def-nu-rho-r}
	\nu_\rho
	&=
	\frac{1}{2\pi i}
	\log\left(\frac{2\rho(i\beta_1)}
	{1+r(i\beta_1)\rho(i\beta_1)}\right),
	\\
	\nu_r
	&=
	\frac{1}{2\pi i}
	\log\left(\frac{2r(i\beta_1)}
	{1+r(i\beta_1)\rho(i\beta_1)}\right),
\end{align}
and the regular part $\chi(k)$ is given by
\begin{align}\label{def-chi-local}
	\chi(k)
	=
	\frac{1}{2\pi i}
	\Bigg[
	&-
	L_{\eta_1}(i\eta_1,k)
	\log\left(\frac{2\rho(i\eta_1)}
	{1+r(i\eta_1)\rho(i\eta_1)}\right)
	\nonumber-
	\int_{i\eta_1}^{i\beta_1}
	L_{\eta_1}(s,k)\,
	d\log\left(\frac{2\rho(s)}{1+r(s)\rho(s)}\right)
	\nonumber\\
	&-
	L_{\eta_2}(i\eta_2,k)
	\log\left(\frac{2r(i\eta_2)}
	{1+r(i\eta_2)\rho(i\eta_2)}\right)
	+
	\int_{i\beta_1}^{i\eta_2}
	L_{\eta_2}(s,k)\,
	d\log\left(\frac{2r(s)}{1+r(s)\rho(s)}\right)
	\Bigg].
\end{align}
In particular, $\chi(k)$ is analytic in a neighborhood of $i\beta_1$.

Let $\log_a \zeta$ denote the logarithm with branch cut along
$\arg \zeta=a$, that is,
\[
\log_a\zeta=\log|\zeta|+i\arg\zeta,
\qquad \arg\zeta\in(a,a+2\pi].
\]
Thus $\log_{-\pi}\zeta$ is the principal logarithm. Since the conformal map
$k\mapsto \zeta$ sends the two sides of the cut through $i\beta_1$ to the
rays $R_-$ and $R_+$, respectively, we have
\begin{align}
	\log_{i\eta_1}(k-i\beta_1)
	&=
	\log_{i\eta_1}
	\left(
	\frac{\zeta}{i\sqrt{t}\,\psi_{i\beta_1}(k)}
	\right)
	=
	\frac{i\pi}{2}-\frac12\log t+\log\zeta-\log\psi_{i\beta_1}(k),
	\nonumber\\
	&\hspace{7cm}
	k\in D_\epsilon(i\beta_1)\setminus i(\eta_1,\beta_1),
	\\
	\log_{i\eta_2}(k-i\beta_1)
	&=
	\log_{i\eta_2}
	\left(
	\frac{\zeta}{i\sqrt{t}\,\psi_{i\beta_1}(k)}
	\right)
	=
	\frac{i\pi}{2}-\frac12\log t+\log_0\zeta-\log\psi_{i\beta_1}(k),
	\nonumber\\
	&\hspace{7cm}
	k\in D_\epsilon(i\beta_1)\setminus i(\beta_1,\eta_2).
\end{align}

We now separate the singular $\zeta$-dependence from the analytic prefactor.
Define
\begin{align}\label{def-p-zeta}
	p(\zeta)
	=
	\exp\left\{
	\nu_\rho\log\zeta+\nu_r\log_0\zeta
	\right\},
	\qquad \zeta\in\mathbb C\setminus\mathbb R .
\end{align}
Furthermore, set
\begin{align}\label{def-delta0-delta1}
	\delta_0
	&=
	\exp\left\{
	(\nu_\rho+\nu_r)
	\left(
	\frac{i\pi}{2}-\frac12\log t-\log\psi_{i\beta_1}(i\beta_1)
	\right)
	+\chi(i\beta_1)
	\right\},
	\\
	\delta_1(k)
	&=
	\exp\left\{
	-(\nu_\rho+\nu_r)
	\left(
	\log\psi_{i\beta_1}(k)-\log\psi_{i\beta_1}(i\beta_1)
	\right)
	+\chi(k)-\chi(i\beta_1)
	\right\}.
\end{align}
Thus, in $D_\epsilon(i\beta_1)$, the function $\delta(k)$ can be written in
the factorized form
\begin{align}\label{delta-factorization-local}
	\delta(k)=\delta_0(t)\,\delta_1(k)\,p(\zeta(k)),
\end{align}
where $\delta_1$ is analytic near $i\beta_1$ and satisfies
$
\delta_1(i\beta_1)=1.
$

Define $M^{(3)}$ by
\begin{align}\label{def-M3}
	M^{(3)}(\zeta(k))
	=
	M^{(2)}(k)\,
	\delta_0^{\sigma_3}
	\left(\frac{r(i\beta_1)}{\rho(i\beta_1)}\right)^{\frac14\sigma_3}.
\end{align}
Here and below, whenever a function of $k$ is written as a function of
$\zeta$, we use the inverse relation $k=k(\zeta)$. Then $M^{(3)}$ satisfies
the jump condition
\begin{align}
	M^{(3)}_+(\zeta)=M^{(3)}_-(\zeta)V^{(3)}(\zeta),
\end{align}
where
\begin{align}\label{V3-before-ptilde}
	V^{(3)}(\zeta)=
	\begin{cases}
		\begin{pmatrix}
			1&0\\[1mm]
			\displaystyle
			i\sqrt{\frac{r(i\beta_1)}{\rho(i\beta_1)}}
			\frac{r(k)\rho(k)-1}{2r(k)}
			\delta_1(k)^{-2}p(\zeta)^{-2}e^{2i\zeta^2}
			&1
		\end{pmatrix},
		& \arg\zeta=\dfrac{\pi}{4},
		\\[6mm]
		\begin{pmatrix}
			1&
			\displaystyle
			i\sqrt{\frac{\rho(i\beta_1)}{r(i\beta_1)}}
			\frac{r(k)\rho(k)-1}{2r(k)}
			\delta_1(k)^2p(\zeta)^2e^{-2i\zeta^2}
			\\[1mm]
			0&1
		\end{pmatrix},
		& \arg\zeta=-\dfrac{\pi}{4},
		\\[6mm]
		\begin{pmatrix}
			1&
			\displaystyle
			-i\sqrt{\frac{\rho(i\beta_1)}{r(i\beta_1)}}
			\frac{r(k)\rho(k)-1}{2\rho(k)}
			\delta_1(k)^2p(\zeta)^2e^{-2i\zeta^2}
			\\[1mm]
			0&1
		\end{pmatrix},
		& \arg\zeta=\dfrac{3\pi}{4},
		\\[6mm]
		\begin{pmatrix}
			1&0\\[1mm]
			\displaystyle
			-i\sqrt{\frac{r(i\beta_1)}{\rho(i\beta_1)}}
			\frac{r(k)\rho(k)-1}{2\rho(k)}
			\delta_1(k)^{-2}p(\zeta)^{-2}e^{2i\zeta^2}
			&1
		\end{pmatrix},
		& \arg\zeta=-\dfrac{3\pi}{4}.
	\end{cases}
\end{align}
All rays are oriented toward the origin.

To put this jump matrix into the standard form of the model problem in
Appendix~A of \cite{Boutet3}, introduce
\begin{align}\label{def-ptilde}
	\widetilde p(\zeta)
	=
	\exp\left\{i\nu\log_{-\frac{\pi}{2}}\zeta\right\},
\end{align}
where
\begin{align}\label{nu}
	\nu=\nu(i\beta_1)
	=
	-\frac{1}{2\pi}
	\log\left(
	\frac{2\rho(i\beta_1)}
	{1+r(i\beta_1)\rho(i\beta_1)}
	\right)
	-\frac{1}{2\pi}
	\log\left(
	\frac{2r(i\beta_1)}
	{1+r(i\beta_1)\rho(i\beta_1)}
	\right)>0.
\end{align}
The positivity of $\nu$ follows from
\[
0<\frac{4r(i\beta_1)\rho(i\beta_1)}
{\left(1+r(i\beta_1)\rho(i\beta_1)\right)^2}<1.
\]
By the definition of $p$, one has
\begin{align}\label{p-ptilde-relation}
	p(\zeta)
	=
	\widetilde p(\zeta)
	\begin{cases}
		1,
		& \arg\zeta\in(0,\pi),
		\\[1mm]
		\displaystyle
		\frac{1+r(i\beta_1)\rho(i\beta_1)}{2\rho(i\beta_1)},
		& \arg\zeta\in\left(-\pi,-\dfrac{\pi}{2}\right),
		\\[3mm]
		\displaystyle
		\frac{2r(i\beta_1)}
		{1+r(i\beta_1)\rho(i\beta_1)},
		& \arg\zeta\in\left(-\dfrac{\pi}{2},0\right).
	\end{cases}
\end{align}
Therefore, the jump matrix may equivalently be written as
\begin{align}\label{V3-after-ptilde}
	V^{(3)}(\zeta)=
	\begin{cases}
		\begin{pmatrix}
			1&0\\[1mm]
			\displaystyle
			i\sqrt{\frac{r(i\beta_1)}{\rho(i\beta_1)}}
			\frac{r(k)\rho(k)-1}{2r(k)}
			\delta_1(k)^{-2}\widetilde p(\zeta)^{-2}e^{2i\zeta^2}
			&1
		\end{pmatrix},
		& \arg\zeta=\dfrac{\pi}{4},
		\\[6mm]
		\begin{pmatrix}
			1&
			\displaystyle
			i\sqrt{\frac{\rho(i\beta_1)}{r(i\beta_1)}}
			\frac{r(k)\rho(k)-1}{2r(k)}
			\left(
			\frac{2r(i\beta_1)}
			{1+r(i\beta_1)\rho(i\beta_1)}
			\right)^2
			\delta_1(k)^2\widetilde p(\zeta)^2e^{-2i\zeta^2}
			\\[1mm]
			0&1
		\end{pmatrix},
		& \arg\zeta=-\dfrac{\pi}{4},
		\\[6mm]
		\begin{pmatrix}
			1&
			\displaystyle
			-i\sqrt{\frac{\rho(i\beta_1)}{r(i\beta_1)}}
			\frac{r(k)\rho(k)-1}{2\rho(k)}
			\delta_1(k)^2\widetilde p(\zeta)^2e^{-2i\zeta^2}
			\\[1mm]
			0&1
		\end{pmatrix},
		& \arg\zeta=\dfrac{3\pi}{4},
		\\[6mm]
		\begin{pmatrix}
			1&0\\[1mm]
			\displaystyle
			-i\sqrt{\frac{r(i\beta_1)}{\rho(i\beta_1)}}
			\frac{r(k)\rho(k)-1}{2\rho(k)}
			\left(
			\frac{2\rho(i\beta_1)}
			{1+r(i\beta_1)\rho(i\beta_1)}
			\right)^2
			\delta_1(k)^{-2}\widetilde p(\zeta)^{-2}e^{2i\zeta^2}
			&1
		\end{pmatrix},
		& \arg\zeta=-\dfrac{3\pi}{4}.
	\end{cases}
\end{align}

Set
\begin{align}\label{def-q-ibeta}
	q(i\beta_1)
	:=
	i\frac{r(i\beta_1)\rho(i\beta_1)-1}
	{2\sqrt{r(i\beta_1)\rho(i\beta_1)}}.
\end{align}
Then, for fixed $\zeta$, as the large parameter tends to infinity,
\[
i\sqrt{\frac{r(i\beta_1)}{\rho(i\beta_1)}}
\frac{r(k(\zeta))\rho(k(\zeta))-1}{2r(k(\zeta))}
\longrightarrow q(i\beta_1),
\qquad
\delta_1(k(\zeta))^{\pm2}\longrightarrow 1.
\]
Thus $V^{(3)}(\zeta)$ tends to the jump matrix $v^X$ of the model RH problem
in Appendix~A of \cite{Boutet3}.

Consequently, the local parametrix in $D_\epsilon(i\beta_1)$ is defined by
\begin{align}\label{local-parametrix-ibeta}
	P^{(i\beta_1)}(k)
	=
	Y_{i\beta_1}(k)\,
	m^X\bigl(q(i\beta_1),\zeta(k)\bigr)\,
	\delta_0^{-\sigma_3}
	\left(\frac{r(i\beta_1)}{\rho(i\beta_1)}\right)^{-\frac14\sigma_3}
	e^{-i\hat g_+(i\beta_1)\sigma_3}B(k)^{-1}A(k)^{-1}
	\widetilde f(k)^{\sigma_3},
\end{align}
where $m^X$ denotes the solution of the model RH problem in Appendix~A of
\cite{Boutet3}. The prefactor $Y_{i\beta_1}$ is analytic in
$D_\epsilon(i\beta_1)$ and is determined by the matching condition
\[
P^{(i\beta_1)}(k)\bigl(P^{(\infty)}(k)\bigr)^{-1}
=
I+\mathcal O(t^{-1/2}),
\qquad k\in\partial D_\epsilon(i\beta_1),
\qquad t\to\infty.
\]
Accordingly, we choose
\begin{align}\label{def-Y-ibeta}
	Y_{i\beta_1}(k)
	=
	P^{(\infty)}(k)\,
	\widetilde f(k)^{-\sigma_3}
	A(k)B(k)\,e^{-i\hat g_+(i\beta_1)\sigma_3}
	\delta_0^{\sigma_3}
	\left(\frac{r(i\beta_1)}{\rho(i\beta_1)}\right)^{\frac14\sigma_3}.
\end{align}
Indeed, $P^{(\infty)}(k)\widetilde f(k)^{-\sigma_3}$ and $B^{-1}(k)A^{-1}(k)$ have
the same jumps across $X_3\cup X_6$. Hence the possible discontinuities cancel,
and $Y_{i\beta_1}$ is analytic in $D_\epsilon(i\beta_1)$. The local parametrix $P^{-i\beta_1}$ can be construct by the symmetry
\begin{align}
	P^{-i\beta_1}(k)=\begin{pmatrix}
		0&1\\1&0
	\end{pmatrix}
		P^{i\beta_1}(-k)
	\begin{pmatrix}
		0&1\\1&0
	\end{pmatrix}.
\end{align}
Let $\mathcal X$ be the union of the cross
$$ \mathcal X=X_1\cup X_2 \cup X_4 \cup X_5. $$
\begin{lemma}
	The function $P^{i\beta_1}(k)$ defined in \eqref{local-parametrix-ibeta} is an analytic and bounded function of $k\in D_\epsilon(i\beta_1)\setminus (\mathcal{X}\cup X_3\cup X_6)$. Moreover, the prefactor $Y_{i\beta_1}(k;x,t)$ is analytic and bounded in $D_\epsilon(i\beta_1)$. Across $\mathcal{X}\cup X_3\cup X_6$, the function $P^{i\beta_1}(k)$ satisfies the jump condition $P_{+}^{i\beta_1}(k)=P_-^{i\beta_1}(k)V^{i\beta_1}$, where the jump matrix $V^{i\beta_1}$ satisfies
	$$V^{i\beta_1}=\tilde V_{T},\ \ \ k\in X_3\cup X_6 ,$$
	and 
	\begin{equation}\label{eq:vmu-estimates}
		\begin{aligned}
			\|\tilde V_{T}-V^{i\beta_1}\|_{L^1(\mathcal X)}
			&=\mathcal O(t^{-1}\ln t),\\
			\|\tilde V_{T}-V^{i\beta_1}\|_{L^2( \mathcal X)}
			&=\mathcal O(t^{-3/4}\ln t),\\
			\|\tilde V_{T}-V^{i\beta_1}\|_{L^\infty(\mathcal X)}
			&=\mathcal O(t^{-1/2}\ln t).
		\end{aligned}
	\end{equation}
	Furthermore, on \(\partial D_\epsilon(i\beta_1)\),
	\begin{equation}\label{eq:mmod-mmu-bound}
		\bigl\|P^{\infty}(P^{i\beta_1})^{-1}-I\bigr\|_{L^\infty(\partial D_\epsilon(i\beta_1))}
		=
		\mathcal O(t^{-1/2}),
		\qquad t\to+\infty,
	\end{equation}
	and
	\begin{equation}\label{eq:mmod-mmu-integral}
		P^{\infty}(P^{i\beta_1})^{-1}-I
		=
		\frac{Y_{i\beta_1}(x,t,i\beta_1)m_1^X Y_{i\beta_1}(x,t,i\beta_1)^{-1}}
		{\sqrt t\,(k-i\beta_1)\psi_{i\beta_1}(i\beta_1)}
		+\mathcal O(t^{-1}),\ t\to\infty,\ \ k\in \partial D_\epsilon(i\beta_1).
	\end{equation}
	where
	\begin{equation}\label{eq:mX1-def}
		m_1^X=
		\begin{pmatrix}
			0&-e^{-\pi\nu}\beta^X(q)\\
			e^{\pi\nu}\beta^X(q)&0
		\end{pmatrix},
	\end{equation}
	with $\beta^X(q)$ given by 
	\begin{equation}\label{A.5}
		\beta^{X}(q)
		:=
		\frac{\sqrt{\nu(q)}}{2}
		\exp\left\{
		i\left(
		-\frac{3\pi}{4}
		-2\nu(q)\ln 2
		-\arg q
		+\arg\Gamma(i\nu(q))
		\right)
		\right\}.
	\end{equation}
\end{lemma}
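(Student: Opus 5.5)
We treat the lemma as the standard parabolic-cylinder parametrix lemma, following the proof of the analogous statement in Appendix~A of \cite{Boutet3}. The argument has four parts: analyticity and boundedness, the jump on the cut, the estimates on the cross, and the matching expansion on $\partial D_\epsilon(i\beta_1)$.

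\emph{Analyticity and boundedness.} We begin with the prefactor $Y_{i\beta_1}$ in \eqref{def-Y-ibeta}.

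1. The factors $e^{-i\hat g_+(i\beta_1)\sigma_3}$ and $(r(i\beta_1)/\rho(i\beta_1))^{\sigma_3/4}$ are constants.
2. The factor $\delta_0^{\sigma_3}$ depends only on $t$. Since $\nu_\rho+\nu_r=i\nu$ with $\nu$ real, one has $|t^{-(\nu_\rho+\nu_r)/2}|=1$, so $\delta_0$ stays bounded.
3. $P^\infty\tilde f^{-\sigma_3}$ and $(AB)^{-1}$ have the same jumps on $X_3\cup X_6$: there $P^\infty$ jumps by the constant off-diagonal matrices, $\tilde f$ supplies the scalar factors, and $A,B$ were designed to remove them. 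Hence $Y_{i\beta_1}$ has no jump in the disk.
4. The only possible singularity is then at $i\beta_1$. It is isolated, and $\delta^{\pm1}$, $\tilde f^{\pm1}$ grow at most like $|k-i\beta_1|^{\pm\Re(\nu_\rho+\nu_r)}=|k-i\beta_1|^{0}$ up to bounded factors, by Lemma C.1 of \cite{Boutet3}. So the singularity is removable.
5. Boundedness of $Y_{i\beta_1}$ follows, together with boundedness of $P^\infty$ and $(P^\infty)^{-1}$ near $i\beta_1$, which holds because $\det P^\infty\equiv1$ and $i\beta_1$ is not an endpoint.

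For $P^{i\beta_1}$ itself, we use that $m^X$ is bounded on bounded sets of $\zeta$. This gives analyticity and boundedness of $P^{i\beta_1}$ off the contour.

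\emph{Jump on the cut.} Undoing the transformations $M^{(1)}\to M^{(2)}\to M^{(3)}$, we read off that $P^{i\beta_1}$ has jump $\tilde V_T$ on $X_3\cup X_6$. This is exact, because $m^X$ has no jump on $\mathbb R$ once the $\tilde p$ rewriting \eqref{p-ptilde-relation} is absorbed.

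\emph{Estimates \eqref{eq:vmu-estimates} on the cross.} On $\mathcal X$ we compare $V^{(3)}$ in \eqref{V3-after-ptilde} with $v^X(q(i\beta_1),\zeta)$. The difference of the off-diagonal entries has the form
\[
\bigl[h(k)\delta_1(k)^{\pm2}-h(i\beta_1)\bigr]\widetilde p^{\pm2}e^{\pm2i\zeta^2},
\]
where $h$ is analytic near $i\beta_1$. We then proceed in two steps.

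1. \emph{Bound the bracket.} Since $h\delta_1^{\pm2}$ is analytic, the bracket is $\mathcal O(|k-i\beta_1|)=\mathcal O(|\zeta|/\sqrt t)$. The logarithmic loss $\ln t$ enters through $\log\psi$ and the $\log t$ terms carried in $\delta_1$ and $\delta_0$, exactly as in \cite{Boutet3}.
2. \emph{Use the Gaussian decay.} On the rays $\arg\zeta=\pm\pi/4,\pm3\pi/4$ one has $|e^{\pm2i\zeta^2}|=e^{-2|\zeta|^2}$, and $|\widetilde p^{\pm2}|$ is bounded by a constant times $e^{\pi\nu}$. Hence the entrywise difference is $\mathcal O\bigl(t^{-1/2}\ln t\,|\zeta|e^{-|\zeta|^2}\bigr)$.

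Changing variables $dk=d\zeta/(\sqrt t\,\psi)$ and integrating the Gaussian bound gives the $L^\infty$, $L^2$ and $L^1$ bounds $t^{-1/2}\ln t$, $t^{-3/4}\ln t$ and $t^{-1}\ln t$ respectively.

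\emph{Matching \eqref{eq:mmod-mmu-bound}--\eqref{eq:mmod-mmu-integral}.} On $\partial D_\epsilon(i\beta_1)$ we have $|\zeta|\asymp\sqrt t$. There we insert the large-$\zeta$ expansion $m^X=I+m_1^X/\zeta+\mathcal O(\zeta^{-2})$ from Appendix~A of \cite{Boutet3}, which is uniform for $q$ in compacts, into
\[
P^{i\beta_1}(P^\infty)^{-1}=Y_{i\beta_1}\,m^X\,Y_{i\beta_1}^{-1}.
\]
This identity holds because every factor between $m^X$ and the final $\widetilde f^{\sigma_3}$ cancels by construction \eqref{def-Y-ibeta}. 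Substituting $\zeta=i\sqrt t(k-i\beta_1)\psi(k)$ and replacing $Y_{i\beta_1}(k)$ and $\psi(k)$ by their values at $i\beta_1$ produces the stated leading term with an $\mathcal O(t^{-1})$ error. The constants $i$, $e^{\pm\pi\nu}$ are absorbed into the definition of $m_1^X$ and $\beta^X$. Inverting the expansion gives the same $\mathcal O(t^{-1/2})$ bound for $P^\infty(P^{i\beta_1})^{-1}-I$.

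\emph{Main obstacle.} The delicate step is bookkeeping the branches. One must check that the split \eqref{p-ptilde-relation} and the choice of sectors in $A$ and $B$ make $Y_{i\beta_1}$ genuinely single-valued with a removable singularity at $i\beta_1$. One must also check that the constants $q$, $e^{\pm\pi\nu}$ and $\arg q$ match the normalization of $m^X$ in \cite{Boutet3}. I would verify this first, by computing the jumps of $Y_{i\beta_1}$ on $X_3$ and $X_6$ sector by sector.
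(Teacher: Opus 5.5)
Your route is the one the paper intends: its proof just refers to Lemma 6.5 of \cite{Boutet3}. Your analyticity and boundedness argument for $Y_{i\beta_1}$, and the exact jump on $X_3\cup X_6$, are fine. Two steps, however, fail as you justify them.

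\emph{Cross estimates.} You use that $h\delta_1^{\pm2}$ is analytic at $i\beta_1$, but $\delta_1$ is not. On $i(\eta_1,\beta_1)$ one needs $\delta_+/\delta_-=\frac{2\rho(k)}{1+r(k)\rho(k)}$, while $e^{\nu_\rho\log_{i\eta_1}(k-i\beta_1)}$ only produces the constant ratio $\frac{2\rho(i\beta_1)}{1+r(i\beta_1)\rho(i\beta_1)}$. Hence $\chi$, and with it $\delta_1$, jumps across $X_3\cup X_6$ by a nonconstant amount; the paper's assertion that $\chi$ is analytic is not correct. What does hold is that $\chi$ is continuous at $i\beta_1$ with
$|\chi(k)-\chi(i\beta_1)|\le C|k-i\beta_1|\,(1+|\ln|k-i\beta_1||)$.
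This is because the density in \eqref{def-local-delta} switches at $i\beta_1$ from $\log\frac{2\rho}{1+r\rho}$ to $-\log\frac{2r}{1+r\rho}$, and their derivatives generically differ there.

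That log-Lipschitz modulus is the actual source of the $\ln t$. $\log\psi_{i\beta_1}$ is analytic, and $\delta_1$ contains no $\log t$: the $\log t$ sits in the constant $\delta_0$, which is bounded and conjugated away. If $\delta_1$ were analytic, your argument would give the bounds with no $\ln t$ at all. On $\mathcal X$ one has $|\ln|k-i\beta_1||\le\frac12\ln t+|\ln|\zeta||+C$, so the bracket is $\mathcal O\bigl(t^{-1/2}|\zeta|(\ln t+|\ln|\zeta||)\bigr)$. The Gaussian factor then yields exactly the three bounds in \eqref{eq:vmu-estimates}.

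\emph{Matching.} On $\partial D_\epsilon(i\beta_1)$ the distance $|k-i\beta_1|=\epsilon$ is fixed. Freezing $Y_{i\beta_1}(k)$ and $\psi_{i\beta_1}(k)$ at $i\beta_1$ therefore changes the leading term by a constant multiple of $t^{-1/2}\frac{F(k)-F(i\beta_1)}{k-i\beta_1}$, where $F=Y_{i\beta_1}m_1^XY_{i\beta_1}^{-1}/\psi_{i\beta_1}$ is analytic. That change is $\mathcal O(t^{-1/2})$, not $\mathcal O(t^{-1})$.

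Pointwise you only get
$P^\infty(P^{i\beta_1})^{-1}-I=-Y_{i\beta_1}(k)m_1^XY_{i\beta_1}(k)^{-1}/\zeta(k)+\mathcal O(t^{-1})$.
The frozen form \eqref{eq:mmod-mmu-integral} holds only after integrating over $\partial D_\epsilon(i\beta_1)$ against a function analytic in the disk, such as $(s-\frac i2)^{-j}$ in the small-norm step. There the residue theorem evaluates $F$ exactly at $i\beta_1$. This integrated statement is the one to prove; it is the form of the analogous estimate in \cite{Boutet3}, and it is all the small-norm step uses.

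Two bookkeeping points also need fixing.
\begin{enumerate}
\item The minus sign from inverting $m^X$, and the factor $i$ in $\zeta=i\sqrt t(k-i\beta_1)\psi_{i\beta_1}$, must be carried through the rescaling to the model variable of \cite{Boutet3}. They cannot be absorbed into $m_1^X$, which \eqref{eq:mX1-def}--\eqref{A.5} fix explicitly.
\item The identity $P^{i\beta_1}(P^\infty)^{-1}=Y_{i\beta_1}m^XY_{i\beta_1}^{-1}$ requires $Y_{i\beta_1}$ to contain $e^{+it\hat g_+(i\beta_1)\sigma_3}$, the inverse of the factor in \eqref{local-parametrix-ibeta}. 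With \eqref{def-Y-ibeta} as printed, these factors do not cancel.
\end{enumerate}
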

\begin{proof}
	The proof is similar as Lemma 6.5 of \cite{Boutet3}.
\end{proof}

\subsubsection{ Derivation of $u(x,t)$ for large  $t$ and small norm argument }
Let $\mathcal D$ denote the union of the six open disks, and $\partial D$ its boundary:
$$\mathcal D=D_\epsilon(\pm i\beta_1) \cup D_\epsilon(\pm i\eta_1) \cup D_\epsilon(\pm i\eta_2).$$
Define the error vector
\begin{gather}
	\mathcal{E}(k) = \tilde T(k) \left( P(k) \right)^{-1},
	\label{error_def}
	\shortintertext{where the global parametrix $P(k)$ is defined by}
	P(k) = \begin{cases}
		P^{\infty}(k),  & k\in\mathbb{C}\setminus (D_\epsilon{(\pm i\eta_1)}\cup D_\epsilon{(\pm i\eta_2)}\cup D_\epsilon{(\pm i\beta_1)}),  \\
		P^{\pm i\eta_{2}}(k), & k\in D_\epsilon{(\pm i\eta_2)},\\
		P^{i\eta_{1}} (k), & k\in D_\epsilon{(\pm i\eta_1)}, \\
		P^{\pm i\beta_{1}}(k), & k\in D_\epsilon{(\pm i\beta_1)}.
	\end{cases}
	\label{global_p1}
\end{gather}
By the same argument as in
\cite{Girotti-1}, the possible singularity of \(\mathcal E(k)\) at \(k=0\) is removable.

We next show that \(\mathcal E(k)-(1,1)\) is small as \(t\to+\infty\). Let $\Sigma_{\mathcal{E}}=((\tilde{\mathcal{C}}_{\pm 1}\cup\tilde{\mathcal{C}}_{\pm 2})\setminus \bar{\mathcal D})\cup \partial \mathcal{D} \cup \mathcal{X}$, which is displayed in Fig~\ref{fig:local-contour}. Then \(\mathcal E\) satisfies the RH problem
\begin{equation}\label{E_RHP}
	\begin{cases}
		\mathcal E(k)\text{ is analytic for }k\in\mathbb C\setminus\Sigma_{\mathcal E},\\[1mm]
		\mathcal E_+(k)=\mathcal E_-(k)V_{\mathcal E}(k),
		\qquad k\in\Sigma_{\mathcal E},\\[1mm]
		\mathcal E(k)=(1,1)+\mathcal O(k^{-1}),
		\qquad k\to\infty.
	\end{cases}
\end{equation}
The jump matrix \(V_{\mathcal E}\) is given by
\begin{equation}\label{VE_def}
	V_{\mathcal E}(k)=
	\begin{cases}
		P^\infty_-(k)V_{\widetilde T}(k)P^\infty_+(k)^{-1},
		& k\in
		\bigl(\widetilde{\mathcal C}_{\pm1}\cup
		\widetilde{\mathcal C}_{\pm2}\bigr)
		\setminus \mathcal D,
		\\[1mm]
		P^\infty(k)\bigl(P^{\pm i\eta_2}(k)\bigr)^{-1},
		& k\in\partial D_\epsilon(\pm i\eta_2),
		\\[1mm]
		P^\infty(k)\bigl(P^{\pm i\eta_1}(k)\bigr)^{-1},
		& k\in\partial D_\epsilon(\pm i\eta_1),
		\\[1mm]
		P^\infty(k)\bigl(P^{\pm i\beta_1}(k)\bigr)^{-1},
		& k\in\partial D_\epsilon( i\beta_1),
		\\[1mm]
		P^{ i\beta_1}_-(k)V_{\tilde T}(k)
		\bigl(P^{ i\beta_1}_+(k)\bigr)^{-1},
		& k\in\mathcal X,
		\\[1mm]
		P^{ -i\beta_1}_-(k)V_{\tilde T}(k)
		\bigl(P^{ -i\beta_1}_+(k)\bigr)^{-1},
		& k\in \hat{\mathcal X}.
	\end{cases}
\end{equation}
Here \(V_{\tilde T}\) denotes the jump matrix of \(\tilde T\) and $\hat{\mathcal X}$ is the cross in $D_\epsilon{(-i\beta_1)}$.
\begin{figure}[htbp]
	\centering
	\begin{tikzpicture}[
		scale=1,
		line cap=round,
		line join=round,
		thick,
		>={Stealth}
		]
		
		\tikzset{
			onearrow/.style={
				postaction={decorate},
				decoration={
					markings,
					mark=at position 0.5 with {\arrow{Stealth}}
				}
			},
			twoarrows/.style={
				postaction={decorate},
				decoration={
					markings,
					mark=at position 0.25 with {\arrow{Stealth}},
					mark=at position 0.75 with {\arrow{Stealth}}
				}
			}
		}
		
		
		\draw[onearrow] (0.42,3.2) arc[start angle=0,end angle=360,radius=0.42];
		\draw[onearrow] (0.30,1.6) arc[start angle=0,end angle=360,radius=0.30];
		\draw[onearrow] (0.42,0.0) arc[start angle=0,end angle=360,radius=0.42];
		
		\fill (0,3.2) circle (1.2pt);
		\fill (0,1.6) circle (1.2pt);
		\fill (0,0.0) circle (1.2pt);
		
		\draw[twoarrows]
		(-0.15,0.34)
		.. controls (-0.55,0.80) and (-0.45,1.20) .. (-0.05,1.60)
		.. controls (0.35,1.95) and (0.50,2.45) .. (0.12,2.86);
		
		\draw[twoarrows]
		(0.15,0.34)
		.. controls (0.55,0.80) and (0.45,1.20) .. (0.05,1.60)
		.. controls (-0.35,1.95) and (-0.50,2.45) .. (-0.12,2.86);
		
		\node[right] at (0.62,3.20) {$i\eta_2$};
		\node[right] at (0.42,1.62) {$i\beta_1$};
		\node[right] at (0.62,0.00) {$i\eta_1$};
		
		\begin{scope}[yshift=-5.1cm]
			
			\draw[onearrow] (0.42,3.2) arc[start angle=0,end angle=360,radius=0.42];
			\draw[onearrow] (0.30,1.6) arc[start angle=0,end angle=360,radius=0.30];
			\draw[onearrow] (0.42,0.0) arc[start angle=0,end angle=360,radius=0.42];
			
			\fill (0,3.2) circle (1.2pt);
			\fill (0,1.6) circle (1.2pt);
			\fill (0,0.0) circle (1.2pt);
			
			\draw[twoarrows]
			(-0.15,0.34)
			.. controls (-0.55,0.80) and (-0.45,1.20) .. (-0.05,1.60)
			.. controls (0.35,1.95) and (0.50,2.45) .. (0.12,2.86);
			
			\draw[twoarrows]
			(0.15,0.34)
			.. controls (0.55,0.80) and (0.45,1.20) .. (0.05,1.60)
			.. controls (-0.35,1.95) and (-0.50,2.45) .. (-0.12,2.86);
			
			\node[right] at (0.62,3.20) {$-i\eta_1$};
			\node[right] at (0.42,1.62) {$-i\beta_1$};
			\node[right] at (0.62,0.00) {$-i\eta_2$};
			
		\end{scope}
		
	\end{tikzpicture}
	\caption{The local contour configuration.}
	\label{fig:local-contour}
\end{figure}
Note that the solution $\mathcal E$ which we have constructed obeys the symmetry 
\begin{align}
\mathcal E(-k)=\mathcal E (k)\begin{pmatrix}
	0&1\\1&0
\end{pmatrix}.
\end{align}
Indeed, the jump matrices $V_{\mathcal E}(k)$ all satisfy the symmetry
\begin{align}\label{sym}
	V_{\mathcal E}(-k)= \begin{pmatrix}
		0&1\\1&0
	\end{pmatrix}V_{\mathcal E}(k)\begin{pmatrix}
	0&1\\1&0
	\end{pmatrix}.
\end{align}
Set
\begin{equation}\label{WE_def}
	W_{\mathcal E}(k):=V_{\mathcal E}(k)-I.
\end{equation}
By the construction of the global and local parametrices, the jumps of \(\mathcal E\)
are small in the following sense. Away from the disks, the jump matrices on
\(\widetilde{\mathcal C}_{\pm1}\cup\widetilde{\mathcal C}_{\pm2}\) are exponentially
close to the identity. Hence, for some constant \(c>0\),
\begin{equation}\label{WE_exp_est}
	\|W_{\mathcal E}\|_{(L^1\cap L^2\cap L^\infty)
		((\widetilde{\mathcal C}_{\pm1}\cup
		\widetilde{\mathcal C}_{\pm2})\setminus\mathcal D)}
	=
	\mathcal O(e^{-ct}),
	\qquad t\to+\infty.
\end{equation}
On the boundaries of the endpoint disks, the Bessel matching conditions give, 
\begin{equation}\label{WE_airy_est}
	\|W_{\mathcal E}\|_{L^\infty(
		\partial D_\epsilon(\pm i\eta_1)
		\cup
		\partial D_\epsilon(\pm i\eta_2))}
	=
	\mathcal O(t^{-1}),
	\qquad t\to+\infty.
\end{equation}
On the boundaries of the disks centered at \(\pm i\beta_1\), the parabolic-cylinder
matching conditions yield
\begin{equation}\label{WE_beta_boundary_est}
	\|W_{\mathcal E}\|_{L^\infty(\partial D_\epsilon(\pm i\beta_1))}
	=
	\mathcal O(t^{-1/2}),
	\qquad t\to+\infty.
\end{equation}
Moreover, on the local cross \(\mathcal X\cup \hat{\mathcal X} \), the mismatch between the exact jump and
the parabolic-cylinder model satisfies
\begin{equation}\label{WE_cross_est}
	\begin{aligned}
		\|W_{\mathcal E}\|_{L^1(\mathcal X \cup \hat{\mathcal X})}
		&=\mathcal O(t^{-1}\ln t),\\
		\|W_{\mathcal E}\|_{L^2(\mathcal X \cup \hat{\mathcal X})}
		&=\mathcal O(t^{-3/4}\ln t),\\
		\|W_{\mathcal E}\|_{L^\infty(\mathcal X \cup \hat{\mathcal X})}
		&=\mathcal O(t^{-1/2}\ln t),
	\end{aligned}
	\qquad t\to+\infty.
\end{equation}
Combining \eqref{WE_exp_est}--\eqref{WE_cross_est}, we obtain
\begin{equation}\label{WE_total_est}
	\|W_{\mathcal E}\|_{(L^1\cap L^2)(\Sigma_{\mathcal E})}
	=
	\mathcal O(t^{-1/2}),
	\qquad
	\|W_{\mathcal E}\|_{L^\infty(\Sigma_{\mathcal E})}
	=
	\mathcal O(t^{-1/2}\ln t).
\end{equation}

Let \(C_{\mathcal E}\) be the Cauchy operator on \(\Sigma_{\mathcal E}\),
\[
(C_{\mathcal E}f)(k)
=
\frac{1}{2\pi i}
\int_{\Sigma_{\mathcal E}}\frac{f(s)}{s-k}\,ds,
\qquad k\in\mathbb C\setminus\Sigma_{\mathcal E},
\]
and let \(C_{\mathcal E,\pm}\) denote its non-tangential boundary values. Define
\[
C_{W_{\mathcal E}}f
:=
C_{\mathcal E,-}\bigl(fW_{\mathcal E}\bigr),
\qquad f\in L^2(\Sigma_{\mathcal E}).
\]
Then
\begin{equation}\label{CWE_bound}
	\|C_{W_{\mathcal E}}\|_{\mathcal B(L^2(\Sigma_{\mathcal E}))}
	\leq
	C\|W_{\mathcal E}\|_{L^\infty(\Sigma_{\mathcal E})}
	=
	\mathcal O(t^{-1/2}\ln t),
	\qquad t\to+\infty.
\end{equation}
Therefore, for all sufficiently large \(t\), the operator
\(I-C_{W_{\mathcal E}}\) is invertible on \(L^2(\Sigma_{\mathcal E})\). Let
\(\mu_{\mathcal E}\in (1,1)+L^2(\Sigma_{\mathcal E})\) be defined by
\begin{equation}\label{muE_def}
	\mu_{\mathcal E}
	=
	(1,1)+
	(I-C_{W_{\mathcal E}})^{-1}
	C_{W_{\mathcal E}}(1,1).
\end{equation}
Equivalently,
\[
\mu_{\mathcal E}
=
(1,1)+C_{W_{\mathcal E}}\mu_{\mathcal E}.
\]
The Neumann-series estimate gives
\begin{equation}\label{muE_est}
	\|\mu_{\mathcal E}-(1,1)\|_{L^2(\Sigma_{\mathcal E})}
	=
	\mathcal O(t^{-1/2}),
	\qquad t\to+\infty.
\end{equation}
Consequently, the error RH problem \eqref{E_RHP} has a unique solution for all
sufficiently large \(t\), and this solution is represented by
\begin{equation}\label{E_solution}
	\mathcal E(k)
	=
	(1,1)
	+
	\frac{1}{2\pi i}
	\int_{\Sigma_{\mathcal E}}
	\frac{\mu_{\mathcal E}(s)W_{\mathcal E}(s)}{s-k}\,ds.
\end{equation}
In particular, as $k\to \frac i2$, $\mathcal E(k)$ has the asymptotic expansion,
\begin{equation}\label{E_small_est}
	\mathcal E(k)=\mathcal E(\frac i2)+\mathcal E_1(y,t) (k-\frac i2)+\mathcal O((k-\frac i2)^2),
\end{equation}
where $\mathcal E_1(y,t)=\frac{1}{2\pi i}
\int_{\Sigma_{\mathcal E}}
\frac{\mu_{\mathcal E}(s)W_{\mathcal E}(s)}{(s-\frac i2)^2}\,ds$.
We next estimate the size of the two coefficients in
\eqref{E_small_est}. 
Put
\[
\Sigma_{\mathcal E}^{\prime}
:=
\bigl(\widetilde{\mathcal C}_{\pm1}\cup
\widetilde{\mathcal C}_{\pm2}\bigr)\setminus\mathcal D .
\]
On \(\Sigma_{\mathcal E}^{\prime}\), the jumps are exponentially close to the
identity. Therefore, by \eqref{WE_exp_est} and \eqref{muE_est},
\begin{equation}\label{E_i2_exp_part}
	\int_{\Sigma_{\mathcal E}^{\prime}}
	\frac{\mu_{\mathcal E}(s)W_{\mathcal E}(s)}
	{\left(s-\frac{i}{2}\right)^j}\,ds
	=
	\mathcal O(e^{-ct}),
	\qquad j=1,2,\ \ t\to\infty.
\end{equation}
Similarly, the matching estimates on the Bessel disks give,
\begin{equation}\label{E_i2_airy_part}
	\int_{\partial D_\epsilon(\pm i\eta_1)\cup
		\partial D_\epsilon(\pm i\eta_2)}
	\frac{\mu_{\mathcal E}(s)W_{\mathcal E}(s)}
	{\left(s-\frac{i}{2}\right)^j}\,ds
	=
	\mathcal O(t^{-1}),
	\qquad j=1,2,\ \ t\to\infty.
\end{equation}
On the local cross \(\mathcal X \cup \hat{\mathcal X}\), using \eqref{WE_cross_est} and
\eqref{muE_est}, we obtain
\begin{equation}\label{E_i2_cross_part}
	\int_{\mathcal X\cup \hat{\mathcal X}}
	\frac{\mu_{\mathcal E}(s)W_{\mathcal E}(s)}
	{\left(s-\frac{i}{2}\right)^j}\,ds
	=
	\mathcal O(t^{-1}\ln t),
	\qquad j=1,2,\ \ t\to\infty.
\end{equation}
For convenience, set
\begin{equation}\label{Aibeta-def}
	\mathcal A_{i\beta_1}(x,t)
	:=
	\frac{
		Y_{i\beta_1}(x,t,i\beta_1)m_1^X
		Y_{i\beta_1}(x,t,i\beta_1)^{-1}}
	{\psi_{i\beta_1}(i\beta_1)}.
\end{equation}
By \eqref{eq:mmod-mmu-integral}, we know the contribution from $\partial D_\epsilon( i\beta_1)$ is
\begin{align*}
	&
	\frac{1}{2\pi i}
	\int_{\partial D_\epsilon( i\beta_1)}
	\frac{\mu_{\mathcal E}(s)W_{\mathcal E}(s)}{(s-\frac i2)^j}\d s\\
	=&
	\frac{1}{2\pi i}
	\int_{\partial D_\epsilon( i\beta_1)}
	\frac{(1,1)W_{\mathcal E}(s)}{(s-\frac i2)^j}\d s
	+
	\frac{1}{2\pi i}
	\int_{\partial D_\epsilon( i\beta_1)}
	\frac{(\mu_{\mathcal E}(s)-(1,1))W_{\mathcal E}(s)}{(s-\frac i2)^j}\d s\\
	=&\frac{
		(1,1)Y_{i\beta_1}(x,t,i\beta_1)m_1^X
		Y_{i\beta_1}(x,t,i\beta_1)^{-1}}
	{\left(i\beta_1-\frac i2\right)^j\psi_{i\beta_1}(i\beta_1)}t^{-1/2}+\mathcal{O}(\|\mu_{\mathcal E}(s)-(1,1)\|_{L^2(\partial D_{\epsilon}(i\beta_1))}\|W_{\mathcal E}(s)\|_{L^2(\partial D_{\epsilon}(i\beta_1))})\\
	=&(1,1)\frac{\mathcal A_{i\beta_1}(x,t)}{\left(i\beta_1-\frac i2\right)^j}t^{-1/2}+\mathcal{O}(t^{-1}),\ \ j=1,2,\ \ t\to\infty.
\end{align*}
By the symmetry \eqref{sym}, we also have
\begin{align*}
	&
	\frac{1}{2\pi i}
	\int_{\partial D_\epsilon( -i\beta_1)}
	\frac{\mu_{\mathcal E}(s)W_{\mathcal E}(s)}
	{\left(s-\frac i2\right)^j}\,ds \\
	=&
	-
	(1,1)
	\frac{
		\sigma_1\mathcal A_{i\beta_1}(x,t)\sigma_1}
	{\left(-i\beta_1-\frac i2\right)^j}
	t^{-1/2}
	+\mathcal O(t^{-1}),
	\qquad j=1,2,\quad t\to+\infty .
\end{align*}
Hence, we can rewrite \eqref{E_small_est} as
\begin{align}
	\mathcal{E}(k)=(1,1)+\mathcal E^{(1/2)}\left(\frac i2\right)t^{-1/2}+\mathcal E_1^{(1/2)}(y,t)t^{-1/2}(k-i/2)+\mathcal{O}(t^{-1}\ln t)+\mathcal{O}((k-i/2)^2),\ t\to\infty,
\end{align}
where 
\begin{align}
	&\mathcal E^{(1/2)}\left(\frac i2\right)=(1,1)\frac{\mathcal A_{i\beta_1}(x,t)}{\left(i\beta_1-\frac i2\right)}-(1,1)
	\frac{
		\sigma_1\mathcal A_{i\beta_1}(x,t)\sigma_1}
	{\left(-i\beta_1-\frac i2\right)},\\
	&\mathcal E_1^{(1/2)}(y,t)=(1,1)\frac{\mathcal A_{i\beta_1}(x,t)}{\left(i\beta_1-\frac i2\right)^2}-(1,1)
	\frac{
		\sigma_1\mathcal A_{i\beta_1}(x,t)\sigma_1}
	{\left(-i\beta_1-\frac i2\right)^2}.
\end{align}
Finally, trace back all the transformation we perform, $X(k)$ has the asymptotic expansion for $k\to\frac i2$,
\begin{align}\label{X-i2-expansion-explicit}
	X(k)
	=&\left[
	(1,1)
	+\frac{\mathcal E^{(1/2)}(\frac i2)}{\sqrt t}
	+\frac{\mathcal E_1^{(1/2)}(y,t)}{\sqrt t}
	\left(k-\frac i2\right)
	+\mathcal O\left(\frac{\ln t}{t}\right)
	+\mathcal O\left(\left(k-\frac i2\right)^2\right)
	\right]\nonumber\\
	&\times
	P^\infty(k)\tilde f(k)^{-\sigma_3}
	e^{-i(t\hat g(k)-\theta(k))\sigma_3}.
\end{align}
Write
\[
\mathcal E^{(1/2)}\left(\frac i2\right)=(e_1,e_2),
\qquad
\mathcal E_1^{(1/2)}(y,t)=(\widetilde e_1,\widetilde e_2).
\]
\begin{theorem}\label{thm:u-asymp-i2-explicit}
Assume that \(\eta_1\) and \(\eta_2\) fall into Case I. As \(t\to+\infty\), in the sector
\[
\xi\in(\hat\xi,0)\cup(0,\xi_*),
\]
or equivalently, in terms of the physical variable \(x\),
\[
\frac{x}{\omega t}\in\bigl(\Phi(\hat\xi),\Phi(0)\bigr)\cup\bigl(\Phi(0),\Phi(\xi_*)\bigr),
\]
where $\Phi$ is defined in \eqref{Phi},
the full CH soliton gas admits the following asymptotic behavior:
	\begin{align}
		u(x,t)
		=&
		\frac{4\omega(\eta_2^2-\eta_1^2)}
		{(1-4\eta_2^2)(1-4\eta_1^2)}
		-
		\frac{\eta_2\omega}
		{8K(\hat m)R(\frac i2)}
		\frac{\tilde F'\left(A(\frac i2)\right)}
		{\tilde F\left(A(\frac i2)\right)}
		\nonumber\\
		&+
		\frac{\omega}{2i\sqrt t}
		\left\{
		\partial_k
		\left[
		e_1+e_2
		+
		\frac{e_1-e_2}{2ik}
		\left(
		\frac{T^\infty_{1y}(k)}{T_1^\infty(k)}
		+
		\frac{T^\infty_{2y}(k)}{T_2^\infty(k)}
		\right)
		\right]_{k=\frac i2}
		\right.
		\nonumber\\
		&\qquad\qquad\left.
		+
		\left[
		\widetilde e_1+\widetilde e_2
		+
		\frac{\widetilde e_1-\widetilde e_2}{2ik}
		\left(
		\frac{T^\infty_{1y}(k)}{T_1^\infty(k)}
		+
		\frac{T^\infty_{2y}(k)}{T_2^\infty(k)}
		\right)
		\right]_{k=\frac i2}
		\right\}
		+
		\mathcal O\left(\frac{\ln t}{t}\right),
		\label{u-asymp-explicit}
	\end{align}
	and
		\begin{align}
		x
		&=
		y
		-2i\left(t\hat g-\theta\right)\!\left(\frac{i}{2}\right)
		+\log \tilde f^{-2}\!\left(\frac{i}{2}\right)
		+\log
		\tilde F(A(\frac i2))
		\nonumber\\
		&+
		\frac{e_1-e_2}{\sqrt t}
		\left[
		\frac{p(k)}{ik}
		+
		\frac{1}{2ik}
		\left(
		\frac{T^\infty_{1y}(k)}{T_1^\infty(k)}
		-
		\frac{T^\infty_{2y}(k)}{T_2^\infty(k)}
		\right)
		\right]_{k=\frac i2}
		+
		\mathcal O\left(\frac{\ln t}{t}\right),
		\label{xy-asymp-explicit}
	\end{align}
	\begin{align}
		\sqrt{\frac{m}{\omega}}
		&=
		\sqrt{\frac{1-4\eta_1^2}{1-4\eta_2^2}}\,
		\frac{\vartheta_3^2(0;2\tau)}
		{\vartheta_3^2\!\left(\dfrac{t\hat\Omega+\tilde\Delta}{2\pi};2\tau\right)}
		\tilde F(A(\frac i2))
		\nonumber\\
		&+
		\frac{T_1^\infty(\frac i2)T_2^\infty(\frac i2)}{\sqrt t}
		\left[
		e_1+e_2
		+
		\frac{e_1-e_2}{2ik}
		\left(
		\frac{T^\infty_{1y}(k)}{T_1^\infty(k)}
		+
		\frac{T^\infty_{2y}(k)}{T_2^\infty(k)}
		\right)
		\right]_{k=\frac i2}
		+
		\mathcal O\left(\frac{\ln t}{t}\right),
		\label{sqrt-m-asymp-explicit}
	\end{align}
	where $\tilde F(U)$ is given by 
	\begin{equation}\label{F2-def}
		\begin{aligned}
			\tilde F(U)
			&=
			\frac{
				\vartheta_3\!\left(
				2U+\dfrac{t\hat\Omega+\tilde\Delta}{2\pi}-\dfrac12;2\tau
				\right)}
			{
				\vartheta_3\!\left(
				2U-\dfrac12;2\tau
				\right)}
			\frac{
				\vartheta_3\!\left(
				-2U+\dfrac{t\hat\Omega+\tilde \Delta}{2\pi}-\dfrac12;2\tau
				\right)}
			{
				\vartheta_3\!\left(
				-2U-\dfrac12;2\tau
				\right)} .
		\end{aligned}
		\end{equation}
\end{theorem}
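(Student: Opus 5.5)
The plan is to feed the expansion \eqref{X-i2-expansion-explicit} into the reconstruction formula \eqref{u-recover} and expand everything to first order in $t^{-1/2}$, uniformly for $k$ near $i/2$. The work splits into two parts. First we identify the leading row vector $(1,1)P^\infty(k)$. Then we handle the correction $t^{-1/2}\bigl(\mathcal E^{(1/2)}(\tfrac i2)+\mathcal E_1^{(1/2)}(k-\tfrac i2)\bigr)P^\infty(k)$. The exponential factor $\tilde f(k)^{-\sigma_3}e^{-i(t\hat g-\theta)\sigma_3}$ is diagonal and does not depend on the error. It therefore drops out of the product $X_1X_2$. In the ratio $X_1/X_2$ it contributes exactly the terms $-2i(t\hat g-\theta)(\tfrac i2)+\log\tilde f^{-2}(\tfrac i2)$, as in Theorem~\ref{thm:elliptic-asymptotics}. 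By the definition \eqref{P_infinity0}, $(1,1)P^\infty=(T_1^\infty,T_2^\infty)$, and for a general row $(a,b)$,
\[
(a,b)P^\infty=\tfrac12\Bigl((a+b)T_1^\infty+(a-b)\bigl(\tfrac{p}{ik}T_1^\infty+\tfrac{1}{ik}T_{1y}^\infty\bigr),\;(a+b)T_2^\infty-(a-b)\bigl(\tfrac{p}{ik}T_2^\infty+\tfrac1{ik}T_{2y}^\infty\bigr)\Bigr).
\]
The correction terms are therefore explicit multiples of $T_j^\infty$ and $T_{jy}^\infty$.

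\textbf{Leading order.} With $\tilde\Delta$ replacing $\Delta$, the leading-order terms are identical to those in Theorem~\ref{thm:elliptic-asymptotics}. We take them over directly: the constant $\frac{4\omega(\eta_2^2-\eta_1^2)}{(1-4\eta_2^2)(1-4\eta_1^2)}$ and the $\tilde F'/\tilde F$ term come from expanding $T_1^\infty T_2^\infty=\gamma^2\frac{\vartheta_3^2(0)}{\vartheta_3^2(\cdot)}\tilde F(A(k))$ at $k=i/2$. For this we use $\gamma(\tfrac i2)^2=\sqrt{(1-4\eta_1^2)/(1-4\eta_2^2)}$ and $A'(k)=\frac{\eta_2}{4iK}\frac1{R(k)}$.

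\textbf{First-order corrections.} We compute three quantities.
\begin{enumerate}
\item[(i)] The ratio $X_1/X_2$ at $i/2$. Writing $(X_1,X_2)=(T_1^\infty(1+t^{-1/2}\alpha_1),T_2^\infty(1+t^{-1/2}\alpha_2))$ times the exponential factor, we get $\log(1+t^{-1/2}\alpha_1)-\log(1+t^{-1/2}\alpha_2)=t^{-1/2}(\alpha_1-\alpha_2)$. Reading $\alpha_j$ off the display above gives \eqref{xy-asymp-explicit}.
\item[(ii)] The value of $X_1X_2$ at $i/2$, which gives $\sqrt{m/\omega}$. To order $t^{-1/2}$ it equals $T_1^\infty T_2^\infty(1+t^{-1/2}(\alpha_1+\alpha_2))$, and the $p/(ik)$ terms cancel in $\alpha_1+\alpha_2$. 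This yields \eqref{sqrt-m-asymp-explicit}.
\item[(iii)] The coefficient of $(k-\tfrac i2)$ in $X_1X_2/\sqrt{m/\omega}$, which gives $u$. The $t^{-1/2}$ part is the $k$-derivative at $i/2$ of the bracket in $\alpha_1+\alpha_2$ with $(e_1,e_2)$ frozen, plus the same bracket with $(\tilde e_1,\tilde e_2)$ coming from the $(k-\tfrac i2)$ term of $\mathcal E$. Dividing by the leading value and multiplying by $\omega/(2i)$ gives \eqref{u-asymp-explicit}.
\end{enumerate}
The error $\mathcal O(\ln t/t)$ is inherited from \eqref{X-i2-expansion-explicit}, whose remainder is analytic near $i/2$ and so can be differentiated in $k$ without loss (Cauchy estimates on a small circle).

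\textbf{Changing variables.} The statement is in $x$, so the region $\xi\in(\hat\xi,0)\cup(0,\xi_*)$ has to be transported to the physical variable. We use \eqref{Phi} and the relation $x/(\omega t)=\Phi(\xi)+\mathcal O(t^{-1})$. This follows from (i), since every term other than $-2i(t\hat g-\theta)(\tfrac i2)$ is bounded: the theta quotients are bounded away from zero on the real torus. It requires $\Phi$ to be monotone, so that sectors map to sectors.

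\textbf{Main obstacle.} I expect the main difficulty to be justifying that the leading theta quotients, and in particular $T_1^\infty T_2^\infty(\tfrac i2)$, do not vanish, so that the logarithms and the division in (iii) make sense. I would take this from $\det P^\infty\equiv1$ together with positivity on the real torus: $t\hat\Omega+\tilde\Delta\in\mathbb R$ and $A(\tfrac i2)$ lies on a real segment of the Jacobian. These are the same arguments as in \cite{Girotti-1,GYJ}. A second point to check is the symmetric $-i\beta_1$ contribution, which I would handle through \eqref{sym} as in the derivation of $\mathcal E^{(1/2)}$.
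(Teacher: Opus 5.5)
Your argument follows the paper's proof. You insert \eqref{X-i2-expansion-explicit} into \eqref{u-recover} and use $(1,1)P^\infty=(T_1^\infty,T_2^\infty)$. The diagonal factor cancels in $X_1X_2$ and contributes $-2i(t\hat g-\theta)(\tfrac i2)+\log\tilde f^{-2}(\tfrac i2)$ to $\log(X_1/X_2)$. You then read off the $t^{-1/2}$ terms from $\alpha_1\pm\alpha_2$ and from $\partial_k\log(X_1X_2)$ at $k=\tfrac i2$, with the $(\widetilde e_1,\widetilde e_2)$ bracket coming from the $(k-\tfrac i2)$ term of $\mathcal E$.

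There is, however, a sign slip in your formula for $(a,b)P^\infty$. From \eqref{P_infinity0},
\[
\bigl[(a,b)P^\infty\bigr]_2=\tfrac12\Bigl((a+b)T_2^\infty-(a-b)\tfrac{p}{ik}T_2^\infty+(a-b)\tfrac{1}{ik}T_{2y}^\infty\Bigr),
\]
so $T_{2y}^\infty$ enters with a plus sign, as in \eqref{E-P-ratio-2}. With your sign, $\alpha_1+\alpha_2$ would contain $\frac{e_1-e_2}{2ik}\bigl(\frac{T_{1y}^\infty}{T_1^\infty}-\frac{T_{2y}^\infty}{T_2^\infty}\bigr)$, and $\alpha_1-\alpha_2$ would contain the corresponding sum. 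The two $T_y$-combinations in \eqref{xy-asymp-explicit} and in \eqref{u-asymp-explicit}, \eqref{sqrt-m-asymp-explicit} would then come out interchanged, so ``reading off'' your display does not give the stated formulas. Once the sign is corrected, your steps (i)--(iii) reproduce them exactly, including the $p/(ik)$ cancellation in $\alpha_1+\alpha_2$.

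Your remaining points go beyond what the paper does. The paper does not discuss the non-vanishing of $T_1^\infty T_2^\infty(\tfrac i2)$. For this, positivity alone suffices and $\det P^\infty\equiv1$ is not needed:
\begin{enumerate}
\item on $k=is$, $s>\eta_2$, the differential $\hat\omega$ is real, so $U_0:=A(\tfrac i2)\in\mathbb R$;
\item $w:=\frac{t\hat\Omega+\tilde\Delta}{2\pi}\in\mathbb R$;
\item the prefactor $\gamma(\tfrac i2)$ is positive;
\item $\vartheta_3(z;2\tau)>0$ for real $z$, since $2\tau\in i\mathbb R_+$.
\end{enumerate}
Hence $T_1^\infty(\tfrac i2)$ and $T_2^\infty(\tfrac i2)$ are each positive. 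The monotonicity of $\Phi$ that you flag is likewise only asserted in the paper.

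Finally, suppose you compute the leading term of $\log(X_1/X_2)(\tfrac i2)$ instead of importing it from Theorem~\ref{thm:elliptic-asymptotics}. Using $\vartheta_3(-2U_0-\tfrac12)=\vartheta_3(2U_0-\tfrac12)$, you obtain
\[
\log\bigl(T_1^\infty/T_2^\infty\bigr)(\tfrac i2)=\log\vartheta_3(2U_0+w-\tfrac12;2\tau)-\log\vartheta_3(-2U_0+w-\tfrac12;2\tau).
\]
This is a ratio of the two theta quotients, not their product $\tilde F(U_0)$ as written in \eqref{xy-asymp-explicit}. The paper's proof imports this term exactly as you do, so this point concerns the statement rather than your argument.
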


\begin{proof}
	From the definition of \(P^\infty(k)\), direct addition of its two rows gives
	\[
	(1,1)P^\infty(k)
	=
	(T_1^\infty(k),T_2^\infty(k)).
	\]
	Moreover, if
	\[
	\mathcal E^{(1/2)}\left(\frac i2\right)=(e_1,e_2),
	\]
	then a direct multiplication by \(P^\infty(k)\) gives
	\begin{align}
		\frac{
			\bigl[\mathcal E^{(1/2)}(\frac i2)P^\infty(k)\bigr]_1}
		{T_1^\infty(k)}
		&=
		\frac12
		\left[
		e_1+e_2
		+
		(e_1-e_2)\frac{p(k)}{ik}
		+
		\frac{e_1-e_2}{ik}
		\frac{T^\infty_{1y}(k)}{T_1^\infty(k)}
		\right],
		\label{E-P-ratio-1}\\
		\frac{
			\bigl[\mathcal E^{(1/2)}(\frac i2)P^\infty(k)\bigr]_2}
		{T_2^\infty(k)}
		&=
		\frac12
		\left[
		e_1+e_2
		-
		(e_1-e_2)\frac{p(k)}{ik}
		+
		\frac{e_1-e_2}{ik}
		\frac{T^\infty_{2y}(k)}{T_2^\infty(k)}
		\right].
		\label{E-P-ratio-2}
	\end{align}
	Multiplication by the diagonal factor
	\[
	\tilde f(k)^{-\sigma_3}
	e^{-i(t\hat g(k)-\theta(k))\sigma_3}
	\]
	does not affect the ratios in \eqref{E-P-ratio-1}--\eqref{E-P-ratio-2}, except for
	the leading ratio \(X_1(\frac i2)/X_2(\frac i2)\), where it contributes the factor
	\[
		\tilde f(\frac i2)^{-2}
		e^{-2i(t\hat g(\frac i2)-\theta(\frac i2))}.
	\]
	Hence the recovery relation
	\[
	\frac{X_1(\frac i2)}{X_2(\frac i2)}=e^{x-y}
	\]
	together with \eqref{E-P-ratio-1}--\eqref{E-P-ratio-2} gives
	\eqref{xy-asymp-explicit}.
	
	Next, since the two diagonal entries of
	\(\tilde f^{-\sigma_3}e^{-i(t\hat g-\theta)\sigma_3}\) multiply to one, the product
	\(X_1(\frac i2)X_2(\frac i2)\) has leading term
	\[
	T_1^\infty\left(\frac i2\right)
	T_2^\infty\left(\frac i2\right).
	\]
	Using the sum of \eqref{E-P-ratio-1} and \eqref{E-P-ratio-2}, we obtain
	\eqref{sqrt-m-asymp-explicit}.
	
	Finally, differentiating
	\[
	X_1(k)X_2(k)
	=
	\sqrt{\frac{m}{\omega}}
	\left[
	1+\frac{2i}{\omega}u(x,t)
	\left(k-\frac i2\right)
	+\mathcal O\left(\left(k-\frac i2\right)^2\right)
	\right]
	\]
	at \(k=\frac i2\) gives
	\[
	u(x,t)
	=
	\frac{\omega}{2i}
	\left.
	\partial_k\log\bigl(X_1(k)X_2(k)\bigr)
	\right|_{k=\frac i2}.
	\]
	The diagonal factor again cancels in the product \(X_1X_2\). Therefore the leading
	term is
	\[
	\frac{\omega}{2i}
	\left[
	\partial_k\log
	\bigl(T_1^\infty(k)T_2^\infty(k)\bigr)
	\right]_{k=\frac i2}.
	\]
	For the \(t^{-1/2}\) correction, we differentiate the sum of the two ratios in
	\eqref{E-P-ratio-1}--\eqref{E-P-ratio-2}. In addition, the term
	\(\mathcal E_1^{(1/2)}(y,t)(k-\frac i2)\) in \eqref{X-i2-expansion-explicit}
	contributes the same expression with \((e_1,e_2)\) replaced by
	\((\widetilde e_1,\widetilde e_2)\). This yields \eqref{u-asymp-explicit}.
	The proof is complete.
\end{proof}

\section*{Data availability}
No data was used for the research described in the article.

\section*{Acknowledgments}
This work is supported by the National Natural Science Foundation of China (Grant Nos. 12471234, 12401320, 12471240) and Science Foundation of Henan Academy of Sciences (Grant No. 20252319002).

\end{document}